\theoremstyle{plain}
\newtheorem{theorem}{Theorem}[section]
\newtheorem{lemma}[theorem]{Lemma}
\newtheorem{corollary}[theorem]{Corollary}
\newtheorem{proposition}[theorem]{Proposition}
\theoremstyle{definition}
\theoremstyle{remark}
\newtheorem{remark}{Remark}
\numberwithin{equation}{section}
\DeclareMathOperator{\ess}{ess}
\DeclareMathOperator{\Tr}{Tr}
\def\geqslant{\ge}
\def\leqslant{\le}
\def\bq{\begin{eqnarray}}
\def\eq{\end{eqnarray}}
\def\bqq{\begin{eqnarray*}}
\def\eqq{\end{eqnarray*}}
\def\nn{\nonumber}
\def\minus {\backslash}
\def\eps{\varepsilon}
\def\wto{\rightharpoonup}
\newcommand{\norm}[1]{\left\lVert #1 \right\rVert}
\renewcommand{\epsilon}{\varepsilon}
\def\cF {\mathcal{F}}
\def\cH{\mathcal{H}}
\def\R {\mathbb{R}}
\def\C {\mathbb{C}}
\def\N {\mathcal{N}}
\def\U {\mathbb{U}}
\def\E {\mathcal{E}}
\def\cE {\mathcal{E}}
\def\cK{\mathcal{K}}
\def\F {\mathcal{F}}
\def\H{\gH}
\def\V {\mathcal{V}}
\def\R {\mathbb{R}}
\def\C {\mathbb{C}}
\def\N {\mathcal{N}}
\def\U {\mathbb{U}}
\def\S {\mathcal{S}}
\def\G {\mathcal{G}}
\def\E {\mathcal{E}}
\def\A{\mathcal{A}}
\def\K{\mathcal{K}}
\def\d{{\rm d}}
\newcommand{\gH}{\mathfrak{H}}
\newcommand\ii{{\ensuremath {\infty}}}
\newcommand\pscal[1]{{\ensuremath{\left\langle #1 \right\rangle}}}
\newcommand{\bH}{\mathbb{H}}
\newcommand{\dGamma}{{\ensuremath{\rm d}\Gamma}}
\title{Bogoliubov spectrum of interacting Bose gases}
\author[M. Lewin]{Mathieu Lewin}
\address{CNRS \& Universit\'e de Cergy-Pontoise, Mathematics Department (UMR 8088), F-95000 Cergy-Pontoise, France} 
\email{mathieu.lewin@math.cnrs.fr}
\author[P.~T. Nam]{Phan Th\`anh Nam}
\address{CNRS \& Universit\'e de Cergy-Pontoise, Mathematics Department (UMR 8088), F-95000 Cergy-Pontoise, France} 
\email{phan-thanh.nam@u-cergy.fr}
\author[S. Serfaty]{Sylvia Serfaty}
\address{Universit\'e Pierre et Marie Curie, Laboratoire Jacques-Louis Lions (UMR 7598), F-75005 Paris, France; and Courant Institute, New York University, 251 Mercer Street, New York NY 10012, USA} 
\email{serfaty@ann.jussieu.fr}
\author[J.~P. Solovej]{Jan Philip Solovej}
\address{Department of Mathematical Sciences, University of Copenhagen, Universitetsparken 5, DK-2100 Copenhagen \O, Denmark} 
\email{solovej@math.ku.dk}
\begin{document}
\date{{March 10, 2014.}\ Final version to appear in \textit{Comm. Pure Appl. Math.}}
\maketitle

\begin{abstract} We study the large-$N$ limit of a system of $N$ bosons interacting with a potential of intensity $1/N$. When the ground state energy is to the first order given by Hartree's theory, we study the next order, predicted by Bogoliubov's theory. We show the convergence of the lower eigenvalues and eigenfunctions towards that of the Bogoliubov Hamiltonian (up to a convenient unitary transform). We also prove the convergence of the free energy when the system is sufficiently trapped. Our results are valid in an abstract setting, our main assumptions being that the Hartree ground state is unique and non-degenerate, and that there is complete Bose-Einstein condensation on this state. Using our method we then treat two applications: atoms with ``bosonic'' electrons on one hand, and trapped 2D and 3D Coulomb gases on the other hand. 
\end{abstract}

\setcounter{tocdepth}{2}
\tableofcontents
%\addcontentsline{toc}{section}{Contents}

\section{Introduction}

In a famous paper \cite{Bogoliubov-47b}, Bogoliubov was able to predict the excitation spectrum of a quantum gas satisfying the Bose statistics and he used this to understand its superfluid behavior. Since Bogoliubov's work, there has been several attempts to formulate Bogoliubov's theory in a mathematically rigorous way. This was especially successful for completely integrable 1D systems~\cite{Girardeau-60,LieLin-63,Lieb-63,CalMar-69,Calogero-71,Sutherland-71,Sutherland-71b}, for the ground state energy of one and two-component Bose gases~\cite{LieSol-01,LieSol-04,Solovej-06}, and for the Lee-Huang-Yang formula of dilute gases~\cite{ErdSchYau-08,GiuSei-09,YauYin-09}. Recently, Seiringer~\cite{Seiringer-11} and Grech-Seiringer~\cite{GreSei-12} have for the first time justified Bogoliubov's theory for the excitation spectrum of trapped Bose gases, with a general short range interaction, in the mean-field regime. See, e.g.,~\cite{Yngvason-10} for a recent review on the subject and~\cite{CorDerZin-09} for a discussion of translation-invariant systems. 

The purpose of this article is to give general conditions under which Bogoliubov's theory is valid, that is, predicts the lowest part of the spectrum of the many-body Hamiltonian of bosons, in the mean-field regime. Our results cover a very large class of interacting Bose gases and they generalize the recent works~\cite{Seiringer-11,GreSei-12}. In particular, our method applies to Coulomb systems.

We consider a system of $N$ quantum particles, described by the Hamiltonian 
\[
H_N= \sum\limits_{i = 1}^N T_{x_i} + \frac{1}{N-1} \sum\limits_{1 \leqslant i < j \leqslant N} {w(x_i-x_j)},
\]
acting on the symmetric (a.k.a.~bosonic) space 
$$\H^N=\bigotimes_{\text{sym}}^N L^2(\Omega)$$
of square-integrable functions $\Psi\in L^2(\Omega^N)$ which are symmetric with respect to exchanges of their variables, namely
$$\Psi(x_{\sigma(1)},...,x_{\sigma(N)})=\Psi(x_1,...,x_N).$$
for every $\sigma$ in the permutation group $\mathfrak{S}_N$. Here $\Omega$ is an open subset of $\R^d$ with $d\ge 1$, $T$ is a self-adjoint operator on $L^2(\Omega)$ with domain $D(T)$, and $w$ is an even real-valued function describing the interactions between particles. We have neglected spin for convenience, but it can be added without changing any of our result. The Hamiltonian $H_N$ describes a system of $N$ bosons living in $\Omega$. 

The operator $T$ can contain both the kinetic energy and an external potential which is applied to the system, including possibly a magnetic field. We typically think of $T=-\Delta$ on a bounded set $\Omega$ with appropriate boundary conditions (Dirichlet, Neumann or periodic), or of $T=-\Delta+V(x)$ on $\Omega=\R^d$, with $V$ an external potential which serves to 
bind the particles. In the latter case, the function $V$ could tend to zero at infinity but it then has to be sufficiently negative somewhere, or it could tend to infinity at infinity, in which case all the particles are confined. We could also replace the non-relativistic operator $-\Delta$ by its relativistic counterpart $\sqrt{1-\Delta}-1$. We shall keep the operator $T$ sufficiently general in this paper, such that all these situations are covered. The function $w$ could also be replaced by an abstract two-body operator but we do not consider this here for simplicity.

We are interested in the limit of a large number $N$ of particles. Here we are considering the \emph{mean-field regime}, in which the interaction has a fixed range (the function $w$ is fixed) but its intensity is assumed to tend to zero in the limit $N\to\ii$, hence the factor $1/(N-1)$ in front of the interaction term in the Hamiltonian $H_N$. This factor makes the two sums of order $N$ in $H_N$ and, in this case, an important insight is given by \emph{Hartree theory}. 

Let us recall that a Hartree state is an uncorrelated many-body wave function in which all of the particles live in the same state $u\in L^2(\Omega)$ such that $\int_\Omega|u|^2=1$, and which takes the form
$$\Psi(x_1,...,x_N)=u(x_1)\cdots u(x_N).$$
The energy of such a state is
$$\langle \Psi, H_N\Psi\rangle=N\left(\langle u, Tu\rangle+\frac12D(|u|^2,|u|^2)\right):=N\,\E_{\rm H}(u)$$
where
$$D(f,g):=\int_{\Omega}\int_{\Omega}\overline{f(x)}\,g(y)w(x-y)\,\d x\,\d y$$
is the classical interaction. Henceforth, all the Hilbert spaces we consider have inner products which are conjugate linear in the first variable and linear in the second. 

Provided that there is \emph{Bose-Einstein condensation}, the leading term of the ground state energy 
$$E(N):=\inf {\rm spec}\, H_N$$ 
is given by Hartree's theory:
$$\boxed{E(N)=Ne_{\rm H}+o(N),}$$ 
where $e_{\rm H}$ is the corresponding Hartree ground state energy:
\bq \label{eq:Hartree-energy}
e_{\rm H} :=\inf_{\substack{u \in {L^2}({\Omega})\\||u|| = 1}} \,\E_{\rm H}(u)= \inf_{\substack{u \in {L^2}({\Omega})\\||u|| = 1}} \left\{ {\left\langle {u,Tu} \right\rangle  + \frac{1}{2}D(|u{|^2},|u{|^2})} \right\}.
\eq  
In this paper, we shall assume that there exists a unique Hartree minimizer $u_0$ for $e_{\rm H}$. It is then a solution of the nonlinear Hartree equation
\bq \label{eq:Hartree-equation}
0=\big(T+|u_0|^2\ast w-\mu_{\rm H}\big)u_0:=h\,u_0,
\eq
where $\mu_{\rm H}\in \mathbb{R}$ is a Lagrange multiplier.

Bogoliubov's theory predicts the next order term (of order $O(1)$) in the expansion of the ground state energy $E(N)$. It also predicts the leading term and the second term for the lower eigenvalues of $H_N$. The Bogoliubov method consists in describing variations of the wavefunctions around the Hartree state $u_0\otimes\cdots\otimes u_0$ in a suitable manner. We will explain this in detail in Section~\ref{sec:Bogoliubov_Hamiltonian} below. The final result is an effective Hamiltonian $\mathbb{H}$, called the \emph{Bogoliubov Hamiltonian} which is such that the lower spectrum of $H_N$ in $\H^N$ is given, in the limit $N\to\infty$, by the spectrum of the effective operator $Ne_{\rm H} + \mathbb{H}$, up to an error of order $o(1)$. Formally, we therefore find that
\begin{equation}
\boxed{H_N\simeq N\,e_{\rm H}+\mathbb{H}+o(1).}
\label{eq:expansion_Bogoliubov_intro}
\end{equation}
This vague statement is made precise in our main result, Theorem~\ref{thm:lower-spectrum} below. 

The essential fact about the Bogoliubov Hamiltonian $\mathbb{H}$ is that it is a \emph{non particle conserving} self-adjoint operator  acting on the Fock space
$$\cF_+:=\C\oplus \bigoplus_{n\geq1}\bigotimes_{\text{sym}}^n\gH_+$$
where
$$\gH_+=\{u_0\}^\perp\subset L^2(\Omega)$$
is the one-body space of excited particles. We started with a particle-conserving model and we end up with a theory in Fock space, in which the number of particles is not fixed. The reason is that we are describing here the \emph{excitations} around the reference Hartree state. 

For the acquainted reader, we mention that $\mathbb{H}$ is indeed nothing but the second-quantization of (half) the Hessian of the Hartree energy at $u_0$, and its expression in a second quantized form is
\begin{multline}
\mathbb{H}:=\int_{\Omega}a^*(x)\big(h\,a\big)(x)\,\d x+\frac12\int_{\Omega}\int_{\Omega} w(x-y)\bigg(2u_0(x)\overline{u_0(y)}a^*(x)a(y)\\+u_0(x)u_0(y)a^*(x)a^*(y)+ \overline{u_0(x)}\overline{u_0(y)}a(x)a(y)\bigg)\d x\,\d y.
\label{eq:Bogoliubov_Hamiltonian} 
\end{multline}
Here $a^*(x)$ is the creation operator of an excited particle at $x$, acting in the Fock space $\cF_+$, and $h$ is defined in (\ref{eq:Hartree-equation}). We will explain the meaning of this formula later in Section~\ref{sec:Bogoliubov_Hamiltonian}.

A result similar to~\eqref{eq:expansion_Bogoliubov_intro} has recently been obtained for weakly interacting Bose gases by Seiringer \cite{Seiringer-11} and Grech-Seiringer \cite{GreSei-12}. They assumed that $w$ is bounded, decays fast enough and has non-negative Fourier transform. The operator $T$ was $T=-\Delta$ in a box with periodic boundary conditions in~\cite{Seiringer-11} and $T=-\Delta+V(x)$ on $\R^d$ with $V(x)\to+\ii$ at infinity in~\cite{GreSei-12}. Our method is different from that of~\cite{Seiringer-11, GreSei-12} and it applies to a larger class of models.

We give in this paper a list of abstract conditions that a Bose gas should satisfy in order to get the Bogoliubov result~\eqref{eq:expansion_Bogoliubov_intro}. These conditions are given and explained in Section~\ref{sec:conditions} below. Loosely speaking, we assume that there is \emph{complete Bose-Einstein condensation} on a unique Hartree minimizer $u_0$ which we assume to be non-degenerate. Our message is that, once the Bose-Einstein condensation is proved, one can get the next order in the expansion of the energy by Bogoliubov's theory. No further assumption is needed.

The paper is organized as follows. In the next section, we define our model by giving the appropriate assumptions on $T$ and $w$, and we properly define the Bogoliubov Hamiltonian $\mathbb{H}$. We then state our main results, Theorem~\ref{thm:lower-spectrum} and Theorem \ref{thm:positive-temperature}. In Sections~\ref{sec:bosonic-atoms} and~\ref{sec:2D-log-gas}, we apply our abstract result to two particular examples: bosonic atoms and trapped Coulomb gases. Sections~\ref{sec:bounds-truncated-Fock-space}--\ref{sec:proof-Main-Theorems} are devoted to the proof of the main abstract results. 

\bigskip

\noindent\textbf{Acknowledgement.} We thank Thomas S\o rensen and an anonymous referee for helpful comments and corrections. M.L. and P.T.N. acknowledge financial support from the European Research Council under the European Community's Seventh Framework Programme (FP7/2007-2013 Grant Agreement MNIQS 258023). S.S. was supported by a EURYI award. J.P.S. was supported by a grant from the Danish Council for Independent Research $|$ Natural Sciences.

\section{Main abstract results}

In this section we state our main result. 

\subsection{Assumptions}\label{sec:conditions}

We start by giving the main assumptions on $T$ and $w$, under which our results apply. Later in Section~\ref{sec:applications} we consider two specific examples, for which all the following assumptions are satisfied.

The first condition concerns the properties of $T$ and $w$ which are necessary to give a proper meaning to the many-body Hamiltonian $H_N$.

\medskip

\noindent (A1) (One- and two-body operators). {\it The operator $T:D(T)\to L^2(\Omega)$ is a densely defined, bounded from below, self-adjoint operator. The function $w:\R^d \to \mathbb{R}$ is Borel-measurable and $w(x)=w(-x)$. Moreover, there exist constants $C>0$, $1>\alpha_1>0$ and $\alpha_2>0$ such that}
\bq \label{eq:bound-w-1}
-\alpha_1 ( T_x + T_y +C) \le w(x-y) \le \alpha_2 ( T_x + T_y +C ) \qquad  {\rm on}~L^2(\Omega^2).
\eq

\medskip

Note that, although we keep the one-body operator $T$ abstract, we use a two-body operator $w$ which is a translation-invariant multiplication operator in $L^2(\Omega^2)$. This is only for convenience. All our results are also valid if $w$ is an abstract two-body operator on $\gH^2$ which satisfies an estimate of the same type as~\eqref{eq:bound-w-1}, and if $\gH=L^2(\Omega)$ is an abstract separable Hilbert space. However, in this case the expressions of the Hartree energy and of the corresponding nonlinear equations are different (they cannot be expressed using a convolution). We shall not consider this abstract setting to avoid any confusion.

Under Assumption (A1), $H_N$ is bounded from below,
\bq \label{eq:HN>=T}
H_N\geq (1-\alpha_1)\sum_{i=1}^N T_i-CN.
\eq
In the paper we always work with the Friedrichs extension~\cite{ReeSim2}, still denoted by $H_N$. Note that we do not assume the positivity or boundedness of $w$ or its Fourier transform, but only that it is relatively form-bounded with respect to $T$.

\medskip

Our second assumption is about Hartree theory.

\medskip

\noindent (A2) (Hartree theory). {\it  The variational problem (\ref{eq:Hartree-energy}) has a unique (up to a phase) minimizer $u_0$ in the quadratic form domain $Q(T)$ of $T$. Moreover, $u_0$ is non-degenerate in the sense that
\bq \label{eq:hK>=eta}
\left( {\begin{array}{*{20}{c}}
  {h + K_1}&K_2 \\ 
  K_2^*&{\overline{h} + \overline{K_1}} 
\end{array}} \right) \geqslant \eta_{\rm H}\qquad \text{on}\quad \gH_+\oplus \overline{\gH}_+
\eq
for some constant $\eta_{\rm H}>0$, where $\gH_+=\{u_0\}^\perp$. Here 
$$h:=T+ |u_0|^2*w-\mu_{\rm H},$$
with the Lagrange multiplier $\mu_{\rm H}:=e_{\rm H}+D(|u_0|^2,|u_0|^2)/2$ (which ensures that $hu_0=0$), and $K_1: \gH_+\to \gH_+$ and $K_2=\overline{\gH}_+ \to \gH_+$ are operators defined by
\bqq
\langle u, K_1 v\rangle &=& \int_{\Omega}\int_{\Omega} \overline{u(x)} v(y) u_0(x) \overline{u_0(y)} w(x-y)  \d x \d y,\hfill\\
\langle u, K_2 \overline{v}\rangle &=& \int_{\Omega}\int_{\Omega} \overline{u(x)} \overline{v(y)}u_0(x) u_0(y) w(x-y)  \d x \d y
\eqq
for all $u,v\in \gH_+$. The operators $K_1$ and $K_2$ are assumed to be Hilbert-Schmidt, that is}
\bq
\int_{\Omega}\int_{\Omega}|u_0(x)|^2|u_0(y)|^2w(x-y)^2\,\d x\,\d y<\infty.
\label{eq:cond_K}
\eq

\begin{remark} \it Note that once we have assumed that $u_0$ is a minimizer for (\ref{eq:Hartree-energy}), then we always have the mean-field equation $hu_0=0$ due to a standard argument (see e.g. \cite[Theorem 11.5]{LieLos-01}). In the whole paper we will for simplicity use the same notation for the operator $h$ on the full one-body space $\gH$ and for its restriction to the smaller space $\gH_+$. For an operator $A$ on $\gH_+$, the notation $\overline{A}$ means $JA J$ with $J$ being the complex conjugate, namely $\overline{A}(v)=\overline{A (\overline{v})}$ for every $v\in \gH_+$.
\end{remark}

\begin{remark} \it While we shall treat $K_1$ as a one-body operator, we should really think of $K_2$ as its integral kernel $K_2(x,y)= \big(Q\otimes Q\big) (u_0\otimes u_0 w(.-.))(x,y)$, which is the two-body function obtained by projecting the symmetric function $u_0(x)u_0(y)w(x-y)$ onto $\gH_+^2$.
\end{remark}
%\medskip

In (A2) we are making assumptions about the uniqueness and non-degeneracy of the Hartree ground state $u_0$. The Hessian of the Hartree energy can easily be seen to be
\begin{align}
&\frac12{\rm Hess}\;\cE_{\rm H}(u_0)(v,v)\nonumber\\
&~=\pscal{v,hv}+\frac12\int_{\Omega}\int_{\Omega} w(x-y)\bigg(\overline{v(x)}u_0(x)\overline{u_0(y)}v(y)+v(x)\overline{u_0(x)}u_0(y)\overline{v(y)}\nonumber\\
&~\qquad+\overline{v(x)}u_0(x)u_0(y)\overline{v(y)}+ v(x)\overline{u_0(x)}\overline{u_0(y)}v(y)\bigg)\d x\,\d y\nonumber\\
&~=\frac12\pscal{\begin{pmatrix}v\\ \overline{v}\end{pmatrix},\begin{pmatrix}
h+K_1&K_2\\ K_2^*&\overline{h}+\overline{K_1}\end{pmatrix}\begin{pmatrix}v\\ \overline{v}\end{pmatrix}}_{\gH_+\oplus \overline{\gH}_+}
\label{eq:Hessian}
\end{align}
for all $v\in\gH_+$. It turns out that the non-degeneracy of the Hessian,
$${\rm Hess}\;\cE_{\rm H}(u_0)(v,v)\geq 2\eta_{\rm H}\norm{v}_{L^2(\Omega)}^2,$$
is equivalent to our assumption~\eqref{eq:hK>=eta}. When $u_0$ is real, as it is in many applications, then $\overline\gH_+=\gH_+$, $K_1=K_2$ and it can be verified (using a test function of the form $(v,-v)$) that~\eqref{eq:hK>=eta} implies $h\geq \eta_{\rm H}$ on $\gH_+$, which means that there is a gap above the first eigenvalue $0$. In general, we however only know that $h+K_1\geq\eta_{\rm H}$. 

The Hilbert-Schmidt assumption~\eqref{eq:cond_K} on $K_1$ and $K_2$ will be useful later to ensure that the Bogoliubov Hamiltonian $\bH$ is well defined (see Section~\ref{sec:Bogoliubov_Hamiltonian} below).

\medskip

Our last assumption is about the validity of Hartree theory in the limit $N\to\infty$. We assume that the system condensates in the unique Hartree ground state $u_0$. This assumption will be necessary for the proof of the lower bound on the spectrum of $H_N$.

\medskip

\noindent(A3) (Complete Bose-Einstein condensation). {\it For any constant $R>0$, there exists a function $\eps_R: \mathbb{N}\to [0,\infty)$ with $\lim_{N\to \infty} \eps_R(N)=0$ such that, for any wave function $\Psi_N\in \H^N$ satisfying $\left\langle {\Psi_N ,{H_N}\Psi_N } \right\rangle \le E(N)+R$, one has 
\begin{equation}
\frac{\left\langle { u_0,  \gamma_{\Psi_N} u_0} \right\rangle}{N} \ge 1-\eps_R(N)
\label{eq:condensation}
\end{equation}
where $u_0$ is the Hartree minimizer in Assumption {\rm (A2)}.}
\medskip

Here $\gamma_{\Psi}$ is the one-body density matrix of the wave function $\Psi\in\gH^N$, which is the trace-class operator on $L^2(\Omega)$ with kernel
\[\gamma_{\Psi} (x,y) := N\int\limits_{{\Omega^{N-1}}} {\Psi (x,{x_2},...,{x_N})\overline {\Psi (y,{x_2},...,{x_N})} \d{x_2}...\d{x_N}}. \]
Note that a Hartree state has the density matrix
$\gamma_{u^{\otimes N}}=Nu(x)\overline{u(y)}$.
Therefore~\eqref{eq:condensation} is the same as saying that 
$\gamma_{\Psi_N}$ is in some sense close to $\gamma_{u_0^{\otimes N}}$. For more explanation about  Bose-Einstein condensation, we refer to the discussion in~\cite{LieSeiSolYng-05}.

In many practical situations, the complete Bose-Einstein condensation (A3) follows from the uniqueness of the Hartree ground state in (A2). This is discussed in the recent work~\cite{LewNamRou-13}, based on a compactness argument which does not provide any explicit error estimate.

For Coulomb systems (see Section~\ref{sec:applications}), a stronger condensation property with an explicit error estimate will hold true. Namely, we will have a bound from below valid for all $\Psi_N\in \H^N$, and not only for those which have a low energy. We therefore introduce the following stronger assumption, which obviously implies (A3):

\medskip

\noindent (A3s) (Strong condensation). {\it We have $h\geq\eta_{\rm H}>0$ on $\gH_+$, and there exists a constant $0<\varepsilon_0<1$ such that
$$ H_N -N e_{\rm H} \ge (1-\varepsilon_0) \sum_{j=1}^Nh_{j} + o(N).$$ 
Here $h$ is the mean-field operator given in Assumption {\rm (A2)}.}
\medskip

In fact, in practice  (A3s) follows from a Lieb-Oxford inequality 
\[  \left\langle {\Psi ,\left( {\sum\limits_{1 \leqslant i < j \leqslant N} {{w_{ij}}} } \right)\Psi } \right\rangle  \geqslant \frac{1}{2}D({\rho _\Psi },{\rho _\Psi }) + {\rm error}\]
where $\rho_\Psi(x)=\gamma_\Psi(x,x)$. It is for proving estimates of this form that it is often useful to know that $\widehat{w}\geq0$ where $\widehat{\cdot}$ denotes the Fourier transform. We will come back to this in Section~\ref{sec:applications} where we consider two examples. 

\subsection{The Bogoliubov Hamiltonian}\label{sec:Bogoliubov_Hamiltonian}

Near the Hartree minimizer we have
\begin{equation}
\cE_{\rm H}\left(\frac{u_0+v}{(1+\norm{v}^2)^{1/2}}\right)=\cE_{\rm H}(u_0)+\frac12 {\rm Hess}\;\cE_{\rm H}(u_0)(v,v)+o\big( \langle v,(T+C)v\rangle \big)
\label{eq:expansion_Hartree} 
\end{equation}
for any $v$ which is orthogonal to $u_0$, that is, $v\in\gH_+$. The next order in the expansion of the eigenvalues of the Hamiltonian $H_N$ will be given by the Bogoliubov Hamiltonian $\bH$, which is obtained by second quantizing the Hessian in~\eqref{eq:expansion_Hartree}. More precisely, this means replacing $\overline{v(x)}$ by an operator $a^*(x)$ which creates an excited particle at $x$, and $v(x)$ by an operator $a(x)$ which annihilates it. These operators (formally) act on the Fock space of excited particles
$$\cF_+:=\C\oplus \bigoplus_{n\geq1}\bigotimes_{\text{sym}}^n\gH_+=\C\oplus \bigoplus_{n\geq1}\gH^n_+.$$
So the expression of the Bogoliubov Hamiltonian is
\begin{multline}
\mathbb{H}:=\int_{\Omega}a^*(x)\big( (h+K_1)\,a\big)(x)\,\d x\\+\frac12\int_{\Omega}\int_{\Omega} \bigg( {K_2(x,y)}a^*(x)a^*(y)+ \overline{K_2(x,y)}a(x)a(y)\bigg)\d x\,\d y.
\label{eq:Bogoliubov}
\end{multline}

In order to make the formula (\ref{eq:Bogoliubov}) more transparent, let us explain how the  Hamiltonian $\bH$ acts on functions of $\cF_+$. If we have a $\psi_k\in\gH_+^k$, with $k\geq2$, then we get
\begin{equation}
\bH\psi_k=\cdots0\oplus\underbrace{\psi'_{k-2}}_{\in\gH_+^{k-2}}\oplus\,0\oplus\underbrace{\left(\sum_{j=1}^k(h+K_1)_{j}\right)\psi_k}_{\in\gH_+^{k}}\oplus\,0\oplus\underbrace{\psi'_{k+2}}_{\in\gH_+^{k+2}}\oplus\,0\cdots
\label{eq:Bogoliubov-rigorous} 
\end{equation}
where 
$$\psi'_{k+2}(x_1,...,x_{k+2})=\!\frac{1}{\sqrt{k!(k+2)!}}\!\sum_{\sigma\in\mathfrak{S}_{k+2}}\!\!K_2(x_{\sigma(1)},x_{\sigma(2)})\,\psi_k(x_{\sigma(3)},...,x_{\sigma(k+2)}),$$
$$\psi'_{k-2}(x_1,...,x_{k-2})=\sqrt{k(k-1)}\int_\Omega \d x_{k-1}\int_\Omega \d x_{k}\, \overline{K_2(x_{k-1},x_{k})}\,\psi_k(x_{1},...,x_{k}).$$

The link between the formal expression~\eqref{eq:Bogoliubov} and the rigorous formula~\eqref{eq:Bogoliubov-rigorous} is explained in~\cite[Sec. 1]{Lewin-11}. See also (\ref{eq:quadratic-Hamiltonian}) in Appendix \ref{apd:Bogoliubov-Hamiltonian} for another equivalent expression of $\bH$ using one-body density matrices.

Let us remark that for $\psi'_{k+2}$ to be in $L^2(\Omega^{k+2})$ for all $\psi_k\in L^2(\Omega^k)$, it is necessary and sufficient to have $K_2(.,.)$ in $L^2(\Omega^2)$, which is the same as assuming that $K_2$ is a Hilbert-Schmidt operator, as required in Assumption (A2).

Since $K_1$ and $K_2$ are Hilbert-Schmidt, the Hamiltonian $\bH$ is well defined on states living in truncated Fock spaces and in the domain of $h$:
\begin{equation}
\bigcup_{M\geq0}\bigoplus_{n=0}^M\bigotimes_{\text{sym}}^n D(h).
\label{eq:core-quadratic-Bogoliubov} 
\end{equation}
The following theorem tells us that $\bH$ is bounded from below and it is a well-defined self-adjoint operator by the Friedrichs method with the form domain being the same as that of $\dGamma(1+h)$ on $\cF_+$. Here we have used the usual notation $\dGamma(A)$ for the second quantization in Fock space of an operator $A$ acting on the one-body space $\gH_+$:
\bq \label{eq:def-dGamma}
\dGamma (A):=\bigoplus_{m=0}^\infty \sum_{j=1}^m A_j = \int_{\Omega}a^*(x)\big(A\,a\big)(x)\,\d x.
\eq
\begin{remark} \it We shall always denote by $C>0$ some (large) constant which depends only on $T$ and $w$. Two $C$'s in the same line may refer to different constants.
\end{remark}
\begin{theorem}[Bogoliubov Hamiltonian]\label{thm:Bogoliubov-Hamiltonian} If {\rm (A1)-(A2)} hold true, then the operator $\mathbb{H}$ is symmetric and on the core domain~\eqref{eq:core-quadratic-Bogoliubov} one has
\bq \label{eq:comparison-bH-dGamma(h+1)}
C^{-1}\dGamma (h+1) -C \le \bH \le  \dGamma (h+C) +C .
\eq
Consequently, the form domain of its Friedrichs extension (still denoted by $\bH$) is the same as that of $\dGamma(1+h)$ on $\cF_+$:
\begin{equation}
\bigoplus_{n=0}^\ii\bigotimes_{\text{sym}}^n D((h+1)^{1/2}).
\label{eq:form-domain-quadratic-Bogoliubov} 
\end{equation}
Moreover, we have the following spectral properties.

\smallskip

\noindent $(i)$ \emph{(Ground state and ground state energy).} The Hamiltonian $\bH$ admits a lowest eigenvalue which is simple. It is 
strictly negative, except if $w=0$, in which case we have $\inf \sigma (\mathbb{H})=0$ (the ground state being the vacuum in $\cF_+$). 

\smallskip

\noindent $(ii)$ \emph{(Essential spectrum).} The essential spectra of $h$ and $\bH$ are simultaneously empty or non empty, and we have in the latter case
\begin{equation*}
\sigma_{\rm ess} (\mathbb{H})= \sigma(\mathbb{H}) + \sigma_{\rm ess}(h).
\label{eq:ess-spectrum} 
\end{equation*}
Consequently, $\inf \sigma_{\rm ess}(\bH)-\inf \sigma(\bH)= \inf \sigma_{\rm ess}(h)\ge \eta_{\rm H}>0$.
\smallskip

\noindent $(iii)$ \emph{(Lower spectrum).} Assume that $\overline{T}=T$ (in this case $u_0$ is a real-valued function and hence $K_1=K_2$). If $h+K_1$ has infinitely many eigenvalues below its essential spectrum, then $\mathbb{H}$ also has infinitely many eigenvalues below its essential spectrum. 

On the other hand, if $\overline{T}=T$, $K_1\ge 0$ and $h$ has only finitely many eigenvalues below its essential spectrum, then $\mathbb{H}$ also has finitely many eigenvalues below its essential spectrum. 
\end{theorem}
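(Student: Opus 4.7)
The plan is to first establish the quadratic-form bounds~\eqref{eq:comparison-bH-dGamma(h+1)} on the core~\eqref{eq:core-quadratic-Bogoliubov}, use the Friedrichs construction to obtain the self-adjoint extension with its form domain, and then derive all three spectral statements by reducing $\bH$ to a diagonal form via a Bogoliubov rotation.

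\textbf{Step 1 (Operator bounds and self-adjointness).} Write $\bH=\dGamma(h+K_1)+P$, where $P$ is the pairing term. Since $K_1$ is Hilbert--Schmidt (hence bounded) on $\gH_+$, one has $\dGamma(h)-\|K_1\|\,\N\le \dGamma(h+K_1)\le \dGamma(h)+\|K_1\|\,\N$. For the pairing, the explicit formulas~\eqref{eq:Bogoliubov-rigorous} combined with Cauchy--Schwarz in $L^2(\Omega^2)$ yield, for every $\Psi$ in the core,
\[
|\pscal{\Psi,P\Psi}|\ \le\ \|K_2\|_{L^2(\Omega^2)}\,\pscal{\Psi,(\N+1)\Psi}.
\]
Combined with the semiboundedness $h\geq -C$ coming from (A1) and a Young inequality trading $\N$ for $\dGamma(h+1)$, this yields~\eqref{eq:comparison-bH-dGamma(h+1)}. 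The Friedrichs construction then produces a self-adjoint extension whose form domain is exactly~\eqref{eq:form-domain-quadratic-Bogoliubov}.

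\textbf{Step 2 (Bogoliubov diagonalization).} By (A2), the block operator $\cA=\bigl(\begin{smallmatrix}h+K_1 & K_2\\ K_2^* & \overline{h+K_1}\end{smallmatrix}\bigr)$ is bounded below by $\eta_{\rm H}>0$ on $\gH_+\oplus \overline\gH_+$. The classical theory for positive quadratic Bogoliubov Hamiltonians (see Appendix~\ref{apd:Bogoliubov-Hamiltonian}) produces a bounded invertible transformation $V$ on $\gH_+\oplus \overline\gH_+$ preserving the symplectic form $S=\bigl(\begin{smallmatrix}1&0\\0&-1\end{smallmatrix}\bigr)$ and conjugating $\cA$ to a block-diagonal form $\bigl(\begin{smallmatrix}\xi&0\\0&\overline\xi\end{smallmatrix}\bigr)$ with $\xi\geq 0$ on $\gH_+$. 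The Hilbert--Schmidt assumption~\eqref{eq:cond_K} ensures that the off-diagonal block of $V$ is Hilbert--Schmidt, so by Shale--Stinespring $V$ is implementable as a unitary $\mathcal{U}$ on $\cF_+$ and
\[
\mathcal{U}\,\bH\,\mathcal{U}^*\ =\ \dGamma(\xi)+\mu_0\,\mathbb{I},\qquad \mu_0=\inf\sigma(\bH)\in\R.
\]
Moreover $\xi$ differs from $h+K_1$ by a compact operator built out of the Hilbert--Schmidt blocks, so $\sigma_{\mathrm{ess}}(\xi)=\sigma_{\mathrm{ess}}(h+K_1)=\sigma_{\mathrm{ess}}(h)$.

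\textbf{Step 3 (Spectral properties).} All three claims now reduce to standard facts about $\dGamma(\xi)$. For (i), the operator $\dGamma(\xi)$ has the vacuum as its unique zero-energy ground state, so $\mathcal{U}^*|0\rangle$ is the unique ground state of $\bH$ at energy $\mu_0\le 0$. Strict negativity when $w\not\equiv 0$ follows from a variational argument: the pairing $P|0\rangle$ is a nonzero two-particle vector proportional to the kernel $K_2(x,y)$, and a trial state of the form $|0\rangle+\lambda\,\Xi$ with $\Xi\in \gH_+^{2}$ chosen along $-P|0\rangle$ lowers the energy by $\sim -\lambda\,\|K_2\|_{L^2(\Omega^2)}^{2}<0$ for small $\lambda>0$. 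For (ii), a standard Weyl-sequence construction on $\cF_+$ gives $\sigma_{\mathrm{ess}}(\dGamma(\xi))=\sigma(\dGamma(\xi))+\sigma_{\mathrm{ess}}(\xi)$: on top of any eigenstate $\Phi$ of $\dGamma(\xi)$, apply $a^*(\varphi_n)$ for $\varphi_n$ a Weyl sequence of $\xi$; conversely, any Weyl sequence of $\dGamma(\xi)$ yields one of $\xi$ up to a finite number of excitations. For (iii), in the real case the min-max principle applied to $\xi$ versus $h+K_1$ (respectively, versus $h$ when $K_1\ge 0$) transfers the eigenvalue-counting statement from $h$ to $\xi$, and single-quasi-particle excitations $a^*(\varphi_n)\,\mathcal{U}^*|0\rangle$ then produce the corresponding eigenvalues of $\bH$ below $\mu_0+\inf\sigma_{\mathrm{ess}}(h)=\inf\sigma_{\mathrm{ess}}(\bH)$.

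The main obstacle is Step~2: making the Bogoliubov diagonalization fully rigorous when $h$ is unbounded and the only available smallness on the pairing is~\eqref{eq:cond_K}. One must verify that $V$ preserves the relevant form domains built from $h$, that the implementing unitary $\mathcal{U}$ constructed from $V$ via suitable number-operator cutoffs is well defined on~\eqref{eq:form-domain-quadratic-Bogoliubov}, and that the algebraic identity $\mathcal{U}\bH\mathcal{U}^*=\dGamma(\xi)+\mu_0$ extends from the core to the full form domain.
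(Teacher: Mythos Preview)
Your overall architecture---form bounds, then Bogoliubov diagonalization $\mathcal{U}\bH\mathcal{U}^*=\dGamma(\xi)+\mu_0$, then read off the spectrum---is exactly the paper's, and your treatments of (i) and (ii) match. There are, however, two genuine gaps.

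\textbf{Step 1, lower bound.} Your Cauchy--Schwarz estimate $|\langle\Psi,P\Psi\rangle|\le \|K_2\|_{L^2(\Omega^2)}\langle\Psi,(\N+1)\Psi\rangle$ is correct, but it is \emph{linear} in $\N$ and therefore cannot be absorbed by $\dGamma(h+K_1)$ unless $\|K_2\|_{L^2}<\eta_{\rm H}$, which is not assumed. Concretely, your argument yields
\[
\bH\ \ge\ \dGamma(h+K_1)-\|K_2\|_{L^2}(\N+1)\ \ge\ \big(\eta_{\rm H}-\|K_2\|_{L^2}\big)\N-\|K_2\|_{L^2},
\]
which is useless when $\|K_2\|_{L^2}\ge\eta_{\rm H}$; no ``Young inequality trading $\N$ for $\dGamma(h+1)$'' rescues this, because $h$ on $\gH_+$ is only known to satisfy $h\ge \eta_{\rm H}-\|K_1\|$. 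The paper instead uses the variational principle over quasi-free states: writing $(\gamma,\alpha)\in\G_0$ in its diagonal form one gets
\[
q(\gamma,\alpha)\ \ge\ \eta_{\rm H}\Tr\gamma-\sqrt{\Tr\gamma}\,\|K_2\|_{\rm HS},
\]
where the crucial point is that the pairing contributes only $\sqrt{\Tr\gamma}$ (coming from $\sqrt{\lambda_n(1+\lambda_n)}-\lambda_n\le\sqrt{\lambda_n}$). This gives $\bH\ge -C$ regardless of the size of $\|K_2\|_{\rm HS}$, and then~\eqref{eq:comparison-bH-dGamma(h+1)} follows by applying the same bound to shifted block matrices.

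\textbf{Step 3(iii).} The claim that ``the min-max principle applied to $\xi$ versus $h+K_1$ transfers the eigenvalue-counting statement'' is not justified: there is no obvious operator inequality between $\xi$ and $H:=h+K_1$, and $\xi$ is only given implicitly through $\sigma(\xi)=\sigma(\S\cA)\cap\R^+$. The paper's argument is different and not a routine min-max: in the real case one shows that $t\in\sigma(\xi)$ iff $0\in\sigma(\mathcal{X}_t)$ with
\[
\mathcal{X}_t\ :=\ (H+K)-\frac{t^2}{H-K}\qquad (H=h+K_1,\ K=K_2=K_1),
\]
and that eigenvalues of $\xi$ below $\mu:=\inf\sigma_{\rm ess}(\xi)$ correspond one-to-one to negative eigenvalues of $\mathcal{X}_\mu$. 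The two cases then follow from the elementary inequalities
\[
\mathcal{X}_\mu\ \le\ 2(H-\mu)\qquad\text{and (when }K\ge0)\qquad \mathcal{X}_\mu\ \ge\ (H-K)-\frac{\mu^2}{H-K},
\]
the first coming from $\mu^2/(H-K)+(H-K)\ge 2\mu$. Without this reduction, your Step~3(iii) does not go through.
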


We refer to Appendix \ref{apd:Bogoliubov-Hamiltonian} for a proof of Theorem~\ref{thm:Bogoliubov-Hamiltonian} and further discussions.

\begin{remark}\it Note that since $h+K_1\ge \eta_{\rm H}$ due to (\ref{eq:hK>=eta}) and $K_1$ is Hilbert-Schmidt, we always have $\inf \sigma_{\rm ess}(h)\ge \eta_{\rm H}>0$. 
\end{remark}

\begin{remark}\it
The reader should be cautious with the fact that, when $w\ne 0$, even though $K_2$ is a Hilbert-Schmidt operator on the one-body Hilbert space, the pairing term 
$$
\frac12\int_{\Omega}\int_{\Omega} \bigg(\overline{K_2(x,y)}a^*(x)a^*(y)+ K_2 (x,y)a(x)a(y)\bigg)\d x\,\d y
$$
is \emph{neither} bounded on $\F_+$, or relatively compact with respect to $\dGamma(h)$. Indeed, when the essential spectrum of $h$ is non empty, we have 
$$\inf \sigma_{\rm ess}(\bH)=\inf \sigma(\bH) + \inf \sigma_{\rm ess}(h)<\inf \sigma_{\rm ess}(h)=\inf \sigma_{\rm ess}(\dGamma (h))$$
due to Theorem \ref{thm:Bogoliubov-Hamiltonian}, and hence $\bH$ and $\dGamma(h)$ have different essential spectra.
\end{remark}

\subsection{Exciting the Hartree state}\label{sec:exciting-Hartree}

In this section we explain how to appropriately describe the variations of a many-body wave function $\Psi$ in the neighborhood of the Hartree state $u_0^{\otimes N}$. We will see that the Fock space $\cF_+$ arises naturally. We will also define a unitary operator $U_N$ which will be essential for our main statements.

For two symmetric functions $\Psi_k\in \gH^k$ and $\Psi_\ell\in\gH^\ell$, we recall that the symmetric tensor product is defined by
\begin{multline*}
\Psi_k\otimes_s\Psi_\ell(x_1,...,x_{k+\ell})\\=\frac{1}{\sqrt{k!\ell!(k+\ell)!}}\sum_{\sigma\in\mathfrak{S}_N}\Psi_k(x_{\sigma(1)},...,x_{\sigma(k)})\Psi_\ell(x_{\sigma(k+1)},...,x_{\sigma(k+\ell)}). 
\end{multline*}
Note that this tensor product satisfies the commutative property $\Psi_k\otimes_s\Psi_\ell=\Psi_\ell\otimes_s\Psi_k$ and the associative property $(\Psi_k\otimes_s\Psi_\ell)\otimes_s \Psi_m=\Psi_k\otimes_s(\Psi_\ell\otimes_s \Psi_m)$ for all $\Psi_k\in \gH^k$, $\Psi_\ell\in\gH^\ell$ and $\Psi_m\in \gH^m$. Consider now any (real-valued) orthonormal basis $u_0,u_1,...$ of $L^2(\Omega)$, containing the Hartree minimizer $u_0$. Then, it is known that $\{u_{i_1}\otimes_s\cdots \otimes_s u_{i_N}\}$ is an orthogonal basis of the symmetric space $\H^N$, where
$$u_{1}\otimes_s\cdots \otimes_s u_{N}(x_1,...,x_N):=\frac{1}{\sqrt{N!}}\sum_{\sigma\in\mathfrak{S}_N}u_{\sigma(1)}(x_1)\cdots u_{\sigma(N)}(x_N).$$
So we can write our many-body Hilbert space $\H^N$ as a direct sum of spaces 
$$\H^N=\mathfrak{K}^N_0\oplus\cdots \oplus\mathfrak{K}^N_N$$
where $\mathfrak{K}^N_0={\rm span}(u_0\otimes\cdots\otimes u_0)$ and 
$$\mathfrak{K}^N_k=\underbrace{u_0\otimes\cdots\otimes u_0}_{N-k}\otimes_s \bigotimes_\text{sym}^k\mathfrak{H}_+=u_0^{\otimes (N-k)}\otimes_s\gH_+^k,$$
where we recall again that $\mathfrak{H}_+=\{u_0\}^\perp={\rm span}\{u_1,u_2,...\}\subset L^2(\Omega)$.
In other words, we can write any wavefunction $\Psi\in \H^N$ as follows
$$\Psi:=\psi_0\, u_0^{\otimes N}+ u_0^{\otimes (N-1)}\otimes_s\psi_1 + u_0^{\otimes (N-2)}\otimes_s\psi_2+\cdots + \psi_N$$
where $\psi_k\in\gH_+^k$.
It is a simple exercise to verify that
$$\pscal{u_0^{\otimes (N-k)}\otimes_s\psi_k,u_0^{\otimes (N-\ell)}\otimes_s\psi_\ell}_{\gH^N}=\pscal{\psi_k,\psi_\ell}_{\gH^k}\,\delta_{k\ell}$$
from which we deduce that 
$$\|\Psi\|^2=|\psi_0|^2+\sum_{k=1}^N\norm{\psi_k}_{\gH^k_+}^2.$$
Therefore we see that the linear map
\bq \label{eq:def-unitary-UN}
\begin{array}{cccl}
U_N: & \gH^N & \to & \displaystyle\cF_+^{\leq N}=\bigoplus_{n=0}^N\gH_+^n\\[0.3cm]
 & \Psi & \mapsto & \psi_0\oplus \psi_1 \oplus\cdots \oplus \psi_N
  \end{array}
  \eq
is a unitary operator from $\gH^N$ onto the truncated Fock space $\cF_+^{\leq N}$. The latter can always be seen as being embedded in the full Fock space $\cF^+$ of excited particles and the unitary operator $U_N$ is also a partial isometry from $\gH^N$ to $\cF^+$. We see that, in the limit $N\to\ii$, the Fock space $\cF_+$ of excited particles arises naturally as the limit of the truncated Fock spaces $\cF_+^{\leq N}$.

The operator $U_N$ is a mathematical tool which implements what is called a $c$-number substitution~\cite{LieSeiYng-05,LieSeiSolYng-05}. In Fock space the usual way to formalize the $c$-number substitution is to use the Weyl operator, and here $U_N$ plays the same role. The difference is that the Weyl operator is defined on the whole Fock space $\cF$ with values in $\cF$, while with the operator $U_N$ we go immediately from the $N$-body space $\gH^N$ to the excitation Fock space $\F_+$, which is a proper subspace of $\cF$. We shall give important properties of the operator $U_N$ in Section \ref{sec:operators}.

One of our main results will be that 
$$U_N\big(H_N-N\,e_{\rm H}\big)U_N^*\to\bH$$
in an appropriate (weak) sense.

\subsection{Convergence of the excitation spectrum} \label{subsec:excitation-spectrum}

A convenient way to describe the lower eigenvalues is to use the min-max principle (see \cite{ReeSim4}). If $\A$ is a self-adjoint operator, which is bounded from below, on a (separable) Hilbert space, then we may define the min-max values
$$ \lambda_L (\A):= \inf_{\substack{Y~ {\rm subspace}\\{\rm dim}\,Y=L}} \,\,\,\max_{\substack{\Phi\in Y\\ ||\Phi||=1}} \langle \Phi, \A \Phi \rangle$$
for $L=1,2,...$. It is known that $\lim_{L\to \infty}\lambda_L(\A) = \inf \sigma_{\rm ess}(\A)$, where we use the convention that $\inf \sigma_{\rm ess}(\A)=+\infty$ when the essential spectrum of $\A$ is empty. Moreover, if $$\lambda_L(\A)< \inf \sigma_{\rm ess}(\A),$$
then $\{\lambda_j(\A)\}_{j=1}^L$ are the lowest $L$  eigenvalues of $\A$, counted with multiplicity.

Our main result is the following.

\begin{theorem}[Convergence of the excitation spectrum]\label{thm:lower-spectrum} Assume that {\rm (A1)-(A2)-(A3)} hold true. 

\smallskip

\noindent $(i)$ \emph{(Weak convergence to $\bH$)}. For every fixed $\Phi$ and $\Phi'$ in the quadratic form domain of the Bogoliubov Hamiltonian $\bH$, we have 
\begin{equation}
\boxed{\lim_{N\to\ii}\pscal{\Phi',U_N\big(H_N-N\,e_{\rm H}\big)U_N^*\,\Phi}_{\cF_+}=\pscal{\Phi',\bH\,\Phi}_{\cF_+}}
\label{eq:weak-CV}
\end{equation}
where $U_N$ is defined in (\ref{eq:def-unitary-UN}) and by convention $U_N^*$ is extended to 0 outside of $\cF_+^{\leq N}$.

\smallskip

\noindent $(ii)$ \emph{(Convergence of eigenvalues).} Let $\lambda_1(H_N)\le \lambda_2(H_N) \le ...$ and $\lambda_1(\mathbb{H})\le \lambda_2(\mathbb{H}) \le ...$ be the min-max values of $H_N$ in $\H^N$ and $\mathbb{H}$ in $\F_+$, respectively. 
We have 
$$\boxed{\lim_{N\to \infty}\Big( \lambda_L(H_N)- Ne_{\rm H} \Big) =\lambda_L (\mathbb{H})}$$
for every $L=1,2,...$. Consequently, we have the spectral gap
$$ \liminf_{N\to \infty}\!\Big( \inf \sigma_{\rm ess} (H_N) - \lambda_1(H_N) \Big) \ge \inf \sigma_{\rm ess}(\mathbb{H})- \lambda_1(\mathbb{H})= \inf \sigma_{\rm ess}(h) >0. $$

\smallskip

\noindent $(iii)$ \emph{(Convergence of the ground state)}. The lowest eigenvalue of $\bH$ is simple, with corresponding ground state 
$\Phi^{(1)}$ in $\cF_+$ (defined up to a phase factor). Hence the lowest 
eigenvalue of $H_N$ is also simple for $N$ large enough, with ground state 
$\Psi_N^{(1)}$. Furthermore (up to a correct choice of phase for $\Psi^{(1)}_N$), 
\begin{equation}
\boxed{\lim_{N\to \infty} U_N \Psi_N^{(1)} = \Phi^{(1)} }
\label{eq:CV_gd_state}
\end{equation}
strongly in $\cF_+$. The latter convergence is strong in the norm induced by the quadratic form of $\bH$ on $\cF_+$ if {\rm (A3s)} holds true.

\smallskip

\noindent $(iv)$ \emph{(Convergence of lower eigenvectors)}.  Assume that $\lambda_L(\mathbb{H})<\inf \sigma_{\rm ess}(\mathbb{H})$ for some $L\ge 1$. Then $\lambda_L(\mathbb{H})$ is the $L^{th}$ eigenvalue of $\mathbb{H}$ and for $N$ large enough, $\lambda_L(H_N)$ is the $L^{th}$ eigenvalue of $H_N$. Furthermore,  if $\big(\Psi_N^{(L)}\big)_{N\geq2}$ is a sequence of associated eigenvectors, then, up to a subsequence, 
\begin{equation}
\boxed{\lim_{N\to \infty} U_N \Psi_N^{(L)} = \Phi^{(L)}}
\label{eq:CV_states}
\end{equation}
strongly in $\cF_+$, where $\Phi^{(L)}$ is an eigenvector of $\mathbb{H}$ associated with the eigenvalue $\lambda_L(\mathbb{H})$. The latter convergence is strong in the norm induced by the quadratic form of $\bH$ on $\cF_+$ if {\rm (A3s)} holds true.
\end{theorem}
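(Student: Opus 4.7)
The proof has four connected parts and the key object throughout is the explicit conjugate Hamiltonian $\mathcal{H}_N:=U_N(H_N-Ne_{\rm H})U_N^*$ acting on the truncated excitation space $\cF_+^{\leq N}$. The strategy is: compute $\mathcal{H}_N$ by hand, get a weak limit statement (i), then turn it into a two-sided min-max comparison in (ii) using Assumption (A3) for the lower bound and explicit trial states for the upper bound. Parts (iii)--(iv) follow from (ii) by a compactness/weak-limit argument.

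For (i) I would write $H_N$ in second quantized form in any orthonormal basis adapted to the splitting $\gH=\C u_0\oplus \gH_+$, and separate the resulting monomials in $a^\#(u_0)$ from monomials in the excited creation/annihilation operators $a^\#(f)$, $f\in\gH_+$. The operator $U_N$ is designed so that $U_N a^*(u_0)a^*(u_0) U_N^*=\sqrt{(N-\N)(N-\N-1)}$ on $\cF_+^{\leq N}$, while $U_N a^*(u_0) a(f)U_N^*=\sqrt{N-\N}\,a(f)$, etc., and $a^\#(f)$ for $f\in\gH_+$ commute with $U_N$. Collecting the 4 types of terms coming from $w$ (with 0,1,2,3,4 excited operators) and using the Hartree equation $hu_0=0$ to kill the term with a single excited operator, one recognizes $\bH$ plus remainders whose kernels contain factors $\sqrt{N-\N}/\sqrt{N}-1$ or $1/\sqrt{N}$. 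Thanks to Theorem~\ref{thm:Bogoliubov-Hamiltonian}, any $\Phi\in Q(\bH)$ has finite $\langle\Phi,\N\Phi\rangle$ and $\langle\Phi,\dGamma(h)\Phi\rangle$, so these remainders tend to $0$ in the weak sense of~\eqref{eq:weak-CV}.

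For the upper bound in (ii) I would pick $L$ almost-orthonormal vectors $\Phi_1,\ldots,\Phi_L$ in $Q(\bH)$ whose span contains a near-minimizer of $\lambda_L(\bH)$; truncating in the number operator keeps them in $\cF_+^{\leq N}$ for $N$ large, and $U_N^*\Phi_j$ is a legitimate trial $L$-dimensional family whose Rayleigh quotients converge to those of $\bH$ by (i). The hard direction is the matching lower bound $\liminf_N(\lambda_L(H_N)-Ne_{\rm H})\geq\lambda_L(\bH)$, which is where (A3) enters crucially. Taking a min-max family $\Psi_N^{(1)},\ldots,\Psi_N^{(L)}$ for $H_N$, the upper bound already proved gives them energy $\leq Ne_{\rm H}+\lambda_L(\bH)+o(1)$, hence (A3) implies $\langle U_N\Psi_N^{(j)},\N\, U_N\Psi_N^{(j)}\rangle =N-\langle u_0,\gamma_{\Psi_N^{(j)}}u_0\rangle=o(N)$; a bootstrap using $H_N\geq(1-\alpha_1)\sum T_i-CN$ additionally yields $\langle\dGamma(h)\rangle=O(1)$. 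These a priori bounds let one localize $\Phi_{N,j}:=U_N\Psi_N^{(j)}$ in $\cF_+$ with a smooth cutoff $\chi_M(\N/M)$, extract a weak limit $\Phi^{(j)}\in Q(\bH)$ along a subsequence, and then pass to the liminf in $\langle\Phi_{N,j},\mathcal{H}_N\Phi_{N,j}\rangle$ using (i) and the IMS-type/number-operator localization formula for the quadratic form of $\bH$ (the cost of localization is controlled by the Hilbert--Schmidt norms of $K_1,K_2$ and vanishes as $M\to\infty$). Orthonormality of the family passes to the limit because $\|U_N\|$ is an isometry and the tails in $\N\geq M$ vanish, so the min-max principle applied to $\bH$ with the limit family yields the desired lower bound.

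Parts (iii) and (iv) are a corollary of (ii) and the above compactness argument. Since $\lambda_1(\bH)$ is simple by Theorem~\ref{thm:Bogoliubov-Hamiltonian}(i) and eigenvalues converge, $\lambda_1(H_N)$ is simple for $N$ large; any weak limit $\Phi^{(L)}$ of $U_N\Psi_N^{(L)}$ satisfies $\langle\Phi^{(L)},\bH\Phi^{(L)}\rangle\leq\lambda_L(\bH)\|\Phi^{(L)}\|^2$, is nonzero (by the $\N$-localization and norm preservation), and therefore is an eigenvector; strong convergence in $\cF_+$ follows from norm conservation $\|U_N\Psi_N^{(L)}\|=1$, and strong convergence in the form norm of $\bH$ under (A3s) follows from the convergence of $\langle U_N\Psi_N^{(L)},\mathcal{H}_N U_N\Psi_N^{(L)}\rangle$ to $\langle\Phi^{(L)},\bH\Phi^{(L)}\rangle$ combined with the coercivity $\bH\geq C^{-1}\dGamma(h+1)-C$ from~\eqref{eq:comparison-bH-dGamma(h+1)}. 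The single step I expect to be genuinely delicate is the lower bound in (ii): producing $\N$- and $\dGamma(h)$-a priori bounds valid for \emph{all} low energy states (not only the exact ground state), and then justifying the localization in $\N$ without losing the quadratic pairing terms $K_2 a^*a^*$, which are \emph{not} relatively compact with respect to $\dGamma(h)$ as emphasized after Theorem~\ref{thm:Bogoliubov-Hamiltonian}.
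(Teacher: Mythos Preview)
Your outline captures the architecture correctly---compute $U_N(H_N-Ne_{\rm H})U_N^*$ explicitly, identify $\bH$ plus remainders, and combine an IMS-type localization in $\N_+$ with Assumption (A3) for the lower bound---and this is exactly what the paper does. But two points deserve correction.

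First, the execution of the lower bound in the paper is \emph{not} a weak-compactness argument. Rather than extracting weak limits of a min-max family and passing to the $\liminf$, the paper proves a quantitative operator bound on $\cF_+^{\leq M}$ (their Proposition~5.1):
\[
\big| \langle U_N(H_N-Ne_{\rm H})U_N^*\rangle_\Phi - \langle\bH\rangle_\Phi \big| \le C\sqrt{M/N}\,\langle\bH+C\rangle_\Phi,\qquad \Phi\in\cF_+^{\leq M},
\]
and then sandwiches $\widetilde H_N$ between $f_M\bH f_M$ and a tail term via the IMS formula for band operators. The min-max principle is applied directly to the localized family $f_M Y$, whose dimension is preserved because $\|g_M\Phi\|^2<(\dim Y)^{-1}$ once $M\gg N\varepsilon_R(N)$ (this is Proposition~6.1(ii)). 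No subsequences, no weak limits. Your route via weak limits can be made to work, but you then have to argue separately that the limiting family remains $L$-dimensional, which does \emph{not} follow from weak convergence alone; the paper's dimension-preservation lemma is precisely what replaces this.

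Second, your bootstrap claim ``$H_N\geq(1-\alpha_1)\sum T_i-CN$ yields $\langle\dGamma(h)\rangle=O(1)$'' is not correct. That stability bound, combined with $\langle H_N\rangle\le Ne_{\rm H}+O(1)$, only gives $\langle\dGamma(h)\rangle=O(N)$. Even under (A3s) the a priori bound is $o(N)$, not $O(1)$. The paper never needs an $O(1)$ bound as input: in the IMS error term it uses the crude estimate $\dGamma(h)_{|\cF_+^{\leq N}}\le C(\widetilde H_N+CN)$ and absorbs the resulting $CN/M^2$ by choosing $M$ with $\sqrt{N}\ll M\ll N$. The sharper bound $\dGamma(h)=O(1)$ on low-energy states is an \emph{output} of the argument (obtained after a second localization, see their equation~(7.4)), not an input. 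If you want to run a compactness argument, you should either use only the $o(N)$ bound on $\langle\N_+\rangle$ from (A3) together with $M$-localization (which already suffices), or first establish the quantitative Proposition~5.1 and deduce the $O(1)$ bound from it as the paper does in its remark on convergence rates.
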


The proof of Theorem~\ref{thm:lower-spectrum} is provided in Section~\ref{sec:proof-Theorem-lower-spectrum}. Let us now make some comments on this result. 

The weak limit~\eqref{eq:weak-CV} shows how the Bogoliubov Hamiltonian $\bH$ arises from the particle-conserving Hamiltonian $H_N$. This convergence essentially implies the upper bounds on the eigenvalues of $H_N$ and this is the easy part of our proof. In previous results (for instance in~\cite{Solovej-06}) the upper bounds were more involved because the argument was carried out in the original Fock space $\cF$ and the trial state had to be projected on $\gH^N$. On the contrary we work in the excited Fock space $\cF_+$ and only need to project the state on $\cF^{\leq M}_+$ with $M\leq N$ (in practice $1\ll M\ll N$) before applying $U_N^*$, which is much easier. Note that we actually do \emph{not} need the condensation assumption (A3) for showing $(i)$ and the upper bound on the eigenvalues.

The difficult part of the proof is the lower bound on the eigenvalues, which requires (A3) as well as a localization method in the Fock space $\cF_+$, in the spirit of a previous result of Lieb and Solovej~\cite[Thm. A.1]{LieSol-01}. The idea is to estimate the error made by replacing a vector $\Phi\in\cF_+$ by its truncation on $\cF_+^{\leq M}$, in a lower bound. This method is the object of Section~\ref{sec:localization} where we prove an IMS-type localization formula.

Even if our simplified statement does not reflect this, we are able to prove explicit error estimates. For instance, for the convergence of $\lambda_L(H_N)$ in $(ii)$, we will prove that 
\begin{equation}
 -C_L\big(\eps_{R_L}(N)+N^{-1/3}\big)\le \lambda_L(H_N)-Ne_{\rm H} - \lambda_L (\mathbb{H}) \le C_L N^{-1/3}
\label{eq:speed} 
\end{equation}
for $N$ large enough, where $C_L$ and $R_L$ only depend on $(\lambda_L(\mathbb{H})-\lambda_1(\mathbb{H}))$ and 
$(\lambda_L({H_N})-\lambda_1({H_N}))$, and where we recall that $\eps_{R_L}$ is given in Assumption {\rm (A3)}. Similar estimates can be obtained for the eigenfunctions. Under the strong condensation assumption (A3s), our proof shows that one can take $\varepsilon_R(N)=O(N^{-1})$, leading to an overall error of the order $O(N^{-1/3})$.

In~\cite{Seiringer-11,GreSei-12}, Seiringer and Grech have shown a result similar to Theorem~\ref{thm:lower-spectrum}. More precisely, in~\cite{Seiringer-11} Seiringer treated the case of $\Omega$ a cube in any dimension, $T=-\Delta$ with periodic boundary conditions, and $w$ a bounded and positive periodic function such that $\widehat{w}\geq0$. In~\cite{GreSei-12}, the same method was used to treat the case of $\Omega=\R^d$, $T=-\Delta+V_{\rm ext}$ with $V_{\rm ext}\to\ii$ at infinity, and $w$ a bounded  positive function such that $\widehat w\geq0$. In these two cases, the properties (A1), (A2) and (A3s) are all satisfied and Theorem~\ref{thm:lower-spectrum} applies. The proof of strong condensation (A3s) is simple and relies on $\widehat{w}$ being positive, see, e.g.,~\cite[Lemma 1]{Seiringer-11}. 

The speed of convergence~\eqref{eq:speed} that we can establish in our abstract setting is slightly worse than the $O(N^{-1/2})$ obtained in~\cite{Seiringer-11,GreSei-12}. The method of proof is rather different, however. In~\cite{Seiringer-11,GreSei-12}, the authors relate $H_N$ to an $N$-dependent Hamiltonian $H^{\rm Bog}$ which is quadratic in the effective annihilation operators $b_j=a_ja_0^*/\sqrt{N-1}$. These operators only satisfy the commutation relations in the limit $N\to\ii$. The effective Bogoliubov Hamiltonian $H^{\rm Bog}$ can be diagonalized by a unitary transform, up to an error. The unitary is constructed by inserting the effective $b_j$'s in the formula of the Bogoliubov rotation which diagonalizes the true Hamiltonian $\bH$. 

In the present paper, by applying the unitary $U_N$, we settle the problem in the excited Fock space $\cF_+$, in which the true Bogoliubov Hamiltonian $\bH$ acts. The main advantage of our approach is that $\cF_+$ and $\bH$ are now independent of $N$, which makes the comparison with $U_NH_NU_N^*$ clearer, in our opinion. The effective operators $b_j$'s were also used in previous works~\cite{LieSol-01,LieSol-04,LieSeiSolYng-05} on the one- and two-component Bose gases, for which our approach could be useful as well. 

Let us remark that the convergence~\eqref{eq:CV_gd_state} in the Fock space $\cF_+$ can be rewritten in the original $N$-body space $\gH^N$ as
\begin{equation*}
\norm{\Psi_N^{(1)}-U_N^*\Phi^{(1)}}_{\gH^N}=\Big\|\Psi_N^{(1)}-\phi_0^{(1)}\,u_0^{\otimes N}-\sum_{j=1}^N\phi_j^{(1)}\otimes_s u_0^{\otimes (N-j)}\Big\|_{\gH^N}\underset{N\to\ii}{\longrightarrow}0
\end{equation*}
where $\Phi^{(1)}=\phi_0^{(1)}\oplus\phi_1^{(1)}\oplus\cdots\in\cF_+$. In particular, when $w\neq0$, we see that the many-body ground state $\Psi_N^{(1)}$ is \emph{never} close to the Hartree state $u_0^{\otimes N}$ in the norm of $\gH^N$. This is because the vacuum in $\cF_+$ is never an eigenvector of the Bogoliubov Hamiltonian $\bH$, except when $w\equiv0$. A similar property holds for the lower excited states $\Psi_N^{(L)}$.

Although the wavefunction $\Psi_N^{(L)}$ is in general not close to the Hartree state $u_0^{\otimes N}$ in the norm of $\gH^N$, its density matrices are close to that of $u_0^{\otimes N}$. Indeed, if (A1)-(A2)-(A3s) hold, then the convergence of eigenvectors  in $(iv)$ implies the following convergence of the one-body density matrices
$$ \lim_{N\to \infty}\Tr \Big| Q \gamma_{\Psi_N^{(L)}} Q - \gamma_{\Phi^{(L)}} \Big|=0$$
and
$$\lim_{N\to \infty} \Tr \Big| P \gamma_{\Psi_N^{(L)}} P - (N-N_+^{(L)} ) |u_0 \rangle \langle u_0 | \Big|= 0$$
where $P=|u_0 \rangle \langle u_0 |=1-Q$ and $ N_+^{(L)}=\Tr \gamma_{\Phi^{(L)}}$. In particular, by simply controlling the cross term $P\gamma_{N}^{(L)}Q$ by the Cauchy-Schwarz inequality, we immediately obtain the complete Bose-Einstein condensation
$$
\Tr \Big| N^{-1}\gamma_{\Psi_N^{(L)}}- |u_0 \rangle \langle u_0| \Big| \le O(N^{-1/2}).
$$ 
More generally, we can show that if $\Psi$ is a wave function in $\gH^N$ satisfying $\langle \Psi_N, H_N \Psi_N\rangle \le E(N)+O(1)$, then we have 
\bq \label{eq:cv-density-matrices}
\Tr \Big| \Gamma^{(k)}_{\Psi_N}- |u_0^{\otimes k} \rangle \langle u_0^{\otimes k}| \Big| \le O(N^{-1/2})
\eq
where $k\in \mathbb{N}$ is fixed, and $\Gamma_{\Psi_N}^{(k)}$ is the operator on $\gH^{k}$ with kernel 
\[\begin{gathered}
   \Gamma_{\Psi_N}^{(k)}(x_1,...,x_k; y_1,...,y_k)\hfill \\
   \quad \quad = \int \Psi_{N} (x_1,...,x_N) \overline{\Psi_N (y_1,...,y_N)}\, \d x_{k+1}... \d x_{N}\, \d y_{k+1}... \d y_{N}.\hfill \\ 
\end{gathered} \]
Note that, by looking at the first order of the density matrices in~\eqref{eq:cv-density-matrices}, the excited states cannot be distinguished from the ground state.
A slightly weaker version of (\ref{eq:cv-density-matrices}), namely that $\Gamma^{(k)}_{N}(x_1,...,x_k; x_1,...,x_k)$ converges to $|u_0(x_1)|^2... |u_0(x_k)|^2$ weakly in $\gH^k$, was proved recently by Kiessling for the ground state of bosonic atoms \cite{Kiessling-12}. In fact, the convergence of reduced density matrices in (\ref{eq:cv-density-matrices}) is well understood in the time-dependent setting~\cite{Spohn-80, ErdYau-01, AmmNie-08, AmmNie-09, RodSch-09, CheLeeSch-11}, and in this case there has been recent interest in corrections to the Hartree equation~\cite{GriMacMar-10,GriMacMar-11,BenKirSch-11}, where our method might also apply.

\begin{remark}\it 
In the convergence in (iii), the sequence of ground states $\{\Psi_N^{(1)}\}_{N}$ can be replaced by any sequence of approximate ground states. More precisely, if for every $N$ we take a wave function $\Psi_N'\in \H^N$ such that
$$\lim_{N\to \infty} \left( \langle H_N \rangle _{\Psi_N'} -\lambda_1(H_N) \right) =0, $$
then we still have $U_N \Psi_N' \to \Phi^{(1)}$
in the same sense as in Theorem~\ref{thm:lower-spectrum}.
\end{remark}

\begin{remark}\label{rmk:coupling-constant} \it If we perturbate the factor $1/(N-1)$ in front of the interaction term in the Hamiltonian $H_N$ by a term of $O(N^{-2})$ order, then our results in Theorem \ref{thm:lower-spectrum} remain valid, with the modification that the eigenvalues are shifted by an extra term of order $O(1)$. More precisely, if $\kappa_N= (N-1)^{-1} + \kappa N^{-2} + o(N^{-2})$ with $\kappa\in \mathbb{R}$ fixed, then the min-max values of the Hamiltonian
$$ H_{N,\kappa}:=\sum_{i=1}^N T_i + \kappa_N \sum_{1\le i<j\le N} w(x_i-x_j)$$
satisfy
$$ \lambda_L(H_{N,\kappa})= N e_{\rm H}+\kappa(\mu_{\rm H} - e_{\rm H}) + \lambda_L(\bH)+o(1).$$
\end{remark}

\subsection{Positive temperature}

At a positive temperature $\beta^{-1}>0$, the free energy of the system is given by
$$ F_\beta(N):= \inf_{\substack{\Gamma \ge 0 \\ \Tr_{\gH^N} (\Gamma)=1}} \Big\{ \Tr[H_N \Gamma]-\beta^{-1}S(\Gamma) \Big\}=-\beta^{-1}\log \Tr_{\gH^N} e^{-\beta H_N}$$
where $S(\Gamma):= -\Tr [\Gamma \log \Gamma]$ stands for the von Neumann entropy. Our goal is to establish the convergence
$$ \lim_{N\to\ii}\left(F_\beta(N)- Ne_{\rm H}\right) = -\beta^{-1} \log \Tr_{\F_+} \left[ e^{-\beta \mathbb{H}} \right].$$
For this we need two new conditions. At positive temperature the stability of the system does not follow from the simple relative boundedness assumptions~(A1) on $w$. So we need the following

\medskip

\noindent (A4) (Stability). {\it There exists $\beta_0>0$ such that $F_{\beta_0}(N)\ge -CN$ for all $N$ and $\Tr_{\F_+} \left[ e^{-\beta_0 \bH } \right] <\infty$.}

\medskip

Our second new assumption is a modified version of the zero-temperature condensation~(A3), which we now only assume to hold for the Gibbs state at temperature $\beta^{-1}$ for simplicity.

\medskip

\noindent (A3') (Bose-Einstein condensation at positive temperature). {\it For any $\beta^{-1}<\beta_0^{-1}$, one has
\begin{equation}
\lim_{N\to \infty}\frac{\left\langle { u_0,  \gamma_{\beta,N} u_0} \right\rangle}{N} = 1
\label{eq:condensation-positive-temperature}
\end{equation}
where the one-body density matrix of the Gibbs state $\Gamma_{\beta,N}:=e^{-\beta H_N}$ $/\Tr\left[e^{-\beta H_N}\right]$ is designated by $\gamma_{\beta,N}$, namely, in terms of kernels, 
\[
\gamma_{\beta ,N} (x,y) := N \int\limits_{\Omega^{N-1}} \Gamma_{ \beta, N} (x,x_2,...,x_N; y,x_2,...,x_N) d{x_2}...d{x_N}.
\]
}

Let us remark that if the strong condensation assumption (A3s) holds true for some $\eps_0\in (0,1)$ and $\Tr_{\F_+} \left[ e^{-(1-\eps_0)\beta_0\, \bH } \right] <\infty$ for some $\beta_0$, then we can prove (A3') and (A4) for the corresponding $\beta_0$. We of course always assume that (A1) and (A2) hold true. Moreover, if $h\ge \eta_{\rm H}$ and $K_1=K_2\ge 0$, then 
$${\rm d}\Gamma(h+C)+C \ge \bH \ge {\rm d}\Gamma(h-\eps)-C_\eps$$
(see (\ref{eq:comparison-bH-dGamma(h+1)-general}) in Appendix \ref{apd:Bogoliubov-Hamiltonian}), and hence the condition $\Tr_{\F_+} \left[ e^{-(1-\eps_0)\beta_0 \bH } \right] <\infty$ is equivalent to $\Tr \left[ e^{-(1-\eps_0)\beta_0 h } \right] <\infty$. The latter holds true if we have  $\Tr \left[ e^{-(1-\eps_0)(1-\alpha_1)\beta_0 T } \right] <\infty$, because $h\ge (1-\alpha_1)T-C$, where $\alpha_1\in(0,1)$ is given in Assumption (A1). 

Our main result is the following

\begin{theorem} [Positive temperature case] \label{thm:positive-temperature} Assume that {\rm (A1)-(A2)-(A3')-(A4)} hold true. Then for every $\beta^{-1}<\beta_0^{-1}$, we have
$$\boxed{\lim_{n\to\ii}\Tr_{\F_+}\left| U_N e^{-\beta (H_N-Ne_{\rm H})} U_N^* - e^{-\beta \bH}\right|=0.}$$
This implies the convergence of the corresponding Gibbs states and of the free energy:
$$ \lim_{N\to \infty} \left( F_\beta(N) - Ne_{\rm H} \right) =- \beta^{-1} \log \Tr_{\F_+} \left[ e^{-\beta \mathbb{H}} \right].$$
\end{theorem}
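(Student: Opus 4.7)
The overall strategy is to prove convergence of the free energies, i.e.\ of the partition functions $\Tr_{\gH^N} e^{-\beta(H_N-Ne_{\rm H})}$, and then upgrade this together with a weak operator convergence to the stated trace-norm convergence. Both directions of the free energy bound rely on the Gibbs variational principle
$$-\beta^{-1}\log \Tr e^{-\beta A} \;=\; \inf_{\substack{\Gamma\ge 0\\ \Tr\Gamma=1}} \bigl\{\Tr[A\Gamma] - \beta^{-1}S(\Gamma)\bigr\},$$
applied both on $\gH^N$ with $A = H_N$ and on $\F_+$ with $A = \bH$, together with the unitary $U_N:\gH^N\to\F_+^{\leq N}$ which preserves the von Neumann entropy.

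For the upper bound on $F_\beta(N)-Ne_{\rm H}$, I would take as trial state $U_N^*$ applied to the truncation $P_M\,\mathcal{G}_\infty P_M$ of the Bogoliubov Gibbs state $\mathcal{G}_\infty := e^{-\beta\bH}/\Tr e^{-\beta\bH}$ to $\F_+^{\leq M}$ with $M\leq N$. The energy is compared to $\Tr[\bH\,\mathcal{G}_\infty]$ using, on the (fixed) low-excitation sector, the form convergence of Theorem~\ref{thm:lower-spectrum}$(i)$ together with the operator upper bound $\bH\leq \dGamma(h+C)+C$ from Theorem~\ref{thm:Bogoliubov-Hamiltonian}. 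Assumption~(A4), which guarantees $\Tr e^{-\beta\bH}<\infty$ and $\Tr[\bH e^{-\beta\bH}]<\infty$, then allows one to send $M=M_N\to\infty$ slowly with $N$ while controlling the truncation error.

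The lower bound is the real work. Let $\Gamma_{\beta,N}$ be the true Gibbs state and push it onto Fock space: $\mathcal{G}_N := U_N\Gamma_{\beta,N}U_N^*$, supported on $\F_+^{\leq N}$ with $S(\mathcal{G}_N)=S(\Gamma_{\beta,N})$. The free energy becomes $\Tr[(U_N H_N U_N^*-Ne_{\rm H})\mathcal{G}_N]-\beta^{-1}S(\mathcal{G}_N)$. I would apply the IMS-type localization in excitation number developed in Section~\ref{sec:localization} with cutoff $M$, splitting $\mathcal{G}_N$ via a smooth partition of unity $f^2+g^2=1$ with $f$ supported on $\{\N_+\le M\}$. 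On the low-excitation piece, Theorem~\ref{thm:lower-spectrum}$(i)$ converts $U_N H_N U_N^*-Ne_{\rm H}$ into $\bH$ up to an $o(1)$ error; on the high-excitation piece, the condensation assumption~(A3') forces $\Tr[g^2\mathcal{G}_N]\to 0$, while the stability bound from (A4) controls the energy contribution. For the entropy one uses pinching, $S(\mathcal{G}_N)\le S(f\mathcal{G}_N f)+S(g\mathcal{G}_N g)$ up to commutator corrections, the second term being negligible because $g\mathcal{G}_N g$ has mass $o(1)$ and its entropy is crudely controlled; then the Gibbs variational principle applied to the renormalized state $f\mathcal{G}_N f/\Tr[f^2\mathcal{G}_N]$ yields a lower bound by $-\beta^{-1}\log\Tr e^{-\beta\bH}$.

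With free energy convergence in hand, trace-norm convergence of the Gibbs operators follows from the classical fact that, for non-negative trace-class operators, weak convergence together with convergence of traces implies trace-norm convergence. The required weak operator convergence of $U_N e^{-\beta(H_N-Ne_{\rm H})}U_N^*$ towards $e^{-\beta\bH}$ comes from Theorem~\ref{thm:lower-spectrum}$(i)$ via functional calculus: form convergence of uniformly bounded-below self-adjoint operators (the uniform lower bound supplied by (A4)) implies strong resolvent convergence and hence strong convergence of the associated semigroups. The main obstacle will be the lower bound on the free energy, where the IMS localization error, the condensation rate furnished by (A3'), and the entropy splitting must be carefully balanced to produce an overall $o(1)$ error as $N\to\infty$.
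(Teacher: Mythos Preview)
Your free-energy argument is essentially the paper's: for the upper bound you truncate the Bogoliubov Gibbs state to $\F_+^{\le M}$ and use the quantitative form comparison (Proposition~\ref{le:bound-truncated-Fock-space}, of which Theorem~\ref{thm:lower-spectrum}(i) is the qualitative version); for the lower bound you IMS-localize the $N$-body Gibbs state and use (A3') plus (A4) to dispose of the high-excitation piece. Two small remarks here. First, the entropy sub-additivity $S(\Gamma)\le S(f_M\Gamma f_M)+S(g_M\Gamma g_M)$ holds exactly by the Brown--Kosaki inequality, with no commutator correction. Second, the $g$-localized entropy is \emph{not} small just because the mass is small; what works (and what the paper does) is to feed the normalized $g$-state back through the variational principle, so that its free-energy contribution is bounded below by $(\Tr g_M^2\Gamma)\cdot(F_\beta(N)-Ne_{\rm H})+\beta^{-1}(\Tr g_M^2\Gamma)\log(\Tr g_M^2\Gamma)$.

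The real gap is in your final step. You want weak operator convergence of $e^{-\beta\widetilde H_N}$ to $e^{-\beta\bH}$ via strong resolvent convergence, deduced from form convergence plus a uniform lower bound ``supplied by (A4)''. Neither ingredient is available as stated. Assumption (A4) is a free-energy bound $F_{\beta_0}(N)\ge -CN$, not an operator lower bound; a uniform lower bound on $\widetilde H_N$ only emerges \emph{a posteriori} from the free-energy convergence you are trying to use it for. More importantly, weak form convergence on a form core, even with a uniform lower bound, does not imply strong resolvent convergence: one needs either operator convergence on a core or a Mosco-type lower semicontinuity condition, and here the $\widetilde H_N$ live on varying subspaces $\F_+^{\le N}$, which makes either route nontrivial.

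The paper avoids this entirely. From the lower-bound proof one sees that the localized normalized state $\widetilde\Gamma_N:=f_M\mathcal G_N f_M/\Tr[f_M\mathcal G_N f_M]$ is asymptotically optimal for the \emph{Bogoliubov} variational problem, i.e.\ its relative entropy with respect to the Bogoliubov Gibbs state $\overline\Gamma$ tends to zero. A Klein-type inequality (Lemma~\ref{le:entropy-inequality} in the paper),
\[
\Tr\bigl[A(\log A-\log B)\bigr]\;\ge\;C_\eps\,\Tr[(A-B)^2]+\Tr(A-B)\qquad(0\le A\le 1,\ 0\le B\le 1-\eps),
\]
then gives $\|\widetilde\Gamma_N-\overline\Gamma\|_{\rm HS}\to 0$; condensation (A3') gives $\|\widetilde\Gamma_N-\mathcal G_N\|_{\rm HS}\to 0$; and the dell'Antonio/Robinson criterion (weak-$*$ plus convergence of traces) upgrades this to trace-norm convergence of the unnormalized operators. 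This is the step you are missing.
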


Theorem \ref{thm:positive-temperature} is proved using the same argument as that of the proof of Theorem \ref{thm:lower-spectrum}, together with a well-known localization inequality for the entropy, see Section~\ref{sec:proof:positive-temp}.

%%%%%%%%%%%%%%%%%%%%%%%%%%%%%%
%%%%%%%%%%%%%%%%%%%%%%%%%%%%%%
\section{Applications}\label{sec:applications}
%%%%%%%%%%%%%%%%%%%%%%%%%%%%%%
%%%%%%%%%%%%%%%%%%%%%%%%%%%%%%

It is easy to verify that the models considered by Seiringer in~\cite{Seiringer-11} and by Grech-Seiringer in~\cite{GreSei-12} satisfy our assumptions (A1), (A2) and (A3s). So our approach applies and we are able to recover their results. 

In the rest of this section, we consider two Coulomb systems for which we are able to apply our main results.

%%%%%%%%%%%%%%%%%%%%%%%%%%%%%%
\subsection{Bosonic atoms}\label{sec:bosonic-atoms}
%%%%%%%%%%%%%%%%%%%%%%%%%%%%%%

For a bosonic atom we mean a system including a classical nucleus at the origin in $\R^3$ of charge $Z>0$ and  $N$ ``bosonic quantum electrons" of charge $-1$. The system is described by the Hamiltonian 
\[
 \sum\limits_{i = 1}^N {\left( { - \Delta _i  - \frac{Z}
{{|x_i|}}} \right)}  + \sum\limits_{1 \leqslant i < j \leqslant N} {\frac{1}
{{|x_i  - x_j |}}} 
\]
acting on the symmetric space $\H^N=\bigotimes_{\text{sym}}^N L^2 
(\mathbb{R}^3 )$. For simplicity of writing, we only consider spinless electrons. We shall study the asymptotics of $H_{N,Z}$ when $N\to \infty$ and $(N-1)/Z$ is fixed.
 
By using the unitary $\mathcal{D}_{\ell}:\H^N \to \H^N$ defined by $ (\mathcal{D}_{\ell} \Psi) (x_1,...,x_N)=\ell^{3N/2}\Psi(\ell x_1, ..., \ell x_N)$ with $\ell=N-1$, we can rescale the Hamiltonian to  
\[
H_{t,N}:= \sum\limits_{i = 1}^N {\left( { - \Delta _i  - \frac{1}
{{t |x_i|}}} \right)}  + \frac{1}{N-1}\sum\limits_{1 \leqslant i < j \leqslant N} {\frac{1}
{{|x_i  - x_j |}}}
\]
where $t:=(N-1)/Z$. The Hamiltonian $H_{t,N}$ has the same form as in the previous section, with $\Omega=\R^3$, $T=-\Delta-1/(t|x|)$ and $w(x-y)=|x-y|^{-1}$. The eigenvalues of the original Hamiltonian are then proportional to $(N-1)^2$ times the eigenvalues of $H_{t,N}$.

Note that Assumption (A1) holds due to Kato's inequality
\bq \label{eq:Kato-inequality}
 \frac{1}{|x|} \le -\eps \Delta_{x} + \frac{1}{4\eps}\,\,\,\,\,\,~~{\rm on}~L^2(\R^3).
 \eq
and the fact that $H_{t,N}$ is self-adjoint on $\bigotimes_{\rm sym}^{N} H^2(\R^2)$. Note also that the first eigenvalue of $H_{t,N}$, when it exists, is always non-degenerate. In the following we shall denote by $\Phi_{t,N}^{(1)}$ the corresponding unique positive ground state.

It was already proved by Benguria and Lieb \cite{BenLie-83} that for every $t>0$, the leading term of the ground state energy of $H_{t,N}$ is given by Hartree's energy, that is  
\bq \label{eq:Leading-term-bosonic-atoms}
\boxed{\inf \sigma ({H_{t,N}})=N e_{\rm H}(t)+o(N)}
\eq
as $N\to \infty$, where
\bq \label{eq:Hartree-energy-bosonic-atoms}
e_{\rm H}(t) := \inf_{\substack{u \in {L^2}({\R^3})\\||u|| = 1}} \left\{ {\int_{\R^3}|\nabla u(x)|^2\,\d x-\frac1t\!\int_{\R^3}\frac{|u(x)|^2}{|x|}\,\d x  + \frac{1}{2}D(|u|^2,|u|^2)} \right\}.
\eq
Here, again
$$ D(f,g): = \iint\limits_{{\mathbb{R}^3} \times {\mathbb{R}^3}} \frac{\overline {f(x)} g(y)}{|x-y|}\,\d x\,\d y.$$
Note that $D(f,f)\ge 0$ because the Coulomb potential $|x|^{-1}$ has positive Fourier transform.

The fact that the next order in the expansion of the ground state energy of bosonic atoms is given by Bogoliubov's theory was first conjectured in~\cite{Nam-thesis}. In the following, by applying Theorem \ref{thm:lower-spectrum},  we shall establish not only this conjecture but also many other properties of the system.

By a convexity argument, it can be shown (see \cite{Lieb-81}) that the Hartree minimization problem (\ref{eq:Hartree-energy-bosonic-atoms}) has a minimizer $u_t$ if and only if $t\le t_c$, for some critical number $t_c \in (1,2)$ (it was numerically computed in \cite{Baumgartner-84} that $t_c\approx 1.21$). In the case of existence, the minimizer is unique, positive, radially-symmetric. Moreover it decays exponentially and it solves the mean-field equation
$$
\begin{cases}
h_t\,u_t=0,\\[0.2cm]
h_t:=- \Delta  - \dfrac{1}
{{t|x|}} + |u_t|^2 *\frac{1}
{{|x|}} -\mu_{\rm H}(t) ,
\end{cases}
$$
with the Lagrange multiplier $\mu_{\rm H}(t) \le 0$. Moreover, if $t<t_c$, then $\mu_{\rm H}(t)<0$ and there is a constant $\eta_{\rm H}(t)>0$ such that  
\bq \label{eq:gap-h-bosonic-atoms}
h_t \ge \eta_{\rm H}(t) >0~~{\rm on}~\gH_+ := \{u_t\}^{\bot}.
\eq

The critical binding number $t_c$ in Hartree's theory also plays an important role for the original quantum problem. In fact, it was shown in \cite{BenLie-83,Solovej-90,Bach-91} that for every $N$ there are two numbers $b(N)\le b'(N)$ satisfying that $H_{t,Z}$ always has a ground state if $t\le b(N)$ and $H_{t,Z}$ has no ground states if $t\ge b'(N)$, and that
$$\lim_{N\to \infty} b(N)= \lim_{N\to \infty} b'(N)=t_c.$$

In the following we shall always assume that $t$ is fixed strictly below $t_c$. In this case, Assumption (A2) holds true. In fact, due to Hardy's inequality
\bq \label{eq:Hardy-inequality}
\frac{1}{4|x|^2} \le -\Delta_x \,\,\,\,\,\,~~{\rm on}~L^2(\R^3),
\eq
the function 
$ K_t(x,y):=u_t(x)|x-y|^{-1}u_t(y)$
belongs to $L^2((\R^3)^2)$. Hence, $K_t(x,y)$ is the integral kernel of a Hilbert-Schmidt operator, still denoted by $K_t$.  Note that $K_t\ge 0$ because $|x-y|^{-1}$ is a positive kernel. Thus the spectral gap (\ref{eq:gap-h-bosonic-atoms}) implies the non-degeneracy of the Hessian, namely
\bqq 
\left( {\begin{array}{*{20}{c}}
  {h_t + K_t}&K_t \\ 
  K_t&{{{h_t}} + K_t} 
\end{array}} \right) \geqslant \eta_{\rm H}(t)\qquad \text{on}\quad \gH_+\oplus \gH_+ .
\eqq

The condensation in Assumption (A3) is implicitly contained in the proof of the asymptotic formula (\ref{eq:Leading-term-bosonic-atoms}) by Benguria and Lieb. In fact, the upper bound in (\ref{eq:Leading-term-bosonic-atoms}) can be seen easily by using the Hartree state $u_t^{\otimes N}$. The lower bound is more involved and it follows from the Lieb-Oxford inequality \cite{Lieb-79,LieOxf-80} which says that for every wave function $\Psi\in \H^N$,
\bq \label{eq:Lieb-Oxford-inequality}
\left \langle \Psi, \left(\sum_{1\le i<j\le N}\frac{1}{|x_i-x_j|} \right)\Psi\right \rangle \ge \frac{1}{2} D(\rho_\Psi, \rho_\Psi ) -1.68 \int_{\R^3} \rho_\Psi (x)^{4/3}\d x .
\eq
The following is a quantitative version of the strong condensation (A3s).

\begin{lemma}[Strong condensation of bosonic atoms] \label{eq:condensation-bosonic-atoms} If $t\le t_c$, then
\begin{equation}
 H_{t,N}-N e_{\rm H} \ge \left( 1-  N^{-2/3}\right)  \sum_{i=1}^N (h_t)_i - CN^{1/3}. 
\label{eq:strong-condensation-Bose-gas} 
\end{equation}
\end{lemma}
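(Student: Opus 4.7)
The plan is to extract the effective one-body Hartree Hamiltonian from $H_{t,N}$ by combining the Lieb--Oxford inequality (\ref{eq:Lieb-Oxford-inequality}) with a ``completing the square'' identity on the direct Coulomb term, and then to absorb the resulting remainder into a small fraction of the kinetic energy. Using $-\Delta_i-(t|x_i|)^{-1}=(h_t)_i-V_H(x_i)+\mu_{\rm H}(t)$ with $V_H:=|u_t|^2\ast|x|^{-1}\ge 0$, I would rewrite
\[
H_{t,N}=\sum_{i=1}^N(h_t)_i+N\mu_{\rm H}(t)-\sum_{i=1}^N V_H(x_i)+\frac{1}{N-1}\sum_{1\le i<j\le N}\frac{1}{|x_i-x_j|}.
\]
Setting $\rho_0:=N|u_t|^2$, the positivity of the Coulomb form gives
\[
\tfrac12 D(\rho_\Psi,\rho_\Psi)\ \ge\ D(\rho_\Psi,\rho_0)-\tfrac12 D(\rho_0,\rho_0)\ =\ N\Big\langle\Psi,\textstyle\sum_i V_H(x_i)\Psi\Big\rangle-\tfrac{N^2}{2}D(|u_t|^2,|u_t|^2).
\]
After applying (\ref{eq:Lieb-Oxford-inequality}) and using $\mu_{\rm H}(t)=e_{\rm H}(t)+\tfrac12 D(|u_t|^2,|u_t|^2)$, the constant terms telescope to $Ne_{\rm H}(t)-O(1)$ and the local potentials combine to leave a non-negative leftover $(N-1)^{-1}\sum_i V_H(x_i)$, which I would simply discard, producing
\[
H_{t,N}-Ne_{\rm H}(t)\ \ge\ \sum_{i=1}^N(h_t)_i-\frac{C}{N-1}\int_{\R^3}\rho_\Psi^{4/3}-C.
\]

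The next step is to tame the Lieb--Oxford error $\int\rho_\Psi^{4/3}$ by a small amount of kinetic energy. The Hoffmann--Ostenhof inequality $\int|\nabla\sqrt{\rho_\Psi}|^2\le T_\Psi:=\langle\Psi,\sum_i(-\Delta_i)\Psi\rangle$ combined with the 3D Gagliardo--Nirenberg bound $\|f\|_{L^{8/3}}^{8/3}\le C\|\nabla f\|_{L^2}\|f\|_{L^2}^{5/3}$ (at the Sobolev exponent $3/8$) applied to $f=\sqrt{\rho_\Psi}$, together with $\int\rho_\Psi=N$, yields $\int\rho_\Psi^{4/3}\le C T_\Psi^{1/2}N^{5/6}$. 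A weighted AM--GM optimized at the weight $\epsilon\sim N^{-2/3}$ then produces
\[
\frac{1}{N-1}\int\rho_\Psi^{4/3}\ \le\ \frac{C T_\Psi^{1/2}}{N^{1/6}}\ \le\ C N^{-2/3}\,T_\Psi+C N^{1/3},
\]
and the exponent $2/3$ in the final statement is exactly the unique choice that balances these two errors.

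Finally, Kato's inequality (\ref{eq:Kato-inequality}) applied to $-(t|x|)^{-1}$ with an arbitrarily small relative coefficient, together with $V_H\ge 0$, gives $-\Delta\le C(h_t+1)$, hence $T_\Psi\le C\sum_i\langle(h_t)_i\rangle_\Psi+CN$. Substituting this back absorbs the residual $N^{-2/3}T_\Psi$ into a prefactor $1-O(N^{-2/3})$ in front of $\sum_i(h_t)_i$ at the cost of an extra $O(N^{1/3})$ remainder, which is the content of the lemma. The operator positivity $\sum_i(h_t)_i\ge 0$---which follows from the gap (\ref{eq:gap-h-bosonic-atoms}) on $\gH_+$ together with $h_tu_t=0$---makes the resulting operator inequality meaningful and furnishes the (A3s)-type condensation needed to invoke Theorem~\ref{thm:lower-spectrum}. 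I expect the main delicate point to be purely bookkeeping: chasing the three competing scales $T_\Psi$, $\|\rho_\Psi\|_{L^1}=N$ and $(N-1)^{-1}$ through the Gagliardo--Nirenberg exponent $3/8$ so that the sharp orders $-2/3$ on the leading correction and $+1/3$ on the remainder emerge naturally; any improvement in the rate would require either a sharper Lieb--Oxford constant or additional control on $\rho_\Psi$ beyond its $L^1$ mass.
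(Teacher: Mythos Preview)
Your proposal is correct and follows essentially the same route as the paper's proof: Lieb--Oxford plus completing the square against $\rho_0=N|u_t|^2$ to extract $\sum_i(h_t)_i$, then Hoffmann--Ostenhof combined with a kinetic-energy interpolation to control $\int\rho_\Psi^{4/3}$, and finally Kato to trade $-\Delta$ back for $h_t$. The only cosmetic difference is that the paper phrases the interpolation as Sobolev $\|\sqrt{\rho_\Psi}\|_{L^6}^2\le C\,T_\Psi$ followed by H\"older between $L^1$ and $L^3$, whereas you invoke the equivalent Gagliardo--Nirenberg inequality directly; the resulting bound $\int\rho_\Psi^{4/3}\le C\,T_\Psi^{1/2}N^{5/6}$ and the optimal splitting $\eps\sim N^{1/3}$ are identical.
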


\begin{remark}\it In particular, from (\ref{eq:strong-condensation-Bose-gas}) it follows that if $\langle H_{t,N}\rangle_\Psi \le N e_{\rm H} + R$, then $\langle u_0,\gamma_\Psi u_0\rangle \ge N-CN^{1/3}$ which is (A3). When $t<t_c$, by Theorem~\ref{thm:lower-spectrum} we can improve the estimate to $\langle u_0,\gamma_\Psi u_0\rangle \ge N-C$. The latter was shown by Bach, Lewis, Lieb and Siedentop in \cite{BacLewLieSie-93} using a different method. 
\end{remark}

\begin{proof} We start by estimating the terms on the right side of the Lieb-Oxford inequality (\ref{eq:Lieb-Oxford-inequality}). First, from the positivity $D(f,f)\ge 0$, we have
\bq \label{eq:proof-condensation-atoms-eq1}
\frac{1}{2} D(\rho_\Psi, \rho_\Psi) &\ge & D(\rho_\Psi, N|u_t|^2)-\frac{1}{2}  D(N|u_t|^2,N|u_t|^2)\nn\hfill\\
&=& N D(\rho_\Psi, |u_t|^2)+ N^2(e_{\rm H}(t)-\mu_{\rm H}(t)).
\eq
On the other hand,  using the Hoffmann-Ostenhof inequality~\cite{Hof-77} and Sobolev's inequality \cite[Theorem 8.3]{LieLos-01}  we can estimate 
$$ \left \langle \Psi, \left( \sum_{i=1}^N -\Delta_i \right) \Psi \right \rangle = \Tr[-\Delta \gamma_\Psi] \ge \left \langle \sqrt{\rho_\Psi}, -\Delta \sqrt{\rho_\Psi} \right \rangle \ge C \left( \int_{\R^3} \rho_\Psi^3 \right)^{1/3}.$$
Therefore, by H\"older inequality, we find that
\bqq 
\int \rho_{\Psi}^{4/3} \le \left( \int \rho_\Psi ^3 \right) ^{1/6} \left( \int \rho_\Psi \right) ^{5/6} \le \eps \Tr[-\Delta \gamma_\Psi] + C\frac{N^{5/3}}\eps .
\eqq
Thus using the Lieb-Oxford inequality (\ref{eq:Lieb-Oxford-inequality}), the  estimate (\ref{eq:proof-condensation-atoms-eq1}) and $h_t \ge -\Delta/2 -C$ we get
$$
\langle H_{t,N}\rangle_\Psi  \ge   N e_{\rm H}(t) + \left( 1-\frac{2\eps}{N} \right) \Tr[h_t \gamma_\Psi] - 2C\eps  - C\eps^{-1} N^{2/3}
$$
for all $\eps>0$. Replacing $\eps$ by $ N^{1/3}/2$, we obtain~\eqref{eq:strong-condensation-Bose-gas}.
\end{proof}

All this shows that if $t<t_c$, then Assumptions (A1)-(A2)-(A3s) hold true, and we may apply Theorem \ref{thm:lower-spectrum} to show that the lower spectrum of $H_{t,Z}$ converges to the lower spectrum of the Bogoliubov Hamiltonian 
\begin{multline*}
\mathbb{H}_t:=\int_{\Omega}a^*(x)\big( (h_t+K_t)\,a\big)(x)\,\d x\\+\frac12\int_{\Omega}\int_{\Omega} K_t(x,y)\bigg( a^*(x)a^*(y)+ a(x)a(y)\bigg)\d x\,\d y, 
\end{multline*}
which acts on the Fock space
$\F_+ = \bigoplus_{n=0}^{\infty} \bigotimes_{\rm sym}^{n} \gH_+$.
Beside some basic properties of $\mathbb{H}_t$ already given in Theorem \ref{thm:Bogoliubov-Hamiltonian}, we have the following additional information.

\begin{proposition}[Bogoliubov Hamiltonian of bosonic atoms] For every $t\in (0,t_c)$ one has
$$ \sigma_{\rm ess} (\mathbb{H}_{t})= \left[ \inf \sigma ( \mathbb{H}_ t)  -\mu_{\rm H}(t),\infty \right).
$$
Moreover, if $t<1$, then $\mathbb{H}_t$ has infinitely many eigenvalues below its  essential spectrum. On the other hand, if $t\ge 1$, then  $\mathbb{H}_t$ only has finitely many eigenvalues below its essential spectrum.
\end{proposition}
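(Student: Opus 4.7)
The claim will follow by combining Theorem~\ref{thm:Bogoliubov-Hamiltonian} with a spectral analysis of the one-body operator
\[
h_t = -\Delta - \frac{1}{t|x|} + |u_t|^2*\frac{1}{|x|} - \mu_{\rm H}(t).
\]
The key ingredient is that, since $|u_t|^2$ is radial and $u_t$ decays exponentially, Newton's theorem for spherically symmetric distributions yields, for the effective potential
\[
V_t(x) := -\frac{1}{t|x|} + |u_t|^2*\frac{1}{|x|},
\]
the asymptotic $V_t(x) = (1-1/t)/|x| + O(e^{-c|x|})$ as $|x|\to\infty$. Since $V_t$ is bounded and tends to zero at infinity, Persson's theorem gives $\sigma_{\rm ess}(h_t) = [-\mu_{\rm H}(t),\infty)$, unchanged by the codimension-one restriction to $\gH_+$. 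Theorem~\ref{thm:Bogoliubov-Hamiltonian}(ii) then yields
\[
\sigma_{\rm ess}(\mathbb{H}_t) = \sigma(\mathbb{H}_t) + [-\mu_{\rm H}(t),\infty) = [\inf\sigma(\mathbb{H}_t) - \mu_{\rm H}(t),\infty),
\]
using that $\inf\sigma(\mathbb{H}_t)$ is an eigenvalue by part~(i) of the same theorem.

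For $t<1$, the tail $(1-1/t)/|x|$ of $V_t$ is attractive Coulomb; the plan is to exhibit infinitely many eigenvalues of $h_t + K_t$ strictly below $-\mu_{\rm H}(t)$ and invoke Theorem~\ref{thm:Bogoliubov-Hamiltonian}(iii), first case. Concretely, I would take trial states $v_n = \chi_{R_n}\psi_n$, where $\psi_n$ is the $n$-th radial eigenfunction of the hydrogenic operator $-\Delta + (1-1/t)/|x|$ (with eigenvalue $-c^2/n^2$, $c=(1/t-1)/2>0$) and $\chi_{R_n}$ is a smooth cutoff supported in $\{|x|>R_n\}$, with $R_n\to\infty$ chosen much slower than the natural scale $\sim n^2$ of $\psi_n$. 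On this support the Newton-tail correction in $V_t$ is exponentially small, the IMS commutator $\int|\nabla \chi_{R_n}|^2\psi_n^2$ is negligible since the mass of $\psi_n$ is concentrated far beyond $R_n$, and the quadratic form of $K_t$ is controlled using Hardy-Littlewood-Sobolev and the exponential decay of $u_t$:
\[
|\langle v_n, K_t v_n\rangle| \le C\,\|u_t \,\mathbf{1}_{|x|>R_n}\|_{L^3}^2\,\|v_n\|_{L^2}^2 = O(e^{-cR_n})\|v_n\|_{L^2}^2.
\]
The projection onto $\gH_+$ costs only $|\langle u_t,v_n\rangle|=O(e^{-cR_n})$. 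Choosing $R_n$ so that all these errors are smaller than the hydrogenic gap of order $n^{-2}$, the min-max principle then produces the desired infinite family.

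For $t\ge 1$, the tail of $V_t$ is nonnegative, so the negative part $(V_t)_-$ is essentially supported near the origin, where $(V_t)_-(x) \le 1/(t|x|)$ (since $|u_t|^2*|x|^{-1}$ is bounded). When $t>1$, $(V_t)_-$ vanishes for $|x|$ large; when $t=1$ Newton's theorem kills the Coulomb tail of $V_t$ completely, leaving a remainder of order $O(e^{-c|x|})$ at infinity. In either case $(V_t)_- \in L^{3/2}(\R^3)$, and the Cwikel-Lieb-Rozenblum inequality bounds the number of eigenvalues of $h_t$ strictly below $-\mu_{\rm H}(t)$ by $C\int(V_t)_-^{3/2}\,\d x<\infty$. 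Since $K_t\ge 0$ (the Coulomb kernel is positive definite), Theorem~\ref{thm:Bogoliubov-Hamiltonian}(iii), second case, then shows that $\mathbb{H}_t$ has only finitely many eigenvalues below its essential spectrum. The main obstacle throughout is the trial-state construction for $t<1$: three separate small errors (cutoff truncation, Newton tail, and the $K_t$ contribution), together with the $\gH_+$-projection, must all be balanced simultaneously against the hydrogenic gap $n^{-2}$, which forces $R_n$ to be chosen with some care --- large enough for the exponentially small errors to beat $n^{-2}$, yet small enough not to disturb the bulk of $\psi_n$.
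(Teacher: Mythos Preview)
Your approach is correct and coincides with the paper's: compute $\sigma_{\rm ess}(h_t)$ from the decay of $V_t$ and invoke Theorem~\ref{thm:Bogoliubov-Hamiltonian}(ii); for $t<1$ exploit the attractive Coulomb tail of $V_t$ to produce infinitely many trial states below $-\mu_{\rm H}(t)$ (the paper simply cites \cite[Lemma~II.1]{Lions-87} here, while you spell out the hydrogenic construction); for $t\ge 1$ use Newton's theorem to show $(V_t)_-\in L^{3/2}(\R^3)$, apply the CLR bound, and conclude via Theorem~\ref{thm:Bogoliubov-Hamiltonian}(iii) with $K_t\ge 0$.

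One small slip: $V_t$ is \emph{not} bounded---it carries the Coulomb singularity $-1/(t|x|)$ at the origin---so Persson's theorem alone is not the right tool. The paper's phrasing, that $V_t$ is \emph{relatively compact} with respect to $-\Delta$, is what gives $\sigma_{\rm ess}(h_t)=[-\mu_{\rm H}(t),\infty)$. This does not affect the rest of your argument. It is also worth noting that you are more careful than the paper on one point: Theorem~\ref{thm:Bogoliubov-Hamiltonian}(iii), first case, requires $h_t+K_t$ (not $h_t$ alone) to have infinitely many eigenvalues below its essential spectrum, and you explicitly control the $K_t$ contribution on your far-out trial states via the exponential decay of $u_t$; the paper leaves this passage implicit.
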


\begin{proof} First, since $h_t=-\Delta+V_t(x)-\mu_{\rm H}(t)$, where  $V_t(x):=(|u_0|^2*|.|^{-1})(x)-t^{-1}|x|^{-1}$
is relatively compact with respect to $-\Delta$, we obtain
$ \sigma_{\rm ess}(h_t)= [-\mu_{\rm H}(t),\infty)$. Thus $\sigma_{\rm ess} (\mathbb{H}_{t})= \left[ \inf \sigma ( \mathbb{H}_ t)  -\mu_{\rm H}(t),\infty \right)$ by (ii) in Theorem \ref{thm:Bogoliubov-Hamiltonian}. The other statements follow from (iii) in Theorem \ref{thm:Bogoliubov-Hamiltonian} and the fact that $h_t$ has infinitely many eigenvalues below its essential spectrum  if and only if $t<1$. 

In fact, if $t<1$, then $V_t(x)=m_t*|.|^{-1}$ where the measure
$$ m_t:=|u_0|^2 -t^{-1} \delta_0$$
has negative mass $\int m <0$, and we can follow \cite[Lemma II.1]{Lions-87}.

On the other hand, if $t \ge 1$, then by applying Newton's Theorem for the radially symmetric function $|u_0(x)|^2$, we can write 
$$ V_t(x) \ge \int_{\R^3} \frac{|u_t(y)|^2}{\max\{|x|,|y|\}}\d y - \frac{1}{|x|} = \int_{|y|\ge |x|} {|u_t(y)|^2} \left( \frac{1}{|y|}-\frac{1}{|x|} \right)\d y.$$
Because $u_t(x)$ decays exponentially, we obtain that $[V(x)]_-$ also decays exponentially and hence belongs to $L^{3/2}(\R^3)$. Therefore, by the CLR bound (see e.g. \cite[Theorem 4.1]{LieSei-09}), we conclude that $h_t$ has finitely many eigenvalues below $-\mu_{\rm H}(t)$. 
\end{proof}

By the celebrated HVZ Theorem~\cite{ReeSim4}, one has 
$$\sigma_{\ess}(H_{t,N})=[\Sigma_{t,N},\infty)$$
where $\Sigma_{t,N}$ is the ground state energy of the $(N-1)$-body Hamiltonian
$$ \sum\limits_{i = 1}^{N-1} {\left( { - \Delta _i  - \frac{1}
{{t |x_i|}}} \right)}  + \frac{1}{N-1}\sum\limits_{1 \leqslant i < j \leqslant N-1} {\frac{1}
{{|x_i  - x_j |}}}.$$
An asymptotic formula for $\Sigma_{t,N}$ can be obtained by the same method as that of Theorem \ref{thm:lower-spectrum} (see Remark \ref{rmk:coupling-constant} in the end of Sec. \ref{subsec:excitation-spectrum}), namely 
$$\Sigma_{t,N}= Ne_{H}(t)-\mu_{\rm H}(t)+\inf \sigma(\mathbb{H}_t)+o(1)=N e_{\rm H}(t)+\inf \sigma_{\rm ess}(\mathbb{H}_t)+o(1).$$
It thus turns out that the essential spectrum of $H_{t,N}-Ne_{\rm H}(t)$ converges to the essential spectrum of $\mathbb{H}_t$.

Thus, by combining the HVZ Theorem and Theorem \ref{thm:lower-spectrum}, we obtain the convergence of the whole spectrum of $H_{t,N}$ as follows.

\begin{theorem} [Spectrum of bosonic atoms]\label{thm:spectrum-bosonic-atoms} If $t\in (0,t_c)$ is fixed, then we have the following statements in the limit $N\to \infty$.

\smallskip

\noindent $(i)$ \emph{(Essential spectrum)}. The essential spectra of $H_{t,N}$ and $\mathbb{H}_t$ are $$\sigma_{\ess}(H_{t,N})=[\Sigma_{t,N},\infty),~~\,\sigma_{\ess}(\mathbb{H}_{t})=[\inf \sigma (\mathbb{H}_t)-\mu_{\rm H}(t),\infty)$$
and we have the convergence
$$ \lim_{N\to \infty} \left( \Sigma_{t,N} -N e_{\rm H}(t) \right) = \inf \sigma (\mathbb{H}_t)-\mu_{\rm H}(t).$$

\smallskip

\noindent $(ii)$ \emph{(Ground state energy and ground state)}. We have
$$ \lim_{N\to \infty}\left( \inf \sigma(H_{t,N}) -Ne_{\rm H} (t) \right) = \inf \sigma (\mathbb{H}_t)< 0.$$
For $N$ large enough, $H_{t,N}$ has a ground state $\Psi_{t,N}^{(1)}$ (which is then unique and can be assumed positive) and 
$$\lim_{N\to \infty} U_N \Psi_{t,N}^{(1)} = \Phi^{(1)}_t$$
strongly in the norm induced by the kinetic energy in $\cF_+$, where $\Phi_t^{(1)}\in\cF_+$ is the unique ground state of $\bH$ such that $\langle\Phi_t^{(1)},\Omega\rangle_{\cF_+}>0$, $\Omega$ being the vacuum state in $\cF_+$.

\smallskip

\noindent $(iii)$ \emph{(Lower eigenvalues and eigenstates)}. Assume that $\mathbb{H}_t$ has $L$ eigenvalues below its essential spectrum, for some $L\ge 1$. Then for $N$ large enough, $H_{t,N}$ also has $L$ eigenvalues below its essential spectrum, and the $L^{th}$ eigenvalue of $(H_{t,N}-Ne_{\rm H}(t))$ converges to the $L^{th}$ eigenvalue of $\mathbb{H}_t$ as $N\to \infty$. Moreover, if $\Psi_{t,N}^{(L)}$ is a corresponding eigenvector, then up to a subsequence, $U_N \Psi_{t,N}^{(L)}$ converges to an eigenvector of the $L$-th eigenvalue of $\bH_{t}$ as $N\to \infty$.
\end{theorem}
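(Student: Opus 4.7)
The strategy is to combine the abstract excitation-spectrum Theorem \ref{thm:lower-spectrum} with the HVZ theorem and the spectral information on $\mathbb{H}_t$ provided by Theorem \ref{thm:Bogoliubov-Hamiltonian}. By the discussion preceding the statement, for every $t\in(0,t_c)$ the assumptions (A1), (A2) and (A3s) all hold for $H_{t,N}$: (A1) comes from Kato's inequality \eqref{eq:Kato-inequality}, (A2) from the strict gap $h_t\ge\eta_{\rm H}(t)>0$ on $\gH_+$ together with the Hardy bound \eqref{eq:Hardy-inequality} which makes $K_t$ Hilbert--Schmidt, and (A3s) from Lemma~\ref{eq:condensation-bosonic-atoms}. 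So Theorem \ref{thm:lower-spectrum} applies to $H_{t,N}$ and it remains essentially to identify the essential spectra on both sides and match the asymptotics.

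For (i), on the Bogoliubov side, the potential $V_t+\mu_{\rm H}(t)=|u_t|^2\ast|\cdot|^{-1}-(t|x|)^{-1}$ is relatively $(-\Delta)$-compact, so Weyl's theorem gives $\sigma_{\ess}(h_t)=[-\mu_{\rm H}(t),\infty)$, and Theorem \ref{thm:Bogoliubov-Hamiltonian}(ii) then yields $\sigma_{\ess}(\mathbb{H}_t)=[\inf\sigma(\mathbb{H}_t)-\mu_{\rm H}(t),\infty)$. On the many-body side, the HVZ theorem gives $\sigma_{\ess}(H_{t,N})=[\Sigma_{t,N},\infty)$, where $\Sigma_{t,N}$ is the ground-state energy of the analogous $(N-1)$-particle operator, but with the coupling $1/(N-1)$ rather than the natural mean-field $1/(N-2)$. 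Writing
$$\frac{1}{N-1}=\frac{1}{N-2}-\frac{1}{(N-1)(N-2)}=\frac{1}{N-2}-\frac{1}{(N-1)^2}+o\bigl((N-1)^{-2}\bigr)$$
places us exactly in Remark \ref{rmk:coupling-constant} with $\kappa=-1$, so applying that remark to the $(N-1)$-particle problem gives the $O(1)$ correction $-(\mu_{\rm H}(t)-e_{\rm H}(t))$:
$$\Sigma_{t,N}=(N-1)e_{\rm H}(t)-\bigl(\mu_{\rm H}(t)-e_{\rm H}(t)\bigr)+\inf\sigma(\mathbb{H}_t)+o(1)=Ne_{\rm H}(t)-\mu_{\rm H}(t)+\inf\sigma(\mathbb{H}_t)+o(1),$$
and this matches $\inf\sigma_{\ess}(\mathbb{H}_t)$ exactly, proving the claimed convergence of $\Sigma_{t,N}-Ne_{\rm H}(t)$.

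Parts (ii) and (iii) then follow directly from Theorem \ref{thm:lower-spectrum}. For (ii) with $L=1$, that theorem gives $\inf\sigma(H_{t,N})-Ne_{\rm H}(t)\to\inf\sigma(\mathbb{H}_t)$, which is strictly negative when $w\ne 0$ by Theorem \ref{thm:Bogoliubov-Hamiltonian}(i); combined with (i), the spectral gap $-\mu_{\rm H}(t)>0$ between $\inf\sigma(\mathbb{H}_t)$ and $\inf\sigma_{\ess}(\mathbb{H}_t)$ transfers to a strict gap $\inf\sigma(H_{t,N})<\Sigma_{t,N}$ for large $N$, so a ground state $\Psi_{t,N}^{(1)}$ exists, and it is unique and may be chosen positive by the standard Perron--Frobenius argument for bosonic Schr\"odinger operators. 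Since $u_t>0$, positivity of $\Psi_{t,N}^{(1)}$ forces $\langle U_N\Psi_{t,N}^{(1)},\Omega\rangle_{\cF_+}=\langle u_t^{\otimes N},\Psi_{t,N}^{(1)}\rangle\ge 0$, which fixes the phase matching with the ground state of $\mathbb{H}_t$ and lets us invoke Theorem \ref{thm:lower-spectrum}(iii) to obtain strong convergence $U_N\Psi_{t,N}^{(1)}\to\Phi_t^{(1)}$; under (A3s) this is in the form-norm of $\mathbb{H}_t$, which is equivalent to the kinetic-energy norm since $V_t$ is $(-\Delta)$-form-bounded. For (iii), if $\lambda_L(\mathbb{H}_t)<\inf\sigma_{\ess}(\mathbb{H}_t)$ then Theorem \ref{thm:lower-spectrum}(ii) forces $\lambda_L(H_{t,N})<\Sigma_{t,N}$ for large $N$, so $\lambda_L(H_{t,N})$ is a genuine eigenvalue below the essential spectrum, and eigenfunction convergence is part (iv) of the same theorem. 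The one delicate point in the whole argument is the coupling-constant bookkeeping in the HVZ reduction: the $O(N^{-2})$ mismatch must produce precisely the $-\mu_{\rm H}(t)$ shift needed to align the limits — a pleasant consistency check on the framework.
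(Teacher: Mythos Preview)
Your proof is correct and follows essentially the same route as the paper: verify (A1)--(A2)--(A3s), apply Theorem~\ref{thm:lower-spectrum} for the discrete spectrum and eigenvectors, use HVZ for $\sigma_{\ess}(H_{t,N})$, and identify $\Sigma_{t,N}$ via Remark~\ref{rmk:coupling-constant} with $\kappa=-1$ to match $\inf\sigma_{\ess}(\mathbb{H}_t)$. Your version is in fact more explicit than the paper's (which merely points to HVZ and Remark~\ref{rmk:coupling-constant}), in particular in spelling out the existence of the ground state from the spectral gap, the Perron--Frobenius phase fixing via $\langle U_N\Psi_{t,N}^{(1)},\Omega\rangle=\langle u_t^{\otimes N},\Psi_{t,N}^{(1)}\rangle>0$, and the equivalence of the $\mathbb{H}_t$-form norm with the kinetic-energy norm.
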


%%%%%%%%%%%%%%%%%%%%%%%%%%%%%%
\subsection{Trapped Coulomb gases}\label{sec:2D-log-gas}
%%%%%%%%%%%%%%%%%%%%%%%%%%%%%%
We consider a two- or three- dimensional system of $N$ bosons trapped by an external potential and interacting via the Coulomb potential. The system is described by the Hamiltonian 
\[
H_N= \sum\limits_{i = 1}^N (\mathcal{K}_{x_i}+V(x_i)) + \frac{1}{N-1} \sum\limits_{1 \leqslant i < j \leqslant N} {w(x_i-x_j)} 
\]
acting on the symmetric space $\H^N=\bigotimes_{\text{sym}}^N L^2 
(\mathbb{R}^d)$ for $d=2$ or $d=3$, where $\cK=-\Delta$ or $\cK=\sqrt{1-\Delta}-1$ is the kinetic energy operator.

Our study in this section was motivated by a recent work of Sandier and Serfaty~\cite{SanSer-12,SanSer-12b}. They proved in~\cite{SanSer-12b} that for the classical 2D Coulomb systems, the ground state energy behaves like
$$ N e_{\rm H} -\frac{1}{4}\log N + c + o(1) $$  
where $e_{\rm H}$ is the Hartree ground state energy when $\cK\equiv0$ and $c$ is the ground state energy of an infinite system of point charges in the plane, interacting through the Coulomb potential and with a neutralizing jellium background. The $\log N$ term is due to some local scaling invariance (think of replacing $V$ by a constant in a small box and shrinking the system about the center of this box). We shall see that, when the kinetic energy is introduced, since the local scaling invariance is broken and shrinking the particles has a too large cost, the logarithmic term disappears and the next term is of order $O(1)$, given by Bogoliubov's theory. In order to make this vague statement rigorous, we need some natural assumptions on $T=\cK+V$ and on $w$.

\medskip

\noindent {\bf Assumptions.} {\it If $d=2$, then $w(x)=-\log |x|$, $\cK$ is either $-\Delta$ or $\sqrt{-\Delta+1}-1$,  and $V\in L^{2}_{\rm loc} (\R^2,\R)$ satisfying that
\[ \liminf_{|x|\to \infty} \frac{V(x)}{[\log |x|]^2} >0.\]

\smallskip

\noindent If $d=3$, then $w(x)=|x|^{-1}$, $\cK$ is either $-\Delta$ or $\sqrt{-\Delta+1}-1$, and $V\in L^{3}_{\rm loc}(\R^3,\R)$ satisfying that $V(x)\to \infty$ as $|x|\to \infty$.}

\medskip

Let us show that under these conditions, Assumptions (A1)-(A2)-(A3s) in Theorem \ref{thm:lower-spectrum} hold true. First, note that $\cK+V$ is bounded from below on $L^2(\R^d)$ (see \cite[Sec. 11.3]{LieLos-01}). When $d=3$, if $K=\sqrt{1-\Delta}-1$, then the relative bound (\ref{eq:bound-w-1}) in (A1) follows from the relativistic Hardy inequality \cite[Lemma 8.2]{LieSei-09}
$$\frac{2}{\pi |x|}\le \sqrt{-\Delta} \le \sqrt{1-\Delta}\;\;\;\;~~{\rm on}~~L^2(\R^3),$$
while if $\cK=-\Delta$, then we even have the stronger bound 
\bq \label{eq:trapped-gases-bound-w^2}
|w(x-y)|^2 \le C_0[\cK_x +\cK_y +V(x) + V(y)]+C\,\,\,\,~~{\rm on}~L^2((\R^d)^2)
\eq
due to Hardy's inequality (\ref{eq:Hardy-inequality}). When $d=2$, then (\ref{eq:trapped-gases-bound-w^2}) also holds true, due to the estimates $\cK+V \ge C^{-1} [\log (1+ |x|)]^2-C$ and
\bqq
|w(x-y)|^2 \le 2\left[\frac{1}{|x-y|}+ [\log (1+|x|)]^2 + [\log (1+|y|)]^2 \right] + C 
\eqq
for some $C>0$. Thus (A1) holds true.
On the other hand, Assumption (A2) follows from the following   

\begin{proposition}[Hartree theory]\label{le:Hartree-theory-trapped-gases} Under the previous assumptions, the variational problem
\bq \label{eq:Hartree-trapped-gas}
e_{\rm H} :=\inf_{\substack{u \in {L^2}({\R^d})\\||u|| = 1}} \left\{ \left\langle {u,(\cK+V)u} \right\rangle  + \frac{1}{2} \int \int |u(x)|^2 w(x-y) |u(y)|^2 \,\d x\,\d y \right\}
\eq
has a unique minimizer $u_0$ which satisfies that $u_0(x)>0$ for a.e. $x\in \R^d$ and solves the mean-field equation 
$$
\begin{cases}
h\,u_0=0,\\[0.2cm]
h:=\cK+V + |u_0|^2*w-\mu_{\rm H} ,
\end{cases}
$$
for some Lagrange multiplier $\mu_{\rm H}\in \R$. The operator $h$ on $\gH_+=\{u_0\}^\bot$ has only discrete spectrum $0<\lambda_1(h)\le \lambda_2 (h)\le ...$ with $\lim_{i\to \infty} \lambda_i(h)=\infty$. Moreover, the operator $K$ with kernel $K(x,y)=u_0(x)w(x-y)u_0(y)$ is Hilbert-Schmidt on $L^2(\R^d)$ and it is positive on $\gH_+.$ Finally, $u_0$ is non-degenerate in the sense of~\eqref{eq:hK>=eta}.
\end{proposition}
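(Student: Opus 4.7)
The proof has four steps, each following a standard strategy for variational problems with a Coulomb-type interaction.

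Step 1 (existence, uniqueness, positivity, and mean-field equation). I would obtain existence by the direct method: any minimizing sequence is bounded in the form domain of $\cK + V$ by (A1), and that form domain embeds compactly in $L^2(\R^d)$ thanks to the confining hypothesis on $V$ (in 2D, $V(x) \gtrsim (\log|x|)^2$; in 3D, $V(x) \to \infty$). Weak lower semicontinuity of the kinetic and trapping parts, together with strong $L^2$ convergence and the bound \eqref{eq:trapped-gases-bound-w^2} for the interaction, produce a minimizer $u_0$. For uniqueness and non-negativity I would exploit strict convexity of $\cE_{\rm H}$ in the density $\rho = |u|^2$: by the Hoffmann--Ostenhof inequality \cite{Hof-77}, $\langle u, \cK u\rangle \ge \langle \sqrt{\rho}, \cK\sqrt{\rho}\rangle$, while the interaction equals $\tfrac12\int \widehat{w}(\xi)\,|\widehat{\rho}(\xi)|^2\, d\xi$ and is strictly convex in $\rho$ since $\widehat w \ge 0$ for both $w=-\log|\cdot|$ and $w=|\cdot|^{-1}$. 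It follows that $|u_0|$ is also a minimizer (so one may take $u_0 \ge 0$) and that such a non-negative minimizer is unique. The mean-field equation $hu_0=0$ with $\mu_{\rm H} = e_{\rm H} + \tfrac12 D(|u_0|^2,|u_0|^2)$ is just the Euler--Lagrange equation, and $u_0 > 0$ a.e.\ follows from a Harnack / strong-maximum-principle argument applied to this equation.

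Step 2 (discrete spectrum of $h$ and spectral gap). The effective one-body potential $V + |u_0|^2 \ast w$ in $h$ tends to $+\infty$ at infinity: in 3D, $|u_0|^2\ast|\cdot|^{-1}$ is bounded and vanishes at infinity while $V\to\infty$; in 2D, $|u_0|^2\ast(-\log|\cdot|)$ grows at most like $-\log|x|$, which is dominated by the assumption $V\gtrsim(\log|x|)^2$. Combined with $\cK$, this yields compact resolvent of $h$ via a standard Persson / compact-embedding argument. Since $u_0 > 0$ is a zero-eigenvector, Perron--Frobenius shows it is the unique ground state of $h$, so there is a spectral gap $\eta_{\rm H} := \lambda_2(h) > 0$ with $h \ge \eta_{\rm H}$ on $\gH_+$.

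Step 3 (Hilbert--Schmidt norm and positivity of $K$). To check $\|K\|_{\rm HS}^2 = \iint u_0(x)^2 w(x-y)^2 u_0(y)^2\,dx\,dy < \infty$, in 3D I would use Hardy's inequality \eqref{eq:Hardy-inequality} to obtain $\int u_0(x)^2 |x-y|^{-2}\,dx \le 4\|\nabla u_0\|_{L^2}^2$ uniformly in $y$, then integrate against $u_0(y)^2$. In 2D, local integrability of $(\log|x-y|)^2$ handles the singular part, while the logarithmic growth at infinity is absorbed by the (Agmon-type, super-polynomial) decay of $u_0$ forced by the strong confinement of $V$. Positivity $K\ge 0$ on $\gH_+$ reduces to positivity on the full space: for $f \in L^2$, $\langle f, Kf\rangle = \int \widehat{w}(\xi)\,|\widehat{u_0 f}(\xi)|^2\,d\xi \ge 0$ since $\widehat{w} \ge 0$.

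Step 4 (non-degeneracy \eqref{eq:hK>=eta}, and the main obstacle). Because $u_0$ is real, $\overline{h}=h$, $\overline{K_1}=K_1$, and $K_1 = K_2 =: K$ is a real self-adjoint Hilbert--Schmidt operator on $\gH_+$. The orthogonal change of variable $(v,\bar v)\mapsto ((v+\bar v)/\sqrt 2,(v-\bar v)/\sqrt 2)$ block-diagonalizes the matrix in \eqref{eq:hK>=eta} into $\mathrm{diag}(h+2K,\,h)$ on $\gH_+\oplus\gH_+$; by Steps 2 and 3 each block is bounded below by $\eta_{\rm H}>0$. The main technical obstacle is the 2D logarithmic case, where one must simultaneously control the fact that $V + |u_0|^2\ast w$ is genuinely confining (for compact resolvent of $h$ and hence the spectral gap) and that $u_0$ has enough decay to make $K$ Hilbert--Schmidt; both rely on the strong lower bound $V(x)\gtrsim(\log|x|)^2$, which must be used quantitatively.
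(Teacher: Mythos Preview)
Your overall architecture (direct method for existence, convexity in the density for uniqueness, compact resolvent for the gap, then non-degeneracy from $h\ge\eta_{\rm H}$ and $K\ge0$) matches the paper's. But there is a genuine gap in how you handle the two-dimensional case.

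You repeatedly invoke ``$\widehat w\ge0$ for $w=-\log|\cdot|$'' to deduce (a) strict convexity of $\rho\mapsto\tfrac12 D(\rho,\rho)$ and (b) positivity of $K$ on the \emph{full} space $L^2$. This is not correct as stated: the 2D logarithm is \emph{not} a positive-type kernel (note $-\log|x-y|<0$ for $|x-y|>1$), and the formal identity $D(f,f)=c\int|\widehat f(\xi)|^2|\xi|^{-2}\,d\xi$ diverges at $\xi=0$ unless $\widehat f(0)=\int f=0$. In particular, $K$ need not be $\ge0$ on all of $L^2(\R^2)$; take $f=u_0$ and a minimizer $u_0$ spread out enough that $D(|u_0|^2,|u_0|^2)<0$. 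The paper handles this by invoking only the \emph{restricted} positivity of Proposition~\ref{pro:log-kernel}: $D(f,f)\ge0$ whenever $\int f=0$. For uniqueness, one compares two normalized densities, so $\int(|u|^2-|u_0|^2)=0$ and the argument goes through on the constraint manifold (not by global convexity). For $K\ge0$ on $\gH_+$, one notes that $v\perp u_0$ with $u_0$ real gives $\int u_0 v=0$, hence $\langle v,Kv\rangle=D(u_0v,u_0v)\ge0$ by the same proposition. Your Step~4 block-diagonalization into $\operatorname{diag}(h+2K,h)$ is fine, but it rests on $K\ge0$ on $\gH_+$, which needs this corrected justification.

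A secondary point: in Step~3 for $d=3$ you only treat $\cK=-\Delta$ via Hardy's inequality, but the statement also covers $\cK=\sqrt{1-\Delta}-1$, where $u_0\in H^{1/2}$ only and \eqref{eq:Hardy-inequality} is unavailable. The paper instead uses the Hardy--Littlewood--Sobolev inequality together with the embedding $H^{1/2}(\R^3)\subset L^3(\R^3)$ to get $\iint u_0(x)^2|x-y|^{-2}u_0(y)^2\,dx\,dy<\infty$.
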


Before proving Proposition \ref{le:Hartree-theory-trapped-gases}, let us mention that in 2D, the Coulomb potential $w(x)=-\log |x|$ does not have positive Fourier transform. More precisely, $\widehat w={\rm pv} ~|\cdot|^{-1}$, the principal value of $|\cdot|^{-1}$. Although $w$ is not a positive type kernel, we still have the following restricted positivity (see \cite{CarLos-92}). 
\begin{proposition}[Coulomb $\log$ kernel]\label{pro:log-kernel} For any $f\in L^1(\R^2) \cap L^{1+\eps}(\R^2)$ for some $\eps>0$ with
$$\int_{\R^2}\log(2+|x|)|f(x)|\,\d x<\ii\qquad\text{and}\qquad \int_{\R^2} f =0,$$ 
we have
$$ 0\leq D(f,f): = - \iint\limits_{{\mathbb{R}^2} \times {\mathbb{R}^2}} {\overline {f(x)} \log |x-y| f(y)}\,\d x\,\d y<\ii.$$
\end{proposition}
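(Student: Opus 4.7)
My plan is to prove the two assertions, finiteness and non-negativity, separately. Finiteness will follow from a direct decomposition of the integration domain according to whether $|x-y|\lessgtr 1$. Non-negativity rests on the subordination identity
\[
-\log|x|=\frac12\int_0^\infty\frac{e^{-t|x|^2}-e^{-t}}{t}\,dt,
\]
a Frullani integral easily verified by differentiating both sides in $|x|^2$ and checking the value at $|x|=1$; this is combined with the fact that Gaussians have positive Fourier transforms and with the neutrality assumption $\int f=0$.

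For finiteness I would establish the stronger statement $\iint|\log|x-y||\,|f(x)||f(y)|\,dx\,dy<\infty$. On the region $|x-y|\ge 1$, I use the pointwise bound $0\le \log|x-y|\le \log(2+|x|)+\log(2+|y|)$, so the contribution is controlled by $2\|f\|_1\int\log(2+|x|)|f(x)|\,dx$, which is finite by the two hypotheses. On the region $|x-y|\le 1$, I use the elementary estimate $\log(1/r)\le \delta^{-1}r^{-\delta}$ (valid for $r\in(0,1]$ and any $\delta>0$) together with the 2D Hardy--Littlewood--Sobolev inequality: for $\delta$ small enough that $p:=4/(4-\delta)\le 1+\varepsilon$, HLS yields $\iint_{|x-y|\le 1}|f(x)||x-y|^{-\delta}|f(y)|\,dx\,dy\le C\|f\|_p^2<\infty$.

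For non-negativity I would substitute the subordination identity into $D(f,f)$ and swap the order of integration. The Fubini step is legitimate by virtue of the pointwise identity
\[
\int_0^\infty t^{-1}\bigl|e^{-t|x-y|^2}-e^{-t}\bigr|\,dt=2\bigl|\log|x-y|\bigr|
\]
(obtained by separate case analysis on $|x-y|\lessgtr 1$) combined with the finiteness of the previous step. The neutrality $\int f=0$ then kills the $e^{-t}\bigl|\int f\bigr|^2$ term, leaving
\[
D(f,f)=\frac12\int_0^\infty\frac{dt}{t}\,\langle f,G_t*f\rangle,\qquad G_t(x):=e^{-t|x|^2}.
\]
For each $t>0$, $\langle f,G_t*f\rangle=\int \hat G_t(\xi)|\hat f(\xi)|^2\,d\xi\ge 0$ by Plancherel, since $\hat G_t$ is a positive Gaussian. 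Integrating a non-negative integrand in $t$ gives $D(f,f)\ge 0$, and incidentally recovers the classical Fourier formula $D(f,f)=(2\pi)^{-1}\int|\hat f(\xi)|^2|\xi|^{-2}\,d\xi$.

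The only delicate point in the argument is the Fubini step in the positivity part, and the identity computing $\int_0^\infty t^{-1}|e^{-t|x-y|^2}-e^{-t}|\,dt$ as $2|\log|x-y||$ is precisely what dovetails the finiteness estimate with the non-negativity argument, making the two steps compatible. Everything else (Frullani, one pointwise log-estimate, one HLS application, and the positivity of Gaussians) is routine.
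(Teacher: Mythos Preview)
Your proof is correct and self-contained. The paper itself does not prove this proposition; it simply cites Carlen--Loss \cite{CarLos-92} for the restricted positivity of the logarithmic kernel. Your argument via the Frullani/subordination identity for $-\log|x|$ and the positivity of Gaussian convolution is a clean, independent route.

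One cosmetic remark: in the last step you invoke ``Plancherel'' to get $\langle f,G_t\ast f\rangle=\int \hat G_t(\xi)|\hat f(\xi)|^2\,d\xi\ge 0$, but $f$ is only assumed to lie in $L^1\cap L^{1+\varepsilon}$, not in $L^2$. This is not a gap---since $f\in L^1$ gives $\hat f\in L^\infty$ and $\hat G_t\in L^1$, the identity follows from Fourier inversion of the Gaussian and Fubini (justified because $G_t\in L^\infty$ and $|f|\otimes|f|\in L^1$)---but strictly speaking it is not Plancherel's theorem. You might phrase it as ``since $G_t$ is of positive type'' or spell out the one-line Fubini/Fourier-inversion computation.
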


We can now provide the 

\begin{proof}[Proof of Proposition \ref{le:Hartree-theory-trapped-gases}] By using the relative bound (\ref{eq:bound-w-1}), Sobolev's embedding \cite[Theorem 8.4]{LieLos-01} and the fact that $V(x)\to \infty$ as $|x|\to \infty$, it is straightforward to show that there is a minimizer $u_0$ for $e_{\rm H}$ in (\ref{eq:Hartree-trapped-gas}) such that $u_0\in D(\cK^{1/2})$ and  
$$ \int_{\R^d} V(x) |u_0(x)|^2 \d x <\infty.$$
Since $\left\langle {u,\cK u} \right\rangle \ge \left\langle {|u|,\cK|u|} \right\rangle$ \cite[Theorems 7.8 and 7.13]{LieLos-01}, we may assume that $u_0 \ge 0$. Moreover, the Hartree equation $hu_0=0$ implies that $u_0(x)>0$ for a.e. $x\in \R^2$ and $u_0$ is then the unique ground state of $h$. Moreover, because $V(x) + (|u_0|^2*w)(x) \to \infty$ as $|x|\to \infty$, the operator $h$ (on $\gH_+$) has only discrete spectrum $\lambda_1(h) \le \lambda_2(h)\le ...$  with $\lambda_1(h)>0$ and $\lim_{i\to \infty} \lambda_i(h)=\infty$ (see \cite[Theorem XIII.16]{ReeSim4}).

For any normalized function $u\in \gH$, we have $\int (|u|^2-|u_0|^2)=0$ and hence $ D(|u|^2-|u_0|^2,|u|^2-|u_0|^2)\ge 0$ by Proposition \ref{pro:log-kernel}. Using this and a convexity argument we can show that $u_0$ is the unique minimizer of $e_{\rm H}$ in (\ref{eq:Hartree-trapped-gas}) .

The operator $K$ with kernel $K(x,y)=u_0(x)w(x-y)u_0(y)$ is Hilbert-Schmidt in $L^2((\R^d)^2)$. When  $d=2$, or $d=3$ and $\K=-\Delta$, this fact holds true due to (\ref{eq:trapped-gases-bound-w^2}). When $d=3$ and $\K=\sqrt{1-\Delta}-1$, the Hilbert-Schmidt property follows from the Hardy-Littlewood-Sobolev inequality \cite[Theorem 4.3]{LieLos-01} and the Sobolev's embedding $H^{1/2}(\R^3)\subset L^3(\R^3)$.  

Finally, the operator $K$ is positive on $\gH_+$ because $\left\langle {v,Kv} \right\rangle = D(vu_0, vu_0) \ge 0$  for every $v\in \gH_+$. The latter inequality follows from Proposition \ref{pro:log-kernel} and the fact that $\int (v u_0)=0$. We then deduce from $h\geq \lambda_1(h)>0$ that $u_0$ is non-degenerate in the sense of~\eqref{eq:hK>=eta}.
\end{proof}

We have shown that (A1) and (A2) hold true. Therefore, we may consider the Bogoliubov Hamiltonian 
\bqq
\mathbb{H}=\!\!\int_{\Omega}\!a^*(x)(h+K)a(x)\,\d x+\frac12\int_{\Omega}\!\!\int_{\Omega} K(x,y)\Big( a^*(x)a^*(y)+ a(x)a(y)\Big)\d x\,\d y,
\eqq
which acts on the Fock space $\F_+ = \bigoplus_{n=0}^{\infty} \bigotimes_{\rm sym}^{n} \gH_+$.
From Theorem \ref{thm:Bogoliubov-Hamiltonian} and the spectral property of $h$, the following can easily be proved.

\begin{proposition}[Bogoliubov Hamiltonian of trapped Coulomb gases]\label{prop:Bogo-Coulomb-gas}
Under the above assumptions on $\cK$, $V$ and $w$, the Bogoliubov Hamiltonian $\bH$ is bounded from below. Its spectrum is purely discrete, consisting of a sequence of eigenvalues $\lambda_1(\mathbb{H}) < \lambda_2(\mathbb{H}) \le \lambda_3(\mathbb{H}) \le \cdots$ with $\lim_{j\to \infty}\lambda_j(\mathbb{H})=\infty$.
\end{proposition}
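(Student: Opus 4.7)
The plan is to obtain the proposition as a direct corollary of Theorem~\ref{thm:Bogoliubov-Hamiltonian}, using the detailed spectral information on the one-body operator $h$ provided by Proposition~\ref{le:Hartree-theory-trapped-gases}. Recall from that proposition that, under the standing assumptions, the restriction of $h=\cK+V+|u_0|^2*w-\mu_{\rm H}$ to $\gH_+$ has purely discrete spectrum $0<\lambda_1(h)\leq \lambda_2(h)\leq\cdots$ with $\lambda_j(h)\to\infty$, and that $K=K_1=K_2$ is Hilbert--Schmidt. Thus all the hypotheses (A1)--(A2) needed to invoke Theorem~\ref{thm:Bogoliubov-Hamiltonian} are in force.

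First, from $\lambda_j(h)\to\infty$ we read off $\sigma_{\rm ess}(h)=\emptyset$. By the essential-spectrum formula in Theorem~\ref{thm:Bogoliubov-Hamiltonian}(ii) this forces $\sigma_{\rm ess}(\bH)=\emptyset$ as well, so that $\bH$ has purely discrete spectrum. Boundedness from below is already contained in Theorem~\ref{thm:Bogoliubov-Hamiltonian} via the two-sided bound
\[
C^{-1}\,\dGamma(h+1)-C \;\le\; \bH \;\le\; \dGamma(h+C)+C.
\]

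Next I would argue that the min-max values $\lambda_n(\bH)$ tend to $+\infty$. Since $h+1\geq 1$ and $h+1$ has purely discrete spectrum tending to infinity, a standard argument shows that $\dGamma(h+1)$ has purely discrete spectrum tending to infinity on $\F_+$: any eigenvalue of $\dGamma(h+1)$ on the $n$-particle sector $\gH_+^n$ is a sum $\sum_{i=1}^n \lambda_{k_i}(h+1)$, and for any fixed threshold $E$ there are only finitely many such sums below $E$ (first because $\lambda_1(h+1)\geq 1$ bounds $n\leq E$, and then because only finitely many $\lambda_k(h+1)$ are $\leq E$). The operator inequality $\bH\geq C^{-1}\dGamma(h+1)-C$ combined with the min-max principle then gives $\lambda_n(\bH)\to\infty$.

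Finally, the strict inequality $\lambda_1(\bH)<\lambda_2(\bH)$ is exactly the simplicity of the lowest eigenvalue asserted in Theorem~\ref{thm:Bogoliubov-Hamiltonian}(i) (note that $w\not\equiv 0$ here, so this case is not trivial). I do not anticipate any serious obstacle: the proposition is essentially a packaging result, the only mildly technical point being the verification that $\dGamma(h+1)$ inherits discrete spectrum going to infinity from $h+1$, which is an elementary combinatorial observation as outlined above.
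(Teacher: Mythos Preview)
Your proposal is correct and follows exactly the route the paper indicates: the paper does not give a detailed proof but simply states that the proposition follows from Theorem~\ref{thm:Bogoliubov-Hamiltonian} together with the spectral properties of $h$ established in Proposition~\ref{le:Hartree-theory-trapped-gases}. Your argument spells out precisely these details---emptiness of $\sigma_{\rm ess}(h)$ forcing $\sigma_{\rm ess}(\bH)=\emptyset$ via Theorem~\ref{thm:Bogoliubov-Hamiltonian}(ii), the lower bound $\bH\ge C^{-1}\dGamma(h+1)-C$ combined with the elementary combinatorial fact that $\dGamma(h+1)$ has discrete spectrum accumulating only at $+\infty$, and simplicity of the ground state from Theorem~\ref{thm:Bogoliubov-Hamiltonian}(i)---and there is nothing to correct.
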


Now we consider Assumption (A3s). In three dimensions, the condensation can be obtained by following the proof of Lemma \ref{eq:condensation-bosonic-atoms}. In two dimensions, we have the following result. 

\begin{lemma}[Strong condensation of 2D Coulomb gases] Under the above assumptions on $\cK$, $V$ and $w$,  we have 
$$H_{N}-N e_{\rm H} \ge \left( 1-  N^{-1}\right)  \sum_{i=1}^N h_i - \frac{\log N}{4N}-\frac{C}{N}.$$
\end{lemma}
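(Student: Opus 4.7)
The strategy is to mimic the proof of Lemma~\ref{eq:condensation-bosonic-atoms} (bosonic atoms), with two 2D-specific modifications: (i) since the log potential has $w(0)=+\infty$, one cannot directly apply a Lieb--Oxford type inequality to bound $\sum_{i<j}w_{ij}$ from below by $\tfrac12 D(\rho_\Psi,\rho_\Psi)$, so a smearing (Onsager) step is needed; (ii) the 2D Coulomb kernel is not of positive type, so the reduction to a Hartree reference must use the restricted positivity of Proposition~\ref{pro:log-kernel}.

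Concretely, the plan is as follows. For a parameter $\eta>0$ to be optimized, let $\mu^{x_i}_\eta$ denote the uniform probability measure on the circle of radius $\eta$ around $x_i$, with potential $f_\eta(x)=-\log\max(|x|,\eta)$ and self-energy $D(\mu_\eta,\mu_\eta)=-\log\eta$. By Newton's theorem in 2D and the superharmonicity of $-\log|\cdot|$ (the Onsager lemma), one has the pointwise (classical) bound
\[
\sum_{i<j}w(x_i-x_j)\ \ge\ \frac{1}{2}D(\tilde\rho_0,\tilde\rho_0)\,+\,\frac{N\log\eta}{2},\qquad \tilde\rho_0:=\sum_i \mu^{x_i}_\eta.
\]
Since $\int(\tilde\rho_0-N|u_0|^2)\,\d x=0$, Proposition~\ref{pro:log-kernel} gives $D(\tilde\rho_0-N|u_0|^2,\tilde\rho_0-N|u_0|^2)\ge 0$, so
\[
D(\tilde\rho_0,\tilde\rho_0)\ \ge\ 2N\sum_i(|u_0|^2*f_\eta)(x_i)-N^2 D(|u_0|^2,|u_0|^2).
\]
One then uses the Hartree equation $hu_0=0$ together with elliptic regularity for $\cK+V$ to conclude $u_0\in L^\infty$, which in turn yields the pointwise bound $|(|u_0|^2*w)(x)-(|u_0|^2*f_\eta)(x)|\le C\eta^2$. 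Plugging these ingredients into $H_N=\sum T_i+(N-1)^{-1}\sum_{i<j}w_{ij}$, together with the identities $T=h-|u_0|^2*w+\mu_{\rm H}$ and $\mu_{\rm H}=e_{\rm H}+\tfrac12 D(|u_0|^2,|u_0|^2)$, one obtains
\[
H_N-Ne_{\rm H}\ \ge\ \sum_i h_i+\frac{1}{N-1}\sum_i(|u_0|^2*w)(x_i)-\frac{ND(|u_0|^2,|u_0|^2)}{2(N-1)}+\frac{N\log\eta}{2(N-1)}-\frac{CN^2\eta^2}{N-1}.
\]
It remains to split $\sum h_i=(1-N^{-1})\sum h_i+N^{-1}\sum h_i$ and to absorb the residual one-body term into $N^{-1}\sum h_i$. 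The key mechanism here is that the trap satisfies $V(x)\gtrsim [\log(1+|x|)]^2$, which via the proof of Proposition~\ref{le:Hartree-theory-trapped-gases} gives $h\gtrsim [\log(1+|x|)]^2-C$, while $|u_0|^2*w$ only grows logarithmically at infinity; hence for any $\varepsilon>0$ one has the operator inequality $|u_0|^2*w\ge -\varepsilon\, h-C_\varepsilon$ on $\gH$.

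The main obstacle is the optimization that produces the precise $\log N/(4N)$ constant. The two $\eta$-dependent error terms $\frac{N\log\eta}{2(N-1)}$ and $\frac{CN^2\eta^2}{N-1}$ must be balanced against the gain provided by $N^{-1}\sum h_i$ through the relative bound above, which is what makes the final error scale like $\log N/N$ rather than like $\log N$; this step, and the sharp handling of the constant-$C/N$ remainder coming from the $\eta^2$ term and from the $D(|u_0|^2,|u_0|^2)$ correction, is the delicate part of the argument. Choosing $\eta$ small enough that the Onsager correction is controlled, and large enough that the $\eta^2$ error is absorbed by a small multiple of $\sum h_i$, yields the announced bound.
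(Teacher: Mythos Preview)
Your approach—direct Onsager smearing against the fixed reference $N|u_0|^2$ rather than against $\rho_\Psi$—is a legitimate alternative to the paper's route, which first establishes the general logarithmic Lieb--Oxford inequality (Proposition~\ref{le:2d_LO}) and then plugs it into the argument of Lemma~\ref{eq:condensation-bosonic-atoms}. Both rest on the same Onsager idea; the paper's smearing radius is density-adaptive (producing a bound purely in terms of $\rho_\Psi$, whose $L^{1+\epsilon}$ norm is then controlled via the kinetic energy through Hoffmann--Ostenhof and Sobolev), whereas yours uses a fixed radius $\eta$ and trades the local error against regularity of $u_0$. The price is the extra input $u_0\in L^\infty$, which you invoke via elliptic regularity but which, under only $V\in L^2_{\rm loc}(\R^2)$, is not entirely for free; the paper's approach does not need it.

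There is, however, a genuine gap in your final optimization. After absorbing $\tfrac{1}{N-1}\sum_i(|u_0|^2*w)(x_i)$ into $N^{-1}\sum h_i$ via $|u_0|^2*w\ge-\varepsilon h-C_\varepsilon$, the leftover constant is $\tfrac{NC_\varepsilon}{N-1}\sim C_\varepsilon$, which is $O(1)$, not $O(1/N)$; the term $-\tfrac{N}{2(N-1)}D(|u_0|^2,|u_0|^2)$ is likewise $O(1)$. Optimizing the two $\eta$-terms $\tfrac{N\log\eta}{2(N-1)}-\tfrac{CN^2\eta^2}{N-1}$ gives $\eta\sim N^{-1/2}$ and hence a contribution $-\tfrac{\log N}{4}-O(1)$, not $-\tfrac{\log N}{4N}$. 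There is no mechanism by which $N^{-1}\sum h_i$ improves this balance by a factor of $N$, contrary to what you suggest. The paper's route gives the same scaling: dividing the Lieb--Oxford remainder $-\tfrac14 N\log N-CN/\epsilon$ by $N-1$ yields $-\tfrac{\log N}{4}-O(1)$. In either case what one actually obtains is
\[
H_N-Ne_{\rm H}\ \ge\ \Big(1-\tfrac{1}{N-1}\Big)\sum_{i=1}^N h_i-\tfrac{\log N}{4}-C,
\]
which is amply sufficient for (A3s). The extra $1/N$ in the displayed statement appears to be a slip; your plan cannot recover it, and you should not try to.
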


This condensation can be proved by using the same argument as in the proof of Lemma \ref{eq:condensation-bosonic-atoms} and the following Lieb-Oxford type inequality, whose proof can be found in Appendix \ref{apd:log-Lieb-Oxford-inequality}. 

\begin{proposition}[Logarithmic Lieb-Oxford inequality]\label{le:2d_LO}
For any wave function $\Psi\in \bigotimes_1^N L^2(\R^2)$ such that $|\Psi|^2$ is symmetric and $\rho_{\Psi}\in L^1\cap L^{1+\eps}$ for some $0<\epsilon\leq$ and $\int_{\R^2}\log(2+|x|)\rho_\Psi(x)\,\d x<\ii$, we have
\begin{multline}
{\left\langle { \Psi, \left( \sum\limits_{1 \leqslant i < j \leqslant N} {-\log |{x_i} - {x_j}|} \right) \Psi} \right\rangle } \geq  \frac{1}{2} D(\rho_\Psi,\rho_\Psi)-\frac14 N \log N
 \hfill\\
-\frac{C}{\eps}\left(\int_{\R^2}\rho_\Psi\right) -\frac{C}{\epsilon}\left(\int_{\R^2}\rho_\Psi\right)\int_{\R^2}\left(\frac{\rho_\Psi}{\int_{\R^2}\rho_\Psi}\right)^{1+\epsilon}.
\label{eq:2D-Lieb-Oxford} 
\end{multline}
% \begin{multline}
% {\left\langle { \Psi, \left( \sum\limits_{1 \leqslant i < j \leqslant N} {-\log |{x_i} - {x_j}|} \right) \Psi} \right\rangle } \geq  \frac{1}{2} D(\rho_\Psi,\rho_\Psi)-\frac14\int_{\R^2}\rho_\Psi\log\rho_\Psi
%  \hfill\\
% -C\int_{\R^2}\rho_\Psi-\frac{C}{\epsilon}\left(\int_{\R^2}\rho_\Psi\right)\int_{\R^2}\left(\frac{\rho_\Psi}{\int_{\R^2}\rho_\Psi}\right)^{1+\epsilon}.
% \label{eq:2D-Lieb-Oxford} 
% \end{multline}
\end{proposition}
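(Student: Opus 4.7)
The proof is of Onsager type, adapted to the two-dimensional logarithmic kernel, in the spirit of Lieb--Oxford. For $R>0$ and $x\in\R^2$, let $\mu_x^R$ denote the uniform probability measure on the circle $\partial B(x,R)$. By Newton's theorem in 2D,
$$\mu_x^R\ast(-\log|\cdot|)(y)=-\log\max(|x-y|,R),$$
which yields $D(\mu_x^R,\mu_y^R)\le -\log|x-y|$ as well as the self-energy identity $D(\mu_x^R,\mu_x^R)=-\log R$. Setting $\tilde\rho:=\sum_{i=1}^N\mu_{x_i}^R$ (of total mass $N$) and choosing $R:=N^{-1/2}$, one obtains, at the classical (pointwise in configuration) level,
$$\sum_{i<j}(-\log|x_i-x_j|)\ge \sum_{i<j}D(\mu_{x_i}^R,\mu_{x_j}^R)=\tfrac12 D(\tilde\rho,\tilde\rho)+\tfrac{N}{2}\log R=\tfrac12 D(\tilde\rho,\tilde\rho)-\tfrac{N\log N}{4}.$$

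Next, set $\rho:=\rho_\Psi$. Since $\int(\tilde\rho-\rho)=0$, Proposition~\ref{pro:log-kernel} gives $D(\tilde\rho-\rho,\tilde\rho-\rho)\ge 0$, i.e.\ $\tfrac12 D(\tilde\rho,\tilde\rho)\ge D(\tilde\rho,\rho)-\tfrac12 D(\rho,\rho)$. Taking the $|\Psi|^2$-expectation and using that the one-particle marginal of $|\Psi|^2$ is $\rho/N$,
$$\mathbb{E}_\Psi D(\tilde\rho,\rho)=\iint-\log\max(|x-y|,R)\,\rho(x)\rho(y)\,\d x\,\d y=D(\rho,\rho)-L_R,$$
where
$$L_R:=\iint_{|x-y|<R}\log\!\Big(\frac{R}{|x-y|}\Big)\rho(x)\rho(y)\,\d x\,\d y\ge 0.$$
Combining the three estimates yields
$$\Big\langle\Psi,\sum_{i<j}(-\log|x_i-x_j|)\Psi\Big\rangle\ge \tfrac12 D(\rho,\rho)-\tfrac{N\log N}{4}-L_R.$$

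It remains to control $L_R$ by the two error terms in the statement. Using the elementary bound $\log t\le t^\epsilon/\epsilon$ for $t\ge 1$,
$$L_R\le\frac{1}{\epsilon}\iint_{|x-y|<R}\Big(\frac{R}{|x-y|}\Big)^\epsilon\rho(x)\rho(y)\,\d x\,\d y,$$
and the right-hand side is handled by a generalized Young inequality applied with the triple of exponents $(1,\,1+\epsilon,\,(1+\epsilon)/\epsilon)$ to the functions $(\rho,\rho,R^\epsilon|\cdot|^{-\epsilon}\mathbf{1}_{B_R})$, followed by an AM--GM splitting that separates the factor $N=\int\rho$ from the $L^{1+\epsilon}$-norm of $\rho$ to produce both $\tfrac{C}{\epsilon}\int\rho_\Psi$ and $\tfrac{C}{\epsilon}(\int\rho_\Psi)\int(\rho_\Psi/\int\rho_\Psi)^{1+\epsilon}$. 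An equivalent route is to replace the constant $R$ by the density-dependent scale $R(x)=\rho(x)^{-1/2}$: the term $\tfrac12\int\rho\log R=-\tfrac14\int\rho\log\rho$ then produces the $-\tfrac{N\log N}{4}$ constant plus, via the convex inequality $x\log x\le (x^{1+\epsilon}-x)/\epsilon$ applied to $\rho/N$, the $L^{1+\epsilon}$-error term directly. The main obstacle is this final bookkeeping: the exponents in Young's and AM--GM must be tuned so that both summands of the statement emerge with a constant of order $C/\epsilon$ (rather than a worse $\epsilon$-dependence), and the assumption $\rho_\Psi\in L^{1+\epsilon}$ together with the log-integrability hypothesis $\int\log(2+|x|)\rho_\Psi\,\d x<\infty$ enters precisely here to ensure that the short- and long-distance contributions of the log-kernel are both well-defined.
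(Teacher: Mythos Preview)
Your Onsager-type setup (smearing each point charge onto a circle, Newton's theorem, and the positivity $D(\tilde\rho-\rho,\tilde\rho-\rho)\ge 0$) is correct and matches the paper's framework up to cosmetic changes (the paper smears onto disks rather than circles). The reduction to
\[
\Big\langle\Psi,\sum_{i<j}(-\log|x_i-x_j|)\Psi\Big\rangle \ge \tfrac12 D(\rho,\rho)-\tfrac14 N\log N - L_R
\]
with constant $R=N^{-1/2}$ is fine. The gap is the control of $L_R$.

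Your Young-inequality route with exponents $\big(1,\,1+\epsilon,\,(1+\epsilon)/\epsilon\big)$ for $(\rho,\rho,k)$, $k=(R/|\cdot|)^\epsilon\mathbf 1_{B_R}$, gives
\[
L_R \le \tfrac{1}{\epsilon}\,\|\rho\|_1\,\|\rho\|_{1+\epsilon}\,\|k\|_{(1+\epsilon)/\epsilon}
\;\le\; \tfrac{C_\epsilon}{\epsilon}\,N^{(2+\epsilon)/(1+\epsilon)}\Big(\textstyle\int f^{1+\epsilon}\Big)^{1/(1+\epsilon)},
\]
since $\|k\|_{(1+\epsilon)/\epsilon}\sim R^{2\epsilon/(1+\epsilon)}=N^{-\epsilon/(1+\epsilon)}$. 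This is superlinear in $N$, not $O(N)$ as required; no AM--GM rearrangement fixes this power. The alternative choice $R(x)=\rho(x)^{-1/2}$ does handle the self-energy correctly via $t\log t\le (t^{1+\epsilon}-t)/\epsilon$, but you then say nothing about the resulting cross term $L'=\int\rho(x)\int_{|y-x|<R(x)}\rho(y)\log\!\big(R(x)/|y-x|\big)\,dy\,dx$, and with this particular $R(x)$ it is \emph{not} $O(N)$ in general (the naive maximal-function bound yields $\tfrac{\pi}{2}\int M_\rho=\infty$).

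The missing idea, which the paper supplies, is to keep a \emph{free} scale $R(x)=\sqrt{\lambda(x)/\rho(x)}$. A short lemma (integration by parts on radial averages) gives the pointwise bound
\[
\int_{|y-x|<R(x)}\rho(y)\log\!\frac{R(x)}{|y-x|}\,dy \le \tfrac{\pi}{2}\,M_\rho(x)\,R(x)^2,
\]
with $M_\rho$ the Hardy--Littlewood maximal function. Hence the cross term is $\le \tfrac{\pi}{2}\int \lambda\,M_\rho$, while the self-energy contributes $-\tfrac14\int\rho\log(\rho/\lambda)$. Choosing $\lambda=\rho/M_\rho$ makes the cross term exactly $\tfrac{\pi}{2}\int\rho=O(N)$ and turns the self-energy into $-\tfrac14\int\rho\log M_\rho = -\tfrac14 N\log N - \tfrac14 N\int f\log M_f$. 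One then uses $\epsilon\, f\log M_f\le f+M_f^{1+\epsilon}$ together with Stein's maximal inequality $\int M_f^{1+\epsilon}\le (C/\epsilon)\int f^{1+\epsilon}$ to produce precisely the two error terms in the statement. In short: the step you label ``final bookkeeping'' is where the real mechanism lives, and it requires the maximal-function optimization, not Young's inequality.
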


The estimate~\eqref{eq:2D-Lieb-Oxford} is probably not optimal but it is sufficient for our purpose. In particular we do not know if the error term involving $\eps$ can be removed.
By applying Theorem ~\ref{thm:lower-spectrum} and Theorem \ref{thm:positive-temperature} we get the following results.

\begin{theorem}[Spectrum of trapped Coulomb gases]\label{thm:trapped-Bose-gases} Assume that $\cK$, $V$ and $w$ satisfy the above assumptions. 

\smallskip

\noindent $(i)$ \emph{(Eigenvalues)}. The Hamiltonian $H_N$ has only discrete spectrum $\lambda_1(H_N)<\lambda_2(H_N) \le \lambda_3(H_N)\le...$ with $\lim_{L\to \infty} \lambda_{L}(H_N)=\infty$. For every $L=1,2,...$ we have the following convergence
$$ \lim_{N\to \infty} \left( \lambda_{L}(H_N) -N e_{\rm H}\right)= \lambda_L(\mathbb{H})$$
where $\lambda_L(\mathbb{H})$  is the $L$-th eigenvalue of the Bogoliubov Hamiltonian $\bH$.

\smallskip

\noindent $(ii)$ \emph{(Eigenvectors)}. If $\Psi_{N}^{(L)}$ is an eigenvector of the $L$-th eigenvalue of $H_{N}$, then up to a subsequence, $U_N \Psi_{N}^{(L)}$ converges (strongly in the norm induced by $\cK+V$) to an eigenvector of the $L$-th eigenvalue of $\mathbb{H}_t$ as $N\to \infty$,  where $U_N$ is defined in (\ref{eq:def-unitary-UN}).

\smallskip

\noindent $(iii)$ \emph{(Positive temperature)}. If we assume furthermore that $\Tr e^{-\beta_0 (\cK+V)}<\infty$ for some $\beta_0>0$, then for every $0<\beta^{-1}<\beta_0^{-1}$ we obtain the convergence
$$\lim_{N\to\ii}\Tr_{\F_+}\left| U_N e^{-\beta(H_N-Ne_{\rm H})} U_N^*- e^{-\beta \bH}\right|=0.$$ 
In particular, we have the convergence of the free energy
\begin{equation}
\lim_{N\to\ii} \Big(-\beta^{-1}\log\Tr_{\gH^N}e^{-\beta H_N}-N\,e_{\rm H}\Big)=-\beta^{-1}\log\Tr_{\cF_+}e^{-\beta \bH}.
\label{eq:Coulomb-gas-free-energy}
\end{equation}
\end{theorem}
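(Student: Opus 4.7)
The plan is to verify the abstract hypotheses of Theorem~\ref{thm:lower-spectrum} and Theorem~\ref{thm:positive-temperature} and then invoke them. Most of the verification has already been prepared in the preceding pages: Assumption (A1) was checked in the discussion following the model description, Assumption (A2) is exactly Proposition~\ref{le:Hartree-theory-trapped-gases}, and the strong condensation (A3s) is the content of the lemma stated immediately before the theorem, whose proof rests on the logarithmic Lieb-Oxford bound of Proposition~\ref{le:2d_LO} in dimension two and on the standard Lieb-Oxford argument (as in Lemma~\ref{eq:condensation-bosonic-atoms}) in dimension three.

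For parts $(i)$ and $(ii)$, I would first note that because $V(x)\to\infty$ at infinity, the operator $T=\cK+V$ has compact resolvent; consequently so does $H_N$ by the form bound~\eqref{eq:HN>=T}, and its spectrum is purely discrete, tending to $+\infty$. Proposition~\ref{prop:Bogo-Coulomb-gas} provides the analogous property for the Bogoliubov Hamiltonian, so in particular $\sigma_{\ess}(\bH)=\emptyset$ and each min-max level $\lambda_L(\bH)$ is automatically an eigenvalue lying strictly below the (empty) essential spectrum. Parts $(ii)$ and $(iv)$ of Theorem~\ref{thm:lower-spectrum} then immediately yield the claimed convergences for every eigenvalue and every associated eigenvector; the strong convergence in the form-norm induced by $\bH$, which by~\eqref{eq:comparison-bH-dGamma(h+1)} is equivalent to that induced by $\dGamma(\cK+V)$, is precisely what the already verified (A3s) guarantees in the last sentence of part $(iv)$ of Theorem~\ref{thm:lower-spectrum}.

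For part $(iii)$, the remark printed just after Assumption (A3') reduces the verification of both (A3') and (A4) from (A3s) to a single integrability condition, namely $\Tr_{\cF_+}\,e^{-(1-\eps_0)\beta\bH}<\infty$ on the relevant range of inverse temperatures. Using the comparison $\bH\ge C^{-1}\dGamma(h+1)-C$ from Theorem~\ref{thm:Bogoliubov-Hamiltonian}, this trace reduces to a convergent product $\prod_j(1-e^{-\beta'(1+\lambda_j(h))})^{-1}$, whose finiteness is equivalent to $\Tr_{\gH}\,e^{-\beta'h}<\infty$; in view of $h\ge(1-\alpha_1)T-C$, this is in turn ensured by the hypothesis $\Tr\,e^{-\beta_0(\cK+V)}<\infty$ (possibly after decreasing $\beta_0$). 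Having verified (A1)-(A2)-(A3')-(A4), Theorem~\ref{thm:positive-temperature} directly yields the trace-norm convergence of the Gibbs state and the free-energy limit~\eqref{eq:Coulomb-gas-free-energy}.

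The only non-routine input specific to this setting is the logarithmic Lieb-Oxford inequality used to secure (A3s) in the two-dimensional case; once that ingredient is in hand, the entire theorem reduces to a clean bookkeeping exercise applying the abstract convergence results developed earlier in the paper.
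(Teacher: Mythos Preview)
Your approach is exactly the paper's: verify {\rm(A1)}--{\rm(A3s)} (and for $(iii)$ also {\rm(A3')}, {\rm(A4)}) from the preceding propositions and lemmas, then invoke Theorems~\ref{thm:lower-spectrum} and~\ref{thm:positive-temperature}. Parts~$(i)$ and~$(ii)$ are handled correctly.

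There is, however, a quantitative slip in part~$(iii)$. When you pass from $\Tr_{\cF_+}e^{-(1-\eps_0)\beta_0\bH}<\infty$ to a condition on $h$, you invoke the crude bound $\bH\ge C^{-1}\dGamma(h+1)-C$ from~\eqref{eq:comparison-bH-dGamma(h+1)}. This costs an uncontrolled multiplicative factor $C^{-1}$ in the exponent: you would only obtain $\Tr e^{-\beta' h}<\infty$ for $\beta'=(1-\eps_0)\beta_0/C$, and after the further loss $h\ge(1-\alpha_1)T-C$ you would need $\Tr e^{-(1-\eps_0)(1-\alpha_1)\beta_0 C^{-1}T}<\infty$. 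Since this exponent may be much smaller than $\beta_0$, the hypothesis $\Tr e^{-\beta_0(\cK+V)}<\infty$ does \emph{not} guarantee it, and your parenthetical ``possibly after decreasing $\beta_0$'' goes in the wrong direction (a smaller $\beta_0$ makes the trace larger, not smaller).

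The remedy is already in the remark you cite: because here $K_1=K_2=K\ge0$ on $\gH_+$ (Proposition~\ref{le:Hartree-theory-trapped-gases}), the sharper inequality $\bH\ge\dGamma(h-\eps)-C_\eps$ holds, and the factor $C^{-1}$ disappears. Combined with the fact that in this model both $\eps_0$ (from the strong-condensation lemma, where one may take any fixed $\eps_0>0$ once $N$ is large) and $\alpha_1$ (since $w$ is infinitesimally form-bounded with respect to $T$) can be chosen arbitrarily small, one gets $(1-\eps_0)(1-\alpha_1)\beta_0'>\beta_0$ for any $\beta_0'$ slightly larger than $\beta_0$, and the hypothesis then suffices for every $\beta>\beta_0$ as claimed.
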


We have a couple of remarks on Theorem \ref{thm:trapped-Bose-gases}.

First, we note that the condition $\Tr e^{-\beta_0 (\cK+V)}<\infty$ is satisfied if $V$ grows fast enough at infinity. For example, if $\cK=-\Delta$, $d=2$ or $d=3$, and 
$$\liminf_{|x|\to \infty} \frac{V(x)}{\log |x|} > \frac{d}{\beta_0},$$
then one has the Golden-Thompson-Symanzik inequality \cite{Golden-65,Thompson-65, Symanzik-65} (see also \cite{DolFelLosPat-06} for an elementary proof)
$$\Tr e^{-\beta_0 (\cK+V)}\le (4\pi \beta_0)^{-d/2} \int_{\R^N} e^{-\beta_0 V(x)} dx<\infty.$$
Moreover, if $\cK=\sqrt{-\Delta+1}-1$, $d=2$ or $d=3$, and $$\liminf_{|x|\to \infty} \frac{V(x)}{|x|}>0,$$
then $\Tr e^{-\beta_0(\K+V)}<\infty$ for all $\beta_0>0$, due to \cite[Theorem 1]{DolFelLosPat-06} and the operator inequality $\sqrt{-\Delta}+|x| \ge \sqrt{-\Delta+|x|^2}$ in $L^2(\R^d)$. The latter is a consequence of the operator monotonicity of the square root and the fact that $\sqrt{-\Delta}|x|+|x|\sqrt{-\Delta} \ge 0$ in $L^2(\R^d)$, see \cite[Theorem 1]{HanSie-12}.

Second, in \cite[Theorem 1]{SanSer-12b} the authors provided upper and lower bounds on the classical free energy at nonzero temperatures, which coincide when $\beta^{-1}\to0$. In the quantum case we are able to identify precisely the limit~\eqref{eq:Coulomb-gas-free-energy} even when $\beta^{-1}>0$, which is given by the Bogoliubov Hamiltonian.

\section{Operators on Fock spaces}\label{sec:operators}

In this preliminary section, we introduce some useful operators on Fock spaces and we consider the unitary $U_N$ defined in (\ref{eq:def-unitary-UN}) in detail. 

For any vector $f \in \gH$, we may define the {\it annihilation operator} $a(f)$ and the {\it creation operator} $a^*(f)$ on the Fock space $\F=\bigoplus_{N=0}^\infty \gH^N$ by the following actions  
\bq \label{eq:def-annihilation-operator}
  a(f)\left( {\sum\limits_{\sigma  \in {\mathfrak{S}_N}} {{f_{\sigma (1)}} } \otimes ... \otimes {f_{\sigma (N)}}} \right) = \sqrt N \sum\limits_{\sigma  \in {\mathfrak{S}_N}} {\left\langle {f,{f_{\sigma (1)}}} \right\rangle } {f_{\sigma (2)}} \otimes ... \otimes {f_{\sigma (N)}},
\eq
\bq \label{eq:def-creation-operator}
{a^*}({f_{N }})\left( {\sum\limits_{\sigma  \in {\mathfrak{S}_{N-1}}} {{f_{\sigma (1)}} } \otimes ... \otimes {f_{\sigma (N-1)}}} \right) = \frac{1}{\sqrt {N}} \sum\limits_{\sigma  \in {\mathfrak{S}_N}} {{f_{\sigma (1)}} }  \otimes ... \otimes {f_{\sigma (N )}}
\eq
for all $f,f_1,...,f_{N}$ in $\gH$, and all $N=0,1,2,...$. These operators satisfy the {\it canonical commutation relations}
\bq  \label{eq:CCR}
[a(f),a(g)]=0,\quad[a^*(f),a^*(g)]=0,\quad [a(f),a^*(g)]= \langle f,g \rangle_{\cH}.
\eq

Note that when $f\in \gH_+$, then $a(f)$ and $a^*(f)$ leave $\F_+$ invariant, and hence we use the same notations for annihilation and creation operators on $\F_+$. The operator-valued distributions $a(x)$ and $a^*(x)$ we have used in (\ref{eq:Bogoliubov}) can be defined so that for all $f\in \gH_+$, 
$$ a(f) = \int_\Omega \overline{f(x)} a(x) \d x \quad {\rm and} \quad a^*(f)= \int_\Omega f(x) a^*(x) \d x.$$
To simplify the notation, let us denote $a_n=a(u_n)$ and $a_n^*=a^*(u_n)$, where $\{u_n\}_{n=0}^\infty$ is an orthonormal basis for $L^2(\Omega)$ such that $u_0$ is the Hartree minimizer and $u_n\in  D(h)$ for every $n=1,2,...$. Then the Bogoliubov Hamiltonian defined in (\ref{eq:Bogoliubov}) can be rewritten as
\bq \label{eq:Bogoliubov-Hamiltonian-m-n}
\mathbb{H}= \sum_{m,n\ge 1}  \left\langle u_m,  (h+K_1) u_n \right \rangle_{L^2(\Omega)} a_m^*a_n + \frac{1}{2} \left\langle u_m\otimes u_n, K_2 \right \rangle_{L^2(\Omega^2)}\, a_m^*a_n^*\nn\\+ \frac{1}{2} \left\langle K_2,u_m\otimes u_n \right \rangle_{L^2(\Omega^2)} a_m a_n. \hfill
\eq
The sums here are not convergent in the operator sense. They are well defined as quadratic forms on the domain given in (\ref{eq:core-quadratic-Bogoliubov}). Since the so-obtained operator is bounded from below (by Theorem~\ref{thm:Bogoliubov-Hamiltonian}), it can then be properly defined as a self-adjoint operator by the Friedrichs extension.

It is also useful to lift operators on $\gH^N$ to the Fock space $\F$. The following identities are well-known; their proofs are elementary and can be found, e.g., in~\cite{Berezin-66} and \cite[Lemmas 7.8 and 7.12]{Solovej-notes}.

\begin{lemma}[Second quantizations of one- and two-body operators] \label{le:second-quantization} Let $\A$ be a symmetric operator on $\gH$ such that $u_n \in D(A)$ for all $n\ge 0$, and let $\omega$ be a symmetric operator on $\gH \otimes \gH$ such that $u_m \otimes u_n \in D(\omega)$ and $\langle u_m\otimes u_n, \omega \,u_p\otimes u_q \rangle = \langle u_n\otimes u_m, \omega \,u_p\otimes u_q \rangle$ for all $m,n,p,q \ge 0$. Then 
$$\dGamma (\A) := 0\oplus  \bigoplus_{N=1}^\infty \sum_{j=1}^N \A_j = \sum_{m,n\ge 0} \langle u_m,\A u_n\rangle_{\gH} \,\,a_m^* a_n$$
and 
$$ 0\oplus 0 \oplus \bigoplus_{N=2}^{\infty} \sum_{1\le i<j \le N} \omega_{ij} = \frac{1}{2} \sum_{m,n,p,q\ge 0} \langle u_m\otimes u_n, \omega \,u_p\otimes u_q \rangle_{\gH^2} \,\,a_m^* a_n^* a_p a_q$$  
as quadratic forms on the domain 
$$\bigcup_{M=0}^\infty \bigoplus_{N=0}^M \bigotimes_{\text{sym}}^N \left( {\rm Span} \{u_0,u_1,...\} \right) \subset \F.$$ 
\end{lemma}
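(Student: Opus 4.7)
The plan is to verify both identities as quadratic form equalities on the dense core
$$\bigcup_{M=0}^\infty \bigoplus_{N=0}^M \bigotimes_{\text{sym}}^N \mathrm{Span}\{u_0, u_1, \ldots\},$$
which it suffices to do on basis vectors of the symmetric tensor form $u_{i_1} \otimes_s \cdots \otimes_s u_{i_N}$, equivalently labelled by occupation numbers $(n_0, n_1, \ldots)$ with $\sum_k n_k = N$. The starting ingredient is the standard action
$$a_n \,|n_0, n_1, \ldots\rangle = \sqrt{n_n}\,|n_0, \ldots, n_n-1, \ldots\rangle, \qquad a_m^*\,|n_0, n_1, \ldots\rangle = \sqrt{n_m+1}\,|n_0, \ldots, n_m+1, \ldots\rangle,$$
which is an immediate consequence of the definitions~(\ref{eq:def-annihilation-operator})--(\ref{eq:def-creation-operator}). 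Both sides of each asserted identity preserve the particle-number grading, so matters reduce to checking matrix elements between basis vectors in a fixed $N$-particle sector.

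For the one-body identity, on a basis vector $u_{i_1}\otimes_s\cdots\otimes_s u_{i_N}$, the operator $\sum_{j=1}^N \A_j$ replaces each tensor factor $u_{i_j}$ by $\A u_{i_j}$ and sums the results. Expanding $\A u_n$ as a formal series $\sum_m \langle u_m, \A u_n\rangle u_m$ (legitimate as a quadratic form when tested against any $u_p$), we see that $\sum_{j=1}^N \A_j$ replaces one occupied state $u_n$ at a time by $u_m$ with weight $\langle u_m, \A u_n\rangle$. On the right-hand side, $a_m^* a_n$ in occupation numbers gives $\sqrt{n_n (n_m + 1 - \delta_{mn})}$ times the state with $n_n$ decreased and $n_m$ increased, and summing $\langle u_m, \A u_n\rangle a_m^* a_n$ produces exactly the same result after accounting for the $n_n$ equivalent ways of choosing which particle in state $u_n$ gets moved. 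This straightforward bookkeeping establishes the first identity at the level of matrix elements and hence as a quadratic form.

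For the two-body identity, the analogous computation on a basis vector writes
$$\omega(u_p\otimes u_q) = \sum_{m,n}\langle u_m\otimes u_n,\omega\, u_p\otimes u_q\rangle\, u_m\otimes u_n,$$
and the symmetry hypothesis $\langle u_m\otimes u_n,\omega u_p\otimes u_q\rangle = \langle u_n\otimes u_m,\omega u_p\otimes u_q\rangle$ lets us freely permute the two creation indices. Applying $\sum_{i<j}\omega_{ij}$ to $u_{i_1}\otimes_s\cdots\otimes_s u_{i_N}$ picks out an unordered pair of occupied states $(u_p,u_q)$ and replaces it by a new pair $(u_m,u_n)$, while $\tfrac12 \sum_{m,n,p,q}\langle u_m\otimes u_n,\omega u_p\otimes u_q\rangle\, a_m^*a_n^* a_p a_q$ performs the same substitution but sums over ordered pairs, the factor $\tfrac12$ correcting the double counting. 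The main obstacle is to handle the boundary cases $p=q$ and $m=n$ correctly: the factor $\sqrt{n_p(n_q - \delta_{pq})}$ from $a_p a_q$ and $\sqrt{(n_m+1)(n_n+1+\delta_{mn})}$ from $a_m^* a_n^*$ must match the combinatorial count of ordered pairs of occupied indices among $(i_1,\ldots,i_N)$, which is $n_p(n_q - \delta_{pq})$, and of ordered target pairs after replacement. A careful case analysis (identical versus distinct indices, in all four positions) shows the factors agree, establishing the second identity as a quadratic form on the stated domain.
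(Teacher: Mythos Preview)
Your approach is correct and is essentially the standard one: the paper does not prove this lemma at all but simply cites \cite{Berezin-66} and \cite[Lemmas 7.8 and 7.12]{Solovej-notes}, and your occupation-number bookkeeping is precisely the argument found in those references. A minor imprecision: when you write the factor $\sqrt{(n_m+1)(n_n+1+\delta_{mn})}$ coming from $a_m^* a_n^*$, the occupation numbers should be those of the intermediate state \emph{after} $a_p a_q$ has acted, not the original ones; this does not affect the conclusion but is worth stating carefully in the diagonal cases you flag.
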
 

The same identities also hold for operators on the Fock space $\F_+$, where we can use the orthonormal basis $\{u_n\}_{n=1}^\infty$ for $\gH_+$. 
In particular, the {\it particle number operator} $\N= \dGamma(1)=\sum_{j=0}^\infty j 1_{\gH^j}$ on $\F$ can be rewritten as $\N= \sum_{n=0}^\infty a_n^*a_n$, and the {particle number operator} on $\F_+$ is  $\N_+=\sum_{n=1}^\infty a_n^*a_n.$ 

By using the second quantization, we can write $H_N:\gH^N\to\gH^N$ as
\bq \label{eq:second-quantization}
H_N = \left(\sum_{m,n\ge 0} T_{mn} a^*_m a_n  +\frac{1}{2(N-1)} \sum_{m,n,p,q\ge 0}W_{mnpq} a^*_m a^*_n a_p a_q\right)_{\big|\gH^N}.
\eq
Here $T_{mn}:= \left\langle { u_m, T u_n } \right\rangle$ and 
$$
W_{mnpq}:=\iint_{\Omega\times \Omega} {{\overline{u_m(x)}\overline{u_n(y)} {w(x-y)} u_p(x) u_q(y)}}\d x\d y.
$$

Since $H_N$ and $\bH$ live in different Hilbert spaces, to compare them we need to use the unitary transformation $U_N: \H^N\to \F^{\le N}_+$ defined before in (\ref{eq:def-unitary-UN}). The action of $U_N$ on annihilation and creation operators is given in the following lemma, whose proof is elementary and is left to the reader. 

\begin{proposition}[Properties of $U_N$] \label{le:act-UN} The operator $U_N$ defined in~\eqref{eq:def-unitary-UN} can be equivalently written as
\bq \label{eq:alt-def-UN}
U_N(\Psi)= \bigoplus_{j=0}^N Q^{\otimes j} \left(\frac{a_0^{N-j}}{\sqrt{(N-j)!}} \Psi\right)
\eq
for all $\Psi\in \gH^N$, and where $Q=1-|u_0\rangle\langle u_0|$. Similarly we have
\begin{equation}
U_N^* \left(\bigoplus_{j=0}^N \phi_j \right) = \sum _{j = 0}^N\frac{{{{(a_0^*)}^{N - j}}}}{{\sqrt {(N - j)!} }}{\phi _j}
\label{eq:alt-def-UN*}
\end{equation}
for all $\phi_j\in \gH_+^j$, $j=0,...,N$. These operators satisfy the following identities on $\F_+^{\le N}$:
\bqq
U_N \, a_0^* a_0 \,U_N^* &=& N- \N_+ ,\hfill\\
U_N\, a^*(f) a_0 \,U_N^* &=& a^*(f) \sqrt{N-\N_+},\hfill\\
U_N \,a_0^* a(f) \,U_N^* &=& \sqrt{N-\N_+}\, a(f),\hfill\\
U_N\, a^*(f) a(g) \,U_N^* &=& a^*(f) a(g),
\eqq
for all $f,g\in \gH_+$.
\end{proposition}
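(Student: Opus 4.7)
The plan is to prove all the claims by reducing them to the formula $u_0^{\otimes(N-k)}\otimes_s\psi_k = \frac{(a_0^*)^{N-k}}{\sqrt{(N-k)!}}\psi_k$ for $\psi_k\in\gH_+^k$, which is a standard identification of symmetric tensor products with creation operators. To establish this identity, I would compare the explicit symmetrization formula for $\otimes_s$ given in Section~\ref{sec:exciting-Hartree} with the iterated action of $a_0^*$ defined in~\eqref{eq:def-creation-operator}; alternatively, one may check that both sides have the same norm, namely $\|\psi_k\|^2$, using $a_0\psi_k=0$ together with $a_0^n(a_0^*)^n\psi_k=n!\psi_k$, and that both lie in the same one-dimensional subspace of $\gH^N$ spanned by symmetrized products of $N-k$ copies of $u_0$ and the function $\psi_k$.

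Once this identification is in hand, formula~\eqref{eq:alt-def-UN*} is immediate from the definition~\eqref{eq:def-unitary-UN} of $U_N^*$. For formula~\eqref{eq:alt-def-UN}, I would write an arbitrary $\Psi\in\gH^N$ as $\Psi=\sum_{k=0}^N \frac{(a_0^*)^{N-k}}{\sqrt{(N-k)!}}\psi_k$ with $\psi_k\in\gH_+^k$, and then compute
\begin{equation*}
\frac{a_0^{N-j}}{\sqrt{(N-j)!}}\Psi=\sum_{k\le j}\frac{\sqrt{(N-k)!}}{\sqrt{(N-j)!}\,(j-k)!}(a_0^*)^{j-k}\psi_k,
\end{equation*}
using the normal-ordering identity $a_0^n(a_0^*)^m\psi_k=\frac{m!}{(m-n)!}(a_0^*)^{m-n}\psi_k$ for $n\le m$ (and zero otherwise), valid because $a_0\psi_k=0$. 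For $k<j$, every term in $(a_0^*)^{j-k}\psi_k$ contains at least one tensor factor $u_0$ after symmetrization, so it is annihilated by $Q^{\otimes j}$. Only the $k=j$ term survives and equals $\psi_j$, which establishes~\eqref{eq:alt-def-UN}.

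For the four conjugation identities, the key observation is that each summand $\frac{(a_0^*)^{N-j}}{\sqrt{(N-j)!}}\phi_j$ in $U_N^*(\bigoplus_j\phi_j)$ is an eigenvector of $a_0^*a_0$ with eigenvalue $N-j$, because $a_0$ annihilates $\phi_j\in\gH_+^j$. This immediately gives $U_N a_0^* a_0 U_N^* = N-\N_+$. For the mixed identities, I would use that $a^*(f)$ and $a(g)$ with $f,g\in\gH_+$ commute with both $a_0^*$ and $a_0$ by the CCR~\eqref{eq:CCR}, together with $a_0(a_0^*)^{N-j}\phi_j=(N-j)(a_0^*)^{N-j-1}\phi_j$. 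A direct computation then shows, for example,
\begin{equation*}
a^*(f)a_0\,U_N^*\left(\bigoplus_j\phi_j\right)=\sum_j\sqrt{N-j}\,\frac{(a_0^*)^{N-(j+1)}}{\sqrt{(N-(j+1))!}}\,a^*(f)\phi_j,
\end{equation*}
and since $a^*(f)\phi_j\in\gH_+^{j+1}$ when $f\in\gH_+$, applying $U_N$ recovers $a^*(f)\sqrt{N-\N_+}(\bigoplus_j\phi_j)$. The third identity follows by taking the adjoint of the second, and the fourth is similar but even simpler since $a^*(f)a(g)$ does not change the number of $u_0$-particles.

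The whole argument is essentially bookkeeping with the CCR and with the embedding $\cF_+^{\le N}\hookrightarrow\gH^N$ provided by $U_N^*$; the only mildly subtle point is the identification of symmetric tensor products with polynomials in $a_0^*$ acting on $\gH_+^k$, and keeping track of combinatorial factors so that the normalizations $\sqrt{(N-j)!}$ come out consistently.
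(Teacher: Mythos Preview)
Your argument is correct and complete. The paper itself does not give a proof of this proposition, stating only that ``its proof is elementary and is left to the reader''; your sketch is exactly the kind of verification the authors have in mind, namely identifying $u_0^{\otimes(N-k)}\otimes_s\psi_k$ with $\frac{(a_0^*)^{N-k}}{\sqrt{(N-k)!}}\psi_k$ via the CCR and then reading off the conjugation identities by commuting $a_0,a_0^*$ past $a^*(f),a(g)$ for $f,g\in\gH_+$.
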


Note that the previous properties are purely algebraic. They do not depend on any special choice of the reference one-body function $u_0\in\gH$. Roughly speaking, the unitary transformation $U_N(\cdot) U_N^*$ leaves any $a_m^*$ or $a_m$ invariant when $m\geq1$, and it replaces each $a_0$, and $a_0^*$, by $\sqrt{N-\N_+}$. The latter is essentially the number $\sqrt{N}$ if $\N_+$ is small in comparison with $N$, on the considered sequence of states. 

By using these identities and the commutation relations (\ref{eq:CCR}), we can compute easily $U_N \A U_N^*$ where $\A$ is any operator on $\gH^N$ which is a (particle-conserving) polynomial in the creation and annihilation operators. For example, 
\bqq
U_N (a_0^* a_0^* a_m a_n)U_N^* &=& U_N (a_0^*a_m) U_N ^* \, U_N(a_0^*a_n)U_N^* \hfill\\
&=& \sqrt{N-\N_+}a_m \sqrt{N-\N_+} a_n.
\eqq
Using that $a(f)\sqrt{N-\N_+}=\sqrt{N-\N_+-1}\,a(f)$ for all $f\in \gH_+$, we obtain
$$U_N (a_0^* a_0^* a_m a_n)U_N^* = \sqrt{(N-\N_+)(N-\N_+-1)}\,a_m a_n.$$

A tedious but straightforward computation shows that 
\bq \label{eq:UHU-detail}
U_N H_N U_N^*- N e_{\rm H} = \sum_{j=0}^4 A_j 
\eq
where
\bqq
A_0 &=& \frac12 W_{0000}\frac{\N_+(\N_+-1)}{N-1}\hfill\\
A_1&=&\sum_{m\ge 1}\left( T_{0m} + W_{000m} \frac{N-\N_+ -1} {N-1} \right) \sqrt{N-\N_+} a_m \nn\hfill\\
&~&+ \sum_{m\ge 1}a_m^* \sqrt{N-\N_+}\left( T_{m0} + W_{m000} \frac{N-\N_+ -1} {N-1} \right) ,
\eqq
\bqq
A_2&=& \sum_{m,n\ge 1}  \left \langle u_m, \left( T-\mu_{\rm H}\right) u_n \right \rangle a_m^* a_n \nn\hfill\\
&~&+\sum_{m,n\ge 1}  \left \langle u_m, \left( |u_0|^2*w + K_1 \right) u_n \right \rangle a_m^* a_n \frac{N-\N_+} {N-1} \nn\hfill\\
&~&+ \frac{1}{2} \sum_{m,n\ge 1} \langle u_m \otimes u_n,K_2 \rangle a_n^* a_m^* \frac{\sqrt{(N-\N_+)(N-\N_+-1)}}{N-1}\nn\hfill\\
&~&+\frac{1}{2} \sum_{m,n\ge 1} \langle K_2, u_m\otimes u_n \rangle \frac{\sqrt{(N-\N_+)(N-\N_+-1)}}{N-1} a_m a_n ,
\eqq
\bqq
A_3 &=& \frac{1}{N-1} \sum_{m,n,p \ge 1} W_{mnp0} a_m^* a_n^* a_p \sqrt{N-\N_+} \nn\hfill\\
&~&+ \frac{1}{N-1} \sum_{m,n,p \ge 1} W_{0pnm} \sqrt{N-\N_+} a_p^* a_n a_m, \nn\hfill\\
A_4 &=& \frac{1}{2(N-1)} \sum_{m,n,p,q \ge 1} W_{mnpq} a_m ^* a_n ^* a_p a_q.
\eqq
Here recall that $e_{\rm H}=T_{00}+(1/2) W_{0000}$ and $\mu_{\rm H}=T_{00}+W_{0000}$. 

In the next section, we shall carefully estimate all the terms of the right side of (\ref{eq:UHU-detail}) to show that  
$$U_N H_N U_N -Ne_{\rm H} \approx \bH$$
in the regime $\N_+ \ll N$.

\section{Bound on truncated Fock space}\label{sec:bounds-truncated-Fock-space}

The main result in this section is the following bound. 

\begin{proposition}[Preliminary bound on truncated Fock space]\label{le:bound-truncated-Fock-space} Assume that {\rm (A1)} and {\rm (A2)} hold true. For any vector $\Phi$ in the quadratic form domain of $\bH$ such that $\Phi \in \F^{\le M}_+$ for some $1\le M\le N$, we have
\begin{align*}
\Big| \left\langle U_N (H_N-Ne_{\rm H})U_N^* \right\rangle _\Phi - \left\langle \mathbb{H} \right\rangle_\Phi \Big| \le C\sqrt {\frac{M}{N}} \left\langle  \mathbb{H} + C \right\rangle_\Phi .
\label{eq:bound-truncated-Fock-space} 
\end{align*}
\end{proposition}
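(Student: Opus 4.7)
The plan is to exploit the explicit term-by-term decomposition
$$U_N(H_N-Ne_{\rm H})U_N^* = A_0+A_1+A_2+A_3+A_4$$
recorded in (\ref{eq:UHU-detail}), compare each $A_j$ with the second-quantized form (\ref{eq:Bogoliubov-Hamiltonian-m-n}) of $\bH$, and bound the difference by $C\sqrt{M/N}\,\langle \bH+C\rangle_\Phi$ on $\F_+^{\le M}$. Only $A_2$ will match $\bH$; the remaining four pieces are error terms.

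Two algebraic simplifications will organize the argument. First, the Hartree equation $hu_0=0$ yields, for every $m\ge 1$,
$$T_{0m}+W_{000m}=\langle u_0,(T+|u_0|^2\ast w)u_m\rangle=\mu_{\rm H}\langle u_0,u_m\rangle=0,$$
so that the coefficient of $\sqrt{N-\N_+}\,a_m$ in $A_1$ reduces to $-W_{000m}\,\N_+/(N-1)$, making $A_1$ manifestly small. Second, on $\F_+^{\le M}$ the multiplicative factors $(N-\N_+)/(N-1)$ and $\sqrt{(N-\N_+)(N-\N_+-1)}/(N-1)$ appearing in $A_2$ differ from $1$ by at most $C\N_+/N\le CM/N$; replacing them by $1$ turns $A_2$ exactly into $\bH$, while the remainder has the same structure as $\bH$ multiplied by $\N_+/N$.

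It then remains to estimate five residuals on $\F_+^{\le M}$. For $A_0$ and the $A_2$-remainder, the extra factor $\N_+/N\le M/N$ together with $\dGamma(h+1)\le C(\bH+C)$ from (\ref{eq:comparison-bH-dGamma(h+1)}) gives a contribution $\le CM/N\cdot\langle \bH+C\rangle_\Phi$. For the simplified $A_1$ I would apply Cauchy--Schwarz,
$$|\langle A_1\rangle_\Phi|\le \frac{C}{N}\,\|a(f_0)\Phi\|\,\|\sqrt{N-\N_+}\,\N_+\Phi\|,\qquad f_0:=Q[(|u_0|^2\ast w)u_0],$$
and combine $\|a(f_0)\Phi\|^2\le \|f_0\|^2\langle\N_+\rangle_\Phi$ with the sharper bound $\|\sqrt{N-\N_+}\,\N_+\Phi\|^2\le NM\langle\N_+\rangle_\Phi$ (which holds since $\N_+^2\le M\N_+$ on $\F_+^{\le M}$) to obtain $C\sqrt{M/N}\,\langle\N_+\rangle_\Phi$. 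For the quartic $A_4$, Assumption (A1) gives $w(x-y)\le \alpha_2(T_x+T_y+C)$ as operators, whence $A_4\le (CM/N)\,\dGamma(T+C)$; combined with $T\le C(h+C)$ (itself a consequence of (A1) tested against $|u_0|^2$) and (\ref{eq:comparison-bH-dGamma(h+1)}) this closes the estimate. Finally, for the cubic term $A_3=(N-1)^{-1}\bigl(\mathcal{B}\sqrt{N-\N_+}+\sqrt{N-\N_+}\,\mathcal{B}^*\bigr)$ with $\mathcal{B}:=\sum_{m,n,p\ge 1} W_{mnp0}\,a_m^*a_n^*a_p$, Cauchy--Schwarz yields $|\langle A_3\rangle_\Phi|\le CN^{-1/2}\|\mathcal{B}^*\Phi\|\,\|\Phi\|$, and a bound $\|\mathcal{B}^*\Phi\|^2\le CM\,\langle \bH+C\rangle_\Phi$ finishes the argument.

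The hard part will be this last bound. The two-body kernel $u_0(y)w(x-y)$ associated with $\mathcal{B}$ is \emph{not} Hilbert--Schmidt in general (only the fully projected $K_2$ is, by (A2)), so a pure HS estimate on $\|\mathcal{B}^*\Phi\|$ is unavailable. The remedy is to unfold $\mathcal{B}$ in position representation as a product of $a^*(x)$, $a^*(y)$ and $Qa(x)$ weighted by $u_0(y)w(x-y)$, and to use (A1) to trade $w^2$ against $T_x+T_y+C$; this produces a factor of $\dGamma(T+C)$ which is then absorbed into $\langle \bH+C\rangle_\Phi$ via $T\le C(h+C)$ and (\ref{eq:comparison-bH-dGamma(h+1)}). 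All other steps reduce to standard Cauchy--Schwarz and the elementary Fock-space inequality $a^*(f)a(f)\le \|f\|^2\,\N_+$.
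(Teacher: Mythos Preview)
Your treatment of $A_0$, $A_1$, $A_2$ and $A_4$ is essentially the paper's; the gap is in $A_3$. You propose to bound $\|\mathcal{B}^*\Phi\|^2$ by ``unfolding in position representation and using (A1) to trade $w^2$ against $T_x+T_y+C$''. But (A1) is only a \emph{linear} relative form bound, $|w(x-y)|\le C(T_x+T_y+C)$; it gives no control of $w(x-y)^2$ in terms of $T$. For instance, with $w(x)=|x|^{-1}$ and $T=\sqrt{1-\Delta}-1$ in three dimensions (which is covered by the theorem), one has $|x|^{-1}\lesssim \sqrt{-\Delta}$ but $|x|^{-2}$ is \emph{not} form-bounded by $\sqrt{-\Delta}$. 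Since $\mathcal{B}$ contains one factor of $w$, the operator $\mathcal{B}\mathcal{B}^*$ contains two, and your state-level Cauchy--Schwarz inevitably produces a $w$-squared object that (A1)--(A2) cannot absorb. The Hilbert--Schmidt hypothesis~\eqref{eq:cond_K} does not help either: it concerns $u_0(x)u_0(y)w(x-y)$, whereas the kernel arising from $\mathcal{B}$ carries only a single $u_0$.

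The paper avoids this by performing Cauchy--Schwarz at the level of two-body \emph{operators} rather than vectors. Writing $w=w_+-w_-$ and second-quantizing the nonnegative operators
\[
Q\otimes(Q\mp\eps P)\,w_{\pm}\,Q\otimes(Q\mp\eps P)+(\text{symmetrized}),
\]
one obtains after applying $U_N(\cdot)U_N^*$ an operator inequality of the form
\[
A_3 \ \le\ \frac{1}{\eps(N-1)}\sum_{m,n,p,q\ge 1}\langle u_m\otimes u_n,|w|\,u_p\otimes u_q\rangle\,a_m^*a_n^*a_pa_q
\;+\;\frac{\eps(N-\N_+)}{N-1}\,\dGamma\big(Q(|u_0|^2*|w|)Q\big).
\]
Both pieces involve $|w|$ only \emph{linearly}, so (A1) suffices: the quartic term is bounded exactly as $A_4$ (your argument, with $|w|$ replacing $w$), and the second term by Lemma~\ref{le:simple-bound-excited-particle}. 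Optimizing $\eps=\sqrt{M/(N-1)}$ yields the factor $\sqrt{M/N}$. This operator-level splitting is the idea missing from your plan.
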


Here for a self adjoint operator $A$ and an element $a$ in a Hilbert space, we write $\langle A \rangle _a$ instead of $\langle a, A a\rangle$ for short. Recall that the quadratic form domain of $\bH$ is the same as that of $\dGamma(h+1)$, on which the quadratic form $U_N (H_N-Ne_{\rm H})U_N^*$ is well-defined.
As an easy consequence of Proposition~\ref{le:bound-truncated-Fock-space}, we can prove the weak convergence in the first statement of Theorem~\ref{thm:lower-spectrum}.

\begin{corollary}[Weak convergence towards $\bH$]\label{cor:weak-CV-bH}
Assume that {\rm (A1)} and {\rm (A2)} hold true. Then we have for all fixed $\Phi,\Phi'$ in the quadratic form domain of the Bogoliubov Hamiltonian $\bH$,
\begin{equation}
\lim_{N\to\ii}\pscal{\Phi',U_N\big(H_N-N\,e_{\rm H}\big)U_N^*\,\Phi}_{\cF_+}=\pscal{\Phi',\bH\,\Phi}_{\cF_+}
\label{eq:weak-CVbis}
\end{equation}
where by convention $U_N^*$ is extended to $0$ outside of $\cF_+^{\leq N}$.
\end{corollary}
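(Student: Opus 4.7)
The plan is to reduce to diagonal matrix elements by polarization, approximate $\Phi$ by its truncation to a finite-particle subspace, apply Proposition~\ref{le:bound-truncated-Fock-space} to the truncated part, and control the tail via a uniform-in-$N$ form bound on $A_N := U_N(H_N - Ne_{\rm H})U_N^*$ that itself comes from Proposition~\ref{le:bound-truncated-Fock-space}. By polarization applied to the Hermitian quadratic form $X_N := A_N - \bH$, it is enough to show $\langle \Phi, X_N \Phi\rangle \to 0$ for every fixed $\Phi$ in the quadratic form domain of $\bH$, which by Theorem~\ref{thm:Bogoliubov-Hamiltonian} coincides with the form domain of $\dGamma(h+1)$.

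The key observation is that Proposition~\ref{le:bound-truncated-Fock-space} with the extremal choice $M = N$ yields
$$|\langle \Phi, (A_N - \bH)\Phi\rangle| \le C\langle \Phi, (\bH+C)\Phi\rangle \qquad \text{for every } \Phi \in \F_+^{\le N}.$$
Combined with the two-sided comparison $C^{-1}\dGamma(h+1) - C \le \bH \le \dGamma(h+C) + C$ from Theorem~\ref{thm:Bogoliubov-Hamiltonian}, this gives $|\bH| \le B$ and $|A_N| \le B$ as Hermitian quadratic forms on $\F_+^{\le N}$, where $B := C(\dGamma(h+1) + C)$ for a suitably large constant $C$. The standard Cauchy--Schwarz argument for Hermitian forms squeezed between $\pm B$ then yields
$$|\langle \Phi_1, A_N \Phi_2\rangle| \le \langle \Phi_1, B\Phi_1\rangle^{1/2}\,\langle \Phi_2, B\Phi_2\rangle^{1/2},$$
and analogously for $\bH$. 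For $A_N$ one uses that $U_N^*$ vanishes outside $\F_+^{\le N}$, so arguments may be replaced by their $\F_+^{\le N}$-projection without increasing the $B$-norm.

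Now fix $\Phi$ in the form domain of $\bH$ and set $\Phi_M := 1_{\F_+^{\le M}}\Phi$. Since $B$ preserves each $n$-particle sector, $B^{1/2}$ commutes with truncation and $\|B^{1/2}(\Phi - \Phi_M)\| \to 0$ as $M \to \infty$; in particular $\langle \Phi_M, \bH\Phi_M\rangle \to \langle \Phi, \bH\Phi\rangle$. For $M \le N$, I would expand
$$\langle \Phi, X_N \Phi\rangle = \langle \Phi_M, X_N \Phi_M\rangle + 2\,\mathrm{Re}\,\langle \Phi_M, X_N(\Phi - \Phi_M)\rangle + \langle \Phi - \Phi_M, X_N(\Phi - \Phi_M)\rangle.$$
Proposition~\ref{le:bound-truncated-Fock-space} bounds the first term by $C\sqrt{M/N}\,\langle \bH + C\rangle_{\Phi_M}$, which vanishes as $N \to \infty$ with $M$ fixed. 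The two remaining terms are bounded, uniformly in $N$, by $C\,\|B^{1/2}\Phi\|\cdot\|B^{1/2}(\Phi - \Phi_M)\| + C\,\|B^{1/2}(\Phi - \Phi_M)\|^2$ via the Cauchy--Schwarz estimate, and vanish as $M \to \infty$. Sending first $N \to \infty$ and then $M \to \infty$ concludes the proof. The main obstacle is precisely the uniform-in-$N$ form bound on $A_N$, which is obtained for free from Proposition~\ref{le:bound-truncated-Fock-space} via the extremal choice $M = N$; without it the cross term involving $\Phi - \Phi_M$ could not be controlled independently of $N$.
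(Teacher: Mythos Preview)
Your proof is correct and follows essentially the same approach as the paper: both reduce to the diagonal case by polarization, truncate to $\Phi_M$, apply Proposition~\ref{le:bound-truncated-Fock-space} for the truncated part, and control the cross and tail terms via Cauchy--Schwarz for a non-negative form obtained from the uniform-in-$N$ bound that Proposition~\ref{le:bound-truncated-Fock-space} yields at $M=N$. The paper packages the uniform bound as non-negativity of $H_N' := \widetilde{H}_N + C_0(\bH+C_0)$ rather than the two-sided form bound $\pm B$ you use, but this is the same idea.
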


\begin{proof}[Proof of Corollary~\ref{cor:weak-CV-bH}]
It suffices to show the statement for $\Phi'=\Phi$. Note that the quadratic form domain of $\bH$ is the same as that of $\dGamma (h+1)$ because of (\ref{eq:comparison-bH-dGamma(h+1)}), and $\dGamma (h+1)$ preserves all the subspaces $\gH_+^m$'s. Therefore, if we denote by $\Phi_M$ the projection of $\Phi$ onto $\F_+^{\le M}$, then 
$$\lim_{M\to \infty} \langle \bH   \rangle_{\Phi_M} =\langle \bH  \rangle_\Phi\qquad\text{and}\qquad \lim_{M\to \infty} \lim_{N\to \infty} \langle \bH+C  \rangle _{\Phi_N-\Phi_M}=0.$$

If we denote $\widetilde{H}_N:=U_N(H_N-N\,e_{\rm H})U_N^*$, then from Proposition \ref{le:bound-truncated-Fock-space} we have 
$$\lim_{M\to \infty} \langle {\widetilde {H}_N - \bH} \rangle_{\Phi_M }= 0 \quad {\rm and} \lim_{M\to \infty} \lim_{N\to \infty} \quad \langle {\widetilde {H}_N} \rangle_{\Phi_N -\Phi_M }= 0.$$
The latter convergence still holds true with $\widetilde {H}_N$ replaced by a non-negative operator $H_N':=\widetilde {H}_N+ C_0 (\bH+C_0)$, where $C_0>0$ is chosen large enough. By using the Cauchy-Schwarz inequality for the operator $H_N'$, we deduce that
$$\lim_{M\to \infty} \lim_{N\to \infty} \langle {\widetilde {H}_N} \rangle_{\Phi}-\langle {\widetilde {H}_N} \rangle_{\Phi_M} =0.$$
We then can conclude that $ \langle {\widetilde {H}_N} \rangle_{\Phi} \to \langle {\bH} \rangle_{\Phi}$ as $N\to \infty$.
\end{proof}

The rest of the section is devoted to the proof of Proposition~\ref{le:bound-truncated-Fock-space}. We shall need the following technical result.

\begin{lemma}\label{le:simple-bound-excited-particle} If {\rm (A1)-(A2)} hold true, then we have the operator inequalities on $\F_+$:
\bqq %\label{eq:*<=T+C}
\dGamma (QTQ) &\le &  \frac{1}{1-\alpha_1} \mathbb{H}+ C\N_+ + C, \hfill\\
\dGamma ( Q(|u_0|^2*|w|)Q) &\le & \frac{\alpha_2}{1-\alpha_1} \mathbb{H}+C\N_+ +  C,
\eqq
where $1>\alpha_1>0$ and $\alpha_2>0$ are given in the relative bound (\ref{eq:bound-w-1}) in Assumption {\rm (A1)}.
\end{lemma}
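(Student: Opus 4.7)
Both bounds follow from a single one-body operator inequality on $\gH_+$ obtained by testing (A1) on a well-chosen symmetric two-body function, then passing to second quantization.

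For the first bound, apply (A1) to the symmetric function $f(x,y):=v(x)u_0(y)+v(y)u_0(x)$ with $v\in\gH_+$. Since $\langle u_0,v\rangle=0$, all cross terms vanish and a direct computation gives $\iint w(x-y)|f|^2\,\d x\,\d y = 2\langle v,[(|u_0|^2*w)+K_1]v\rangle$ and $\iint(T_x+T_y+C)|f|^2\,\d x\,\d y = 2\langle v,Tv\rangle+C\|v\|^2$ (with a possibly larger constant $C$, absorbing $\langle u_0,Tu_0\rangle$). The bound $w+\alpha_1(T_x+T_y+C)\geq 0$ thus yields the one-body inequality $\alpha_1 T+(|u_0|^2*w)+K_1+C\geq 0$ on $\gH_+$, which using $h=T+|u_0|^2*w-\mu_{\rm H}$ rearranges to $(1-\alpha_1)QTQ\leq Q(h+K_1)Q+CQ$ on $\gH_+$. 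Applying $\dGamma(\cdot)$ produces $(1-\alpha_1)\dGamma(QTQ)\leq \dGamma(Q(h+K_1)Q)+C\N_+$ on $\cF_+$. To conclude the first inequality of the lemma one must pass from $\dGamma(Q(h+K_1)Q)$ to $\bH=\dGamma(Q(h+K_1)Q)+P$ by absorbing the pairing $P$ (built from $K_2$) into $\bH$; this is done using the comparison $\bH\geq C^{-1}\dGamma(h+1)-C$ of Theorem~\ref{thm:Bogoliubov-Hamiltonian}, refined so as to preserve the sharp multiplier $1/(1-\alpha_1)$ via the Hilbert--Schmidt property of $K_2$ and the Hessian positivity~\eqref{eq:hK>=eta}.

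The second bound is analogous: applying (A1)'s upper bound $|w|\leq \alpha_2(T_x+T_y+C)$ (with $\alpha_1$ absorbed into $\alpha_2$ in the generic constant if $\alpha_1>\alpha_2$) to the unsymmetrized product $v(x)u_0(y)$ gives the one-body estimate $|u_0|^2*|w|\leq\alpha_2(T+C)$ on $\gH$. Second-quantizing yields $\dGamma(Q(|u_0|^2*|w|)Q)\leq \alpha_2\dGamma(QTQ)+C\N_+$, and combining with the first bound of the lemma immediately produces the second.

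The main obstacle is the control of the pairing $P$: as the excerpt already warns, $P$ is neither bounded nor relatively compact with respect to $\dGamma(h)$, and a naive Cauchy--Schwarz argument only controls it by $\sqrt{\langle\N_+^2\rangle}$, which is too weak. Preserving the sharp factor $1/(1-\alpha_1)$ therefore forces one to invoke both the full Hessian positivity from (A2) and the Bogoliubov-type diagonalization of $\bH$, which is the information packaged in Theorem~\ref{thm:Bogoliubov-Hamiltonian}.
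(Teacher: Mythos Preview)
Your outline is essentially correct but takes an unnecessarily convoluted route, and you overcomplicate the last step.

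First, there is no need to symmetrize the test function. Assumption~(A1) is stated on the full space $L^2(\Omega^2)$, not on the symmetric subspace, so one may simply test the lower bound in~\eqref{eq:bound-w-1} on $v(x)u_0(y)$ with $v\in\gH_+$. This gives $|u_0|^2*w\ge -\alpha_1(T+C)$ on $\gH$, and inserting the definition $h=T+|u_0|^2*w-\mu_{\rm H}$ yields directly $(1-\alpha_1)T\le h+C$. The extra $K_1$ that appears in your symmetrized version is harmless (it is bounded), but it is also entirely avoidable.

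Second, and more importantly, your diagnosis of the ``main obstacle'' is off. You cite the comparison $\bH\ge C^{-1}\dGamma(h+1)-C$ from Theorem~\ref{thm:Bogoliubov-Hamiltonian} and then worry that the unspecified $C^{-1}$ ruins the sharp factor $1/(1-\alpha_1)$, invoking Hessian positivity and Bogoliubov diagonalization to repair this. But none of that is needed. The paper instead uses the sharper bound $\bH\ge \dGamma(h)-C\N_+-C$ (Remark~\ref{rmk:bH-CdGamma(h)}), which has coefficient~$1$ in front of $\dGamma(h)$. This bound follows from the observation that the pairing term alone satisfies $\big|\Re\Tr[K_2\alpha_\Phi]\big|\le C_0\Tr[\gamma_\Phi]+C$, which is a direct consequence of $K_2$ being Hilbert--Schmidt via the argument in~\eqref{eq:Hq-bounded-below} (replace $H$ there by a large constant $C_0$). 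No Hessian positivity~\eqref{eq:hK>=eta} and no diagonalization enter. With this in hand, $\dGamma(QTQ)\le (1-\alpha_1)^{-1}\dGamma(h)+C\N_+\le (1-\alpha_1)^{-1}\bH+C\N_++C$ is immediate.

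Your treatment of the second inequality (test the upper bound in~\eqref{eq:bound-w-1} on $v(x)u_0(y)$, get $|u_0|^2*|w|\le\alpha_2(T+C)$, combine with the first bound) is exactly what the paper does.
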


\begin{proof} Using (\ref{eq:bound-w-1}) we get $|u_0|^2*w \ge -\alpha_1 (T+C)$ and $|u_0|^2*|w| \le \alpha_2 (T+C)$. Consequently, $T \le (1-\alpha_1)^{-1} h+C$ and $|u_0|^2*|w| \le  \alpha_2(1-\alpha_1)^{-1}h +C$. The desired estimates  follows from the lower bound $ \mathbb{H}\ge \dGamma(h)-C\N_+-C$ (see Remark \ref{rmk:bH-CdGamma(h)} in Appendix \ref{apd:Bogoliubov-Hamiltonian}).
\end{proof}

Now we give the

\begin{proof}[Proof of Proposition \ref{le:bound-truncated-Fock-space}] Let $\Phi$ be a normalized vector in the quadratic form domain of $\bH$ such that $\Phi \in \F^{\le M}_+$ for some $1\le M\le N$. Starting from the identity (\ref{eq:UHU-detail}), we shall compare $\langle A_2\rangle_\Phi$ with $\langle \bH\rangle_\Phi$, and show that the
remaining part is negligible. 

\subsubsection*{Step 1. Main part of the Hamiltonian}\label{eq:main-part} 
\begin{lemma}[Bound on $A_2-\bH$]\label{le:bound-A2} We have
\bqq
\Big| \langle A_2 \rangle_\Phi -\langle \bH \rangle_\Phi \Big| \le \frac{M}{N-1} \left( \frac{\alpha_2}{1-\alpha_1} \langle \mathbb{H} \rangle_\Phi+C\langle \N_+ +1 \rangle_\Phi   \right).
\eqq
\end{lemma}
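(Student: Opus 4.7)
The plan is to write $A_2-\bH$ explicitly as three ``error'' terms whose $\N_+$-dependent factors are small on $\cF_+^{\le M}$, and to bound each one using Cauchy--Schwarz together with Lemma~\ref{le:simple-bound-excited-particle}.

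First, since $h=T+|u_0|^2\ast w-\mu_{\rm H}$ and creation operators commute, a direct term-by-term subtraction of $\bH$ from $A_2$ gives
\begin{equation*}
A_2-\bH \;=\; D\cdot\frac{1-\N_+}{N-1}\;+\;B\,c(\N_+)\;+\;c(\N_+)\,B^*,
\end{equation*}
where $D:=\sum_{m,n\ge 1}\langle u_m,(|u_0|^2\ast w+K_1)u_n\rangle a_m^*a_n$, $B:=\tfrac12\sum_{m,n\ge1}\langle u_m\otimes u_n,K_2\rangle a_m^*a_n^*$, and $c(\N_+):=\sqrt{(N-\N_+)(N-\N_+-1)}/(N-1)-1$. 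On $\cF_+^{\le M}$ the two correction factors are pointwise bounded by $|1-\N_+|/(N-1)\le M/(N-1)$ and $|c(\N_+)|\le C(\N_+ +1)/(N-1)$ respectively. The second bound comes from expanding $(N-\N_+)(N-\N_+-1)/(N-1)^2=1+O(\N_+/(N-1))$ and using $|1-\sqrt y|\le|1-y|$ for $y\ge 0$.

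For the diagonal term, I would decompose the self-adjoint one-body operator $A:=|u_0|^2\ast w+K_1$ into its positive and negative parts to get the quadratic-form inequality $|\langle\Phi,\dGamma(QAQ)\Phi\rangle|\le\langle\Phi,\dGamma(Q|A|Q)\Phi\rangle$, and then dominate $|A|\le|u_0|^2\ast|w|+C$ (using that $K_1$ is Hilbert--Schmidt, hence bounded). Lemma~\ref{le:simple-bound-excited-particle} then gives $\dGamma(Q|A|Q)\le\tfrac{\alpha_2}{1-\alpha_1}\bH+C(\N_+ +1)$, which, combined with the $M/(N-1)$ prefactor, produces exactly the advertised contribution $\tfrac{M}{N-1}\bigl(\tfrac{\alpha_2}{1-\alpha_1}\langle\bH\rangle_\Phi+C\langle\N_++1\rangle_\Phi\bigr)$.

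For the pairing terms, a Cauchy--Schwarz step gives $|\langle\Phi,Bc(\N_+)\Phi\rangle|\le\|c(\N_+)\Phi\|\,\|B^*\Phi\|$. The pointwise estimate above yields $\|c(\N_+)\Phi\|\le\tfrac{C}{N-1}\|(\N_+ +1)\Phi\|$, while the Hilbert--Schmidt property of $K_2$ combined with Cauchy--Schwarz in the two-body kernel gives $\|B^*\Phi_k\|^2\le C\,k(k-1)\|\Phi_k\|^2$, hence $\|B^*\Phi\|\le C\|\N_+\Phi\|$. Using the truncation inequalities $\|\N_+\Phi\|^2\le M\langle\N_+\rangle_\Phi$ and $\|(\N_++1)\Phi\|^2\le (M+1)\langle\N_++1\rangle_\Phi$ valid on $\cF_+^{\le M}$, the pairing contribution is bounded by $\tfrac{CM}{N-1}\langle\N_+ +1\rangle_\Phi$ and gets absorbed in the right-hand side. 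The main subtlety is to retain the sharp pointwise bound $|c(\N_+)|\le C(\N_++1)/(N-1)$ in place of the cruder $CM/(N-1)$: the latter would generate a spurious $\sqrt M$ factor after Cauchy--Schwarz and spoil the desired linear dependence on $\langle\N_+ +1\rangle_\Phi$ in the final estimate.
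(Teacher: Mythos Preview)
Your overall strategy matches the paper's: you identify the same three error terms and bound them in essentially the same way (Lemma~\ref{le:simple-bound-excited-particle} for the diagonal piece, Cauchy--Schwarz together with the Hilbert--Schmidt bound on $K_2$ for the pairing piece). There is, however, a genuine gap in your treatment of the diagonal term. The operator inequality $|A|\le |u_0|^2\ast|w|+C$ that you invoke is \emph{false} in general: the absolute value is not operator-monotone, and from a two-sided bound $-B\le A\le B$ (with $B=|u_0|^2\ast|w|+\|K_1\|$) one cannot conclude $|A|\le B$. A simple $2\times 2$ counterexample is $V=\mathrm{diag}(1,0)$, $K=\bigl(\begin{smallmatrix}0&1\\1&0\end{smallmatrix}\bigr)$: then $-(|V|+1)\le V+K\le |V|+1$, yet a direct computation gives $|V+K|=\tfrac{1}{\sqrt5}\bigl(\begin{smallmatrix}3&1\\1&2\end{smallmatrix}\bigr)$ and $\mathrm{diag}(2,1)-|V+K|$ has negative determinant.

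The fix is to bypass $|A|$ altogether. Since $D=\dGamma(QAQ)$ commutes with $\N_+$, write
\[
\Big\langle\Phi,\,D\,\tfrac{1-\N_+}{N-1}\,\Phi\Big\rangle
=\sum_k\frac{1-k}{N-1}\,\langle\Phi_k,\dGamma(QAQ)\Phi_k\rangle
\]
and use the two-sided bound $-B\le A\le B$ directly: after projecting with $Q$ and second-quantizing this gives $|\langle\Phi_k,\dGamma(QAQ)\Phi_k\rangle|\le\langle\Phi_k,\dGamma(QBQ)\Phi_k\rangle$ on each sector, hence the sum is at most $\tfrac{M}{N-1}\langle\Phi,\dGamma(QBQ)\Phi\rangle$, and now Lemma~\ref{le:simple-bound-excited-particle} applies verbatim. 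This is precisely the paper's route (its displayed operator inequality with ``$0\le\cdots$'' is a bit casual, but the stated conclusion~(\ref{eq:A2-bH-first-term}) is obtained exactly this way). Your pairing-term estimate is correct and coincides with the paper's argument~(\ref{eq:A2-bH-second-term}).
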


\begin{proof}

From (\ref{eq:Bogoliubov-Hamiltonian-m-n}) and (\ref{eq:UHU-detail}) we have
\bq \label{eq:A2-bH}
\langle A_2 \rangle_\Phi -\langle \bH \rangle_\Phi  &=&  - \left\langle \dGamma\left(Q(|u_0|^2*w)Q+K_1 \right) \frac{\N_+-1}{N-1} \right\rangle_\Phi \nn\hfill\\
&~& + \Re  \sum_{m,n\ge 1} \langle u_m \otimes u_n, K_2 \rangle  \left \langle  a_m^*a_n^* X\right \rangle_\Phi
\eq
where $X:= \sqrt{(N-\N_+)(N-\N_+-1)}/(N-1)-1$. 

Since $\dGamma\left(Q(|u_0|^2*w)Q+K_1 \right)$ commutes with $\N_+$, we have
$$
0\le \dGamma\left(Q(|u_0|^2*w)Q+K_1 \right)\frac{\N_+-1}{N-1} \le \frac{M}{N-1}\dGamma\left(Q(|u_0|^2*w)Q+K_1 \right)
$$
on $\F_+^{\le M}$. By using Lemma \ref{le:simple-bound-excited-particle} and the boundedness of $K_1$, we can bound the first term of the right side of (\ref{eq:A2-bH}) as
\bq \label{eq:A2-bH-first-term}
\left| \left\langle \dGamma\left(Q(|u_0|^2*w)Q+K_1 \right)\frac{\N_+-1}{N-1}\right\rangle_\Phi \right| \quad \quad \quad \quad \quad \quad \nn\\
\le \frac{M}{N-1} \left( \frac{\alpha_2}{1-\alpha_1} \langle \mathbb{H} \rangle_\Phi +C\langle \N_+ \rangle_\Phi +  C \right). 
\eq

The second term of the right side of (\ref{eq:A2-bH}) can be estimated using the Cauchy-Schwarz inequality
\bq \label{eq:A2-bH-second-term}
&~&  \Big|  \sum_{m,n\ge 1} \langle u_m \otimes u_n, K_2 \rangle \left \langle a_m^* a_n^* X \right \rangle_\Phi \Big|\hfill \nn \\
 &\le & \left(  \sum_{m,n\ge 1}\!\! |\langle u_m \otimes u_n, K_2 \rangle|^2 \right)^{1/2} \left(  \sum_{m,n\ge 1} \left\langle a_m^* a_n^* a_n a_m \right\rangle _\Phi    \right)^{1/2} \left\langle X^2 \right\rangle_\Phi^{1/2} \nn\hfill \\ 
&\le & \left( \int_{\Omega}\int_{\Omega} |K_2(x,y)|^2 \d x \d y \right)^{1/2}  \left\langle \N_+^2 \right\rangle _\Phi^{1/2}  \left\langle \frac{4(\N_++1)^2}{(N-1)^2} \right\rangle_\Phi^{1/2} \nn\hfill\\
&\le & \frac{CM}{N-1} \left\langle \N_+ +1 \right\rangle_\Phi
\eq
Here we have used the fact that the operator $K_2$ is Hilbert-Schmidt and the inequality $(X-1)^2\le 4(\N_++1)^2/(N-1)^2$ and the estimate $\N_+\le M$ on $\F_+^{\le M}$. From (\ref{eq:A2-bH}), (\ref{eq:A2-bH-first-term}) and (\ref{eq:A2-bH-second-term}), the bound in Lemma \ref{le:bound-A2} follows. 
\end{proof}
\subsubsection*{Step 2. Unimportant parts of the Hamiltonian} Now we estimate the other terms of the right side of (\ref{eq:UHU-detail}). First at all, by using $ \N_+^2 \le M \N_+$ on $\F_+^{\le M}$ we get 
\bq \label{eq:bound-A0}
\left| \langle A_0\rangle_\Phi \right|=\left| W_{0000} \frac{ \langle{\N_+(\N_+-1)}\rangle _{\Phi} }{N-1} \right| \le \frac{CM}{N} \langle \N_+ \rangle_\Phi.
\eq
The terms $\langle A_1\rangle_\Phi$, $\langle A_4 \rangle_\Phi$ and $\langle A_3 \rangle_\Phi$ are treated in the next lemmas.

\begin{lemma}[Bound on $A_1$] \label{le:bound-A1} We have
$$
   \big| \langle A_1 \rangle_\Phi \big| \le   C\sqrt{\frac{M}{N}} \langle \N_+ \rangle_\Phi  .
   $$
\end{lemma}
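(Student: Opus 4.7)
The plan rests on a cancellation coming from the Hartree equation. For every $m\ge 1$,
\[
T_{0m}+W_{000m}=\bigl\langle u_0,(T+|u_0|^2*w)u_m\bigr\rangle=\bigl\langle u_0,(h+\mu_{\rm H})u_m\bigr\rangle=0,
\]
because $hu_0=0$ by (A2) and $u_m\in\gH_+=\{u_0\}^{\perp}$. Consequently the coefficient appearing in $A_1$ collapses:
\[
T_{0m}+W_{000m}\frac{N-\N_+-1}{N-1}=W_{000m}\left(\frac{N-\N_+-1}{N-1}-1\right)=-W_{000m}\frac{\N_+}{N-1}.
\]
This is the crucial input; without it the coefficient would be of size $O(1)$, producing a contribution of size $O(\sqrt N)$ that cannot be absorbed.

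Next I want to sum over $m$ in a single annihilation operator. Since $W_{000m}=\langle u_0,(|u_0|^2*w)u_m\rangle$, setting $g:=Q(|u_0|^2*w)u_0\in\gH_+$ gives $\sum_{m\ge 1}W_{000m}a_m=a(g)$, so (noting that the scalar functions of $\N_+$ commute with themselves)
\[
A_1=-\frac{1}{N-1}\Big[\N_+\sqrt{N-\N_+}\,a(g)+a^*(g)\sqrt{N-\N_+}\,\N_+\Big].
\]
A subsidiary point is that $\|g\|<\infty$: Cauchy--Schwarz in the inner convolution integral gives $\|(|u_0|^2*w)u_0\|_{L^2}^2\le\iint|u_0(x)|^2|u_0(y)|^2w(x-y)^2\,\d x\,\d y$, which is finite by the Hilbert--Schmidt hypothesis~\eqref{eq:cond_K} of (A2).

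The last step is Cauchy--Schwarz in Fock space. Write
\[
|\langle A_1\rangle_\Phi|\le\frac{2}{N-1}\bigl\|\sqrt{N-\N_+}\,\N_+\,\Phi\bigr\|\cdot\|a(g)\Phi\|.
\]
On $\F_+^{\le M}$ we have $\sqrt{N-\N_+}\le\sqrt{N}$ and $\N_+\le M$, so $\|\sqrt{N-\N_+}\,\N_+\,\Phi\|^2\le NM\,\langle\N_+\rangle_\Phi$. The standard estimate $a^*(g)a(g)=\dGamma(|g\rangle\langle g|)\le\|g\|^2\N_+$ yields $\|a(g)\Phi\|\le\|g\|\langle\N_+\rangle_\Phi^{1/2}$. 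Multiplying,
\[
|\langle A_1\rangle_\Phi|\le\frac{2\|g\|\sqrt{NM}}{N-1}\langle\N_+\rangle_\Phi\le C\sqrt{\tfrac{M}{N}}\,\langle\N_+\rangle_\Phi.
\]

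The only real obstacle is the cancellation identity $T_{0m}+W_{000m}=0$; once it is exploited, the bound is a routine application of Cauchy--Schwarz plus the elementary operator inequality $a^*(g)a(g)\le\|g\|^2\N_+$. Finiteness of $\|g\|$ is a small side check that uses exactly the Hilbert--Schmidt assumption built into (A2).
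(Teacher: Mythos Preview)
Your proof is correct and follows essentially the same approach as the paper: both use the Hartree-equation cancellation $T_{0m}+W_{000m}=0$ to reduce the coefficient to $-W_{000m}\,\N_+/(N-1)$, then apply Cauchy--Schwarz together with the Hilbert--Schmidt bound $\sum_{m\ge 1}|W_{000m}|^2<\infty$ and the estimate $\N_+^2(N-\N_+)\le MN\,\N_+$ on $\F_+^{\le M}$. The only cosmetic difference is that you first package $\sum_{m\ge 1}W_{000m}a_m$ as $a(g)$ with $g=Q(|u_0|^2*w)u_0$ and invoke $a^*(g)a(g)\le\|g\|^2\N_+$, whereas the paper performs the Cauchy--Schwarz in the $m$-sum directly; these are the same estimate.
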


\begin{proof} By using Hartree's equation 
$\left( T+|u_0|^2*w -\mu_{\rm H} \right) u_0=0$
we obtain
\[ T_{m0}=\left\langle {{u_m},T{u_0}} \right\rangle  =  - \left\langle {{u_m},(|{u_0}{|^2}*w){u_0}} \right\rangle  =  - {W_{m000}}
\]
for all $m\ge 1$. Therefore,
\[ 
\langle A_1 \rangle_\Phi =  -2 \Re   \sum_{m\ge 1} \frac{W_{m000}}{N-1} \left \langle a_m^* \sqrt{N-\N_+} \N_+ \right \rangle_\Phi.
\]
By using the Cauchy–Schwarz inequality we get
\[\begin{gathered}
   \big| \langle A_1 \rangle_\Phi \big| \le   \frac{2}{N-1} \left(\! \sum_{m \ge 1} \!|W_{m000}|^2 \right)^{1/2} \!\!\!\!\left( \! \sum_{m\ge 1}\left \langle a_m^* a_m \right \rangle_\Phi \! \left \langle  \N_+^2(N-\N_+)\right \rangle_\Phi \!\! \right)^{1/2} \hfill \\
   \quad \quad \quad  \le   C\sqrt{\frac{M}{N}} \langle \N_+ \rangle_\Phi  .\hfill \\ 
\end{gathered} \]
Here in the last estimate we have used the bound
\bqq
\sum_{m\ge 1} |W_{m000}|^2 &=& \sum_{m\ge 1} \left| \int_{\Omega}\int_{\Omega} u_m(x) u_0(y) u_0(x)u_0(y) w(x-y) \d x \d y \right|^2 \\
&\le &   \int_{\Omega}\int_{\Omega} |u_0(x)|^2 |u_0(y)|^2 w(x-y)^2 \d x \d y  <\infty
\eqq
and the inequality $\N_+^2(N-\N_+) \le MN \N_+$ on $\F_+^{\le M}$.
\end{proof}

\begin{lemma}[Bound on $A_4$]\label{le:bound-A4} We have 
\bqq
\big| \langle A_4 \rangle_\Phi \big| \le  \frac{M}{N-1} \left( \frac{\alpha_2}{1-\alpha_1} \langle \bH \rangle_\Phi  + C\langle \N_+ \rangle_\Phi +C \right) .
\eqq
\end{lemma}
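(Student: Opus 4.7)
My plan is to view $A_4$, up to the prefactor $1/(N-1)$, as the Fock space lift of the two-body operator $\omega := (Q\otimes Q)\,w\,(Q\otimes Q)$ on $\gH_+\otimes \gH_+$, where $Q = 1-|u_0\rangle\langle u_0|$. Indeed, since $W_{mnpq} = \langle u_m\otimes u_n,(Q\otimes Q)\,w\,(Q\otimes Q)\,u_p\otimes u_q\rangle$ for all $m,n,p,q\ge 1$, Lemma~\ref{le:second-quantization} gives
$$A_4 \;=\; \frac{1}{N-1}\bigoplus_{n\ge 0}\sum_{1\le i<j\le n}\omega_{ij}\qquad\text{as a quadratic form on }\cF_+.$$
The strategy is then to dominate $\omega$ by a one-body quantity using (A1), lift this bound to Fock space, and use the truncation $\Phi\in\cF_+^{\le M}$ to produce the prefactor $M/(N-1)$.

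First I would apply the two-sided relative bound (A1), i.e.\ $-\alpha_1(T_x+T_y+C)\le w(x-y)\le \alpha_2(T_x+T_y+C)$. Conjugating by $Q\otimes Q$ and using $(Q\otimes Q)T_x(Q\otimes Q)=(QTQ)\otimes Q$ (and similarly for $T_y$) yields the quadratic form inequality
$$|\langle\psi,\omega\,\psi\rangle|\le \max(\alpha_1,\alpha_2)\,\langle\psi,\bigl[(QTQ)_1+(QTQ)_2+C\bigr]\psi\rangle\quad\text{on }\gH_+\otimes\gH_+.$$
Summing over pairs $i<j$ on each $n$-particle sector of $\cF_+$ gives
$$\bigg|\sum_{1\le i<j\le n}\omega_{ij}\bigg|\le \max(\alpha_1,\alpha_2)\,\bigl[(\N_+-1)\,\dGamma(QTQ)+\tfrac{C}{2}\N_+(\N_+-1)\bigr].$$
Restricting to $\Phi\in\cF_+^{\le M}$ where $\N_+\le M$ (so $\N_+-1\le M$), this yields
$$|\langle A_4\rangle_\Phi|\le \frac{M}{N-1}\Bigl(\max(\alpha_1,\alpha_2)\,\langle\dGamma(QTQ)\rangle_\Phi+C\,\langle\N_+\rangle_\Phi\Bigr).$$

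To conclude I would invoke Lemma~\ref{le:simple-bound-excited-particle}, which provides $\dGamma(QTQ)\le (1-\alpha_1)^{-1}\bH + C\N_+ + C$, and substitute to obtain the stated bound, with $\alpha_2/(1-\alpha_1)$ in front of $\langle\bH\rangle_\Phi$ and all numerical constants absorbed into $C$. I do not anticipate any serious obstacle here; the only mildly delicate point is that $w$ is sign-indefinite, so one must use the two-sided relative bound rather than a one-sided positivity argument as one would have if $\widehat{w}\ge 0$ were assumed. Otherwise the proof follows exactly the same template as the bound on $A_2-\bH$ in Lemma~\ref{le:bound-A2}, the extra factor $M/(N-1)$ reflecting the fact that $A_4$ is a genuine two-body term whose size on $\cF_+^{\le M}$ is controlled by the product of the small parameter $1/(N-1)$ and the truncation parameter $M$.
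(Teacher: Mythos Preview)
Your approach is essentially the same as the paper's. The paper also views $A_4$ as the second quantization of $(Q\otimes Q)\,w\,(Q\otimes Q)$, uses the relative bound $w\le\alpha_2(T_x+T_y+C)$ from (A1) to dominate it by $\frac{\alpha_2}{N-1}\dGamma(QTQ)(\N_+-1)+\frac{C}{2(N-1)}\N_+(\N_+-1)$, and then applies Lemma~\ref{le:simple-bound-excited-particle} together with $\N_+\le M$ on $\cF_+^{\le M}$; for the lower bound on $A_4$ it simply repeats the argument with $w$ replaced by $-w$. The only cosmetic difference is that you treat both signs simultaneously and pick up $\max(\alpha_1,\alpha_2)$ rather than $\alpha_2$, which is harmless since one may always enlarge $\alpha_2$ in (A1) so that $\alpha_2\ge\alpha_1$. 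One small point to tidy up: the step $(\N_+-1)\dGamma(QTQ)\le M\,\dGamma(QTQ)$ tacitly uses $\dGamma(QTQ)\ge 0$, whereas $T$ is only bounded from below; this is easily fixed by writing $QTQ=(QTQ+C')-C'$ with $C'$ large enough, and the extra term is absorbed into $C\langle\N_+\rangle_\Phi$.
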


\begin{proof} From Assumption (A1) we have
$$ Q\otimes Q \left(\alpha_2 (T_x + T_y +C) -w(x-y) \right) Q\otimes Q \ge 0$$
on $\gH^2$, where $Q=1-| u_0 \rangle \langle u_0 |$. By taking the second quantization (see Lemma \ref{le:second-quantization}) of this two-body operator, we can bound $A_4$ from above by 
\[\begin{gathered}
   \frac{1}{2(N-1)} \sum_{m,n,p,q\ge 1} \left \langle u_m\otimes u_n, \alpha_2 ( T\otimes 1 +  1\otimes T +C) u_p\otimes u_q  \right \rangle a_m^* a_n^* a_p a_q\hfill \\
  = \frac{\alpha_2}{N-1}\dGamma (QTQ) (\N_+-1) + \frac{C}{2(N-1)} \N_+(\N_+-1).\hfill \\ 
\end{gathered} \]
By using Lemma \ref{le:simple-bound-excited-particle} and the fact that $\dGamma (QTQ)$ commutes with $\N_+$, we obtain the operator inequality
$$ A_4 \le \frac{M}{N-1} \left( \frac{\alpha_2} {1-\alpha_1} \bH + C\N_+ +C \right)~~{\rm on}~\F_+^{\le M}.$$
Employing this argument with $w$ replaced by $-w$, we obtain the same upper bound with $A_4$ replaced by $-A_4$, and the bound in Lemma \ref{le:bound-A4} follows.
\end{proof}

\begin{lemma}[Bound on $A_3$] \label{le:bound-A3} We have
\[ \left| \langle A_3 \rangle_\Phi \right| \leqslant \sqrt{\frac{M}{N-1}}\left( \frac{\alpha_2}{1-\alpha_1} \langle \bH \rangle_\Phi  + C\langle \N_+ \rangle_\Phi +C \right) .\]
\end{lemma}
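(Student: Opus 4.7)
Since $A_3$ is the sum of an operator and its adjoint, it suffices to bound $|\langle A_3^{(1)}\rangle_\Phi|$ where $A_3^{(1)}:=\tfrac{1}{N-1}\sum_{m,n,p\ge 1}W_{mnp0}\,a_m^*a_n^*a_p\widetilde{N}$, with $\widetilde{N}:=\sqrt{N-\N_+}$. Using the definition of $W_{mnp0}$ and writing the creation/annihilation operators of excited particles as formal distributions $\tilde{a}^{\#}(x)$ (i.e.\ $\tilde{a}(f)=a(Qf)$), we can rewrite
\[
\langle A_3^{(1)}\rangle_\Phi = \frac{1}{N-1}\iint w(x-y)\,u_0(y)\,\bigl\langle \tilde{a}(y)\tilde{a}(x)\Phi,\;\tilde{a}(x)\widetilde{N}\Phi\bigr\rangle\,\d x\,\d y .
\]

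The key idea is to apply Cauchy--Schwarz on $L^2(\d x\,\d y)$ with the symmetric splitting $|w(x-y)|\,|u_0(y)| \le |w(x-y)|^{1/2}\cdot|w(x-y)|^{1/2}|u_0(y)|$, so that \emph{each} factor carries only a single power of $|w|$ (which is controllable by (A1)):
\[
\bigl|\langle A_3^{(1)}\rangle_\Phi\bigr|\le \frac{1}{N-1}\sqrt{I_A\,I_B},
\]
where $I_A:=\iint|w(x-y)|\,\|\tilde{a}(y)\tilde{a}(x)\Phi\|^2\,\d x\,\d y$ and $I_B:=\iint|w(x-y)|\,|u_0(y)|^2\,\|\tilde{a}(x)\widetilde{N}\Phi\|^2\,\d x\,\d y$.

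For $I_A$, I recognize the integral as $\langle \Phi,\dGamma^{(2)}(|w|)\Phi\rangle$ (up to a factor 2). Assumption (A1) gives $|w(x-y)|\le(\alpha_1+\alpha_2)(T_x+T_y+C)$ as a two-body operator, and on $\F_+^{\le M}$ we have $(\N_+-1)\dGamma(T)\le M\dGamma(T)$ and $\N_+(\N_+-1)\le M\N_+$; combined with Lemma~\ref{le:simple-bound-excited-particle} this yields $I_A\le CM\bigl(\langle\bH\rangle_\Phi+\langle\N_+\rangle_\Phi+1\bigr)$. For $I_B$, I first perform the $y$-integral to obtain $I_B=\langle\widetilde{N}\Phi,\dGamma\bigl(Q(|w|*|u_0|^2)Q\bigr)\widetilde{N}\Phi\rangle$; since the second-quantized one-body operator commutes with $\N_+$ (hence with $\widetilde{N}$) and $\widetilde{N}^2=N-\N_+\le N$, Lemma~\ref{le:simple-bound-excited-particle} gives $I_B\le N\bigl(\tfrac{\alpha_2}{1-\alpha_1}\langle\bH\rangle_\Phi+C\langle\N_+\rangle_\Phi+C\bigr)$.

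Multiplying the two bounds, using $MN/(N-1)^2\le CM/(N-1)$, taking the square root and applying AM--GM (absorbing the factor $(1+\tfrac{\alpha_2}{1-\alpha_1})$ into the generic constant or balancing the Cauchy--Schwarz with an appropriate parameter $\varepsilon$) yields the stated bound. The main technical obstacle is the choice of splitting in the Cauchy--Schwarz step: a naive splitting produces either $w^2$ (not controlled by (A1)) or a term of size $M^2$; the symmetric splitting $|w|^{1/2}\cdot|w|^{1/2}|u_0|$ is precisely what ensures that each factor reduces to a one- or two-body object already handled in Lemma~\ref{le:simple-bound-excited-particle}, and that the pair structure in $I_A$ supplies one power of $M$ while $\widetilde{N}^2\le N$ in $I_B$ supplies one power of $N$, combining to the correct $\sqrt{M/(N-1)}$ prefactor.
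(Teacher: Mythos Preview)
Your argument is correct and is essentially the same strategy as the paper's, just written in position space rather than in operator form. In both cases the cubic term $A_3$ is split, via a Cauchy--Schwarz/$\eps$-inequality, into two pieces: a genuinely two-body term $\iint|w(x-y)|\,\|\tilde a(y)\tilde a(x)\Phi\|^2$ (your $I_A$, the paper's left-hand side of (5.11)) and a one-body term $\dGamma\bigl(Q(|u_0|^2*|w|)Q\bigr)$ weighted by $N-\N_+$ (your $I_B$, the paper's (5.12)). The paper obtains this splitting by second-quantizing the nonnegative operator $(Q\otimes(Q\mp\eps P))w_\pm(Q\otimes(Q\mp\eps P))$ and then conjugating by $U_N$; you obtain it directly from the pointwise Cauchy--Schwarz $|w|\,|u_0| = |w|^{1/2}\cdot|w|^{1/2}|u_0|$. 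Your route is slightly more economical, since it avoids the decomposition $w=w_+-w_-$ and works from the start in $\F_+$ without passing through the $U_N$ identities again.

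One minor point: your bound on $I_A$ uses $|w|\le(\alpha_1+\alpha_2)(T_x+T_y+C)$, which after Lemma~\ref{le:simple-bound-excited-particle} produces a constant $\tfrac{\alpha_1+\alpha_2}{1-\alpha_1}$ in front of $\langle\bH\rangle_\Phi$ rather than the stated $\tfrac{\alpha_2}{1-\alpha_1}$; after AM--GM with $I_B$ you therefore do not recover the precise constant in the lemma. This is harmless for the application (in Step~3 of the proof of Proposition~\ref{le:bound-truncated-Fock-space} everything is absorbed into a generic $C$ via $\N_+\le C(\bH+C)$), but you should either relax the constant in the statement you are proving or note explicitly that only $C\langle\bH+C\rangle_\Phi$ is obtained.
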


\begin{proof} Let us write $w=w_+ - w_-$ where $w_+=\max\{w,0\}$ and $w_-=\max\{-w,0\}$. By taking the second quantization (see Lemma \ref{le:second-quantization}) of the non-negative two-body operator 
\bqq
 \Big( Q\otimes (Q-\eps P) w_+ Q\otimes (Q-\eps P) + (Q-\eps P) \otimes Q w_+ (Q-\eps P) \otimes Q \Big) \quad \quad \hfill\\
 \quad \quad + \Big(  Q\otimes (Q+\eps P) w_- Q\otimes (Q+\eps P) + (Q+\eps P) \otimes Q w_- (Q+\eps P) \otimes Q \Big)
\eqq
where $Q=1-P=1-| u_0 \rangle \langle u_0|$ and $\eps>0$, we obtain the following Cauchy-Schwarz  inequality
\bqq
 &~&  \sum_{m,n,p \ge 1}  \langle u_m \otimes u_n, w u_p \otimes u_0 \rangle a_m^* a_n^* a_p a_0 \quad \quad\quad\quad\quad\quad \hfill\\
&~&\quad\quad\quad\quad\quad\quad+ \sum_{m,n,p \ge 1} \langle u_0 \otimes u_p, w u_n \otimes u_m \rangle a_0^* a_p^* a_n a_m \hfill \\
 &\le &  \eps^{-1}\sum_{m,n,p,q \ge 1} \langle u_m \otimes u_n, |w| u_p \otimes u_q \rangle a_m^* a_n^* a_p a_q \quad \hfill\\
&~&  \quad\quad\quad\quad\quad\quad+ \eps \sum_{m,n \ge 1} \langle u_m \otimes u_0, |w| u_n \otimes u_0 \rangle a_m^* a_0^* a_n a_0.
\eqq
After performing the unitary transformation $U_N (\cdot) U_N^*$ we find that
\bq \label{eq:upper-bound-A3}
A_3 &\le &  \frac{1}{\eps(N-1)}\sum_{m,n,p,q \ge 1} \langle u_m \otimes u_n, |w| u_p \otimes u_q \rangle a_m^* a_n^* a_p a_q \quad \nn\hfill\\
&~&  \quad\quad\quad\quad+ \frac{\eps}{N-1} \dGamma (Q(|u_0|^2*|w|)Q) (N-\N_+).
\eq
for every $\eps>0$. 

On the other hand, by the same proof of Lemma \ref{le:bound-A4} we have
\bq \label{eq:A4-abs}
\frac{1}{N-1}\sum_{m,n,p,q \ge 1} \langle u_m \otimes u_n, |w| u_p \otimes u_q \rangle \langle a_m^* a_n^* a_p a_q \rangle_\Phi \\
\le \frac{M}{N-1} \left( \frac{\alpha_2}{1-\alpha_1} \langle \bH \rangle_\Phi  + C\langle \N_+ \rangle_\Phi +C \right).
\eq
Moreover, by using Lemma \ref{le:simple-bound-excited-particle} we get 
$$\dGamma(Q(|u_0|^2*|w|)Q)(N-\N_+) \le (N-1) \dGamma(Q(|u_0|^2*|w|)Q)$$
and 
\bq \label{eq:bound-direct}
 \left \langle \dGamma (Q(|u_0|^2*|w|)Q) \frac{N-\N_+}{N-1} \right \rangle _\Phi &\le & \left \langle \dGamma (Q(|u_0|^2*|w|)Q)  \right \rangle _\Phi 
\hfill\\
&\le &   \left( \frac{\alpha_2}{1-\alpha_1} \langle \bH \rangle_\Phi  + C\langle \N_+ \rangle_\Phi +C \right).\nn
\eq
From (\ref{eq:upper-bound-A3}), (\ref{eq:A4-abs}) and (\ref{eq:bound-direct}), we can deduce that
\bqq 
\langle A_3 \rangle_\Phi \le \sqrt{\frac{M}{N-1}}\left( \frac{\alpha_2}{1-\alpha_1} \langle \bH \rangle_\Phi  + C\langle \N_+ \rangle_\Phi +C \right).
\eqq
By repeating the above proof with $w$ replaced by $-w$, we obtain the same upper bound on $-\langle A_3 \rangle_\Phi$ and then finish the proof of Lemma \ref{le:bound-A3}. 
\end{proof}

\subsubsection*{Step 3. Conclusion}

Using Lemmas \ref{le:bound-A2}, \ref{le:bound-A1}, \ref{le:bound-A4}, \ref{le:bound-A3} and (\ref{eq:bound-A0}), we can conclude from (\ref{eq:UHU-detail}) that
$$
   \Big| {{{\left\langle {{U_N}(H_N-Ne_{\rm H})U_N^*} \right\rangle }_\Phi } - {{\left\langle \mathbb{H} \right\rangle }_\Phi }} \Big| \le \sqrt{\frac{M}{N-1}}\left(\frac{3\alpha_2}{1-\alpha_1}\langle \bH \rangle_\Phi  + C\langle \N_+ +1 \rangle_\Phi \right)
   $$
where $\alpha_1$ and $\alpha_2$ are the constants appearing in the relative bound in (A1). Since $\N_+\le C(\bH+C)$ due to (\ref{eq:comparison-bH-dGamma(h+1)}), the estimate in Proposition \ref{le:bound-truncated-Fock-space} follows.
\end{proof}

\section{Localization in Fock space}\label{sec:localization}

From Proposition \ref{le:bound-truncated-Fock-space}, we have
$$ \lim_{N\to \infty} \langle  U_N H_N U_N^* -Ne_H - \mathbb{H}\rangle_\Phi =0 $$
for every $\Phi\in \F_+^{\le M}$ with $M$ fixed. In the next step, we want to localize a state in $\F_+^{\le N}$ into the smaller truncated Fock space $\F_+^{\le M}$ with $M \ll N$, without changing the energy too much. The following result is an adaption of the localization method used by Lieb and Solovej in \cite[Theorem A.1]{LieSol-01}. 

\begin{proposition}[Localization of band operators in $\F_+$]\label{pro:localization} Let $\A$ be a non-negative operator on $\F$ such that $P_jD(\A)\subset D(\A)$ and $P_i \A P_j=0$ when $|i-j|$ is larger than some constant $\sigma$, where $P_j$ is the projection onto $\gH_+^j$. Let $0\leq f,g\leq 1$ be smooth, real functions such that $f^2+g^2\equiv1$, $f(x)=1$ for $|x|\leq 1/2$ and $f(x)=0$ for $|x|\geq 1$ and let $f_M$ and $g_M$ be the localization operators on $\F_+$ defined by
$$ f_M =f (\N_+/M)=\sum_{j=0}^\infty f(j/M) P_j \quad {and}\quad g_M=g(\N_+/M)=\sum_{j=0}^\infty g(j/M) P_j.$$
%\smallskip

\noindent $(i)$ We have
\bq \label{eq:IMS-band-operator}
 - \frac{C_f \sigma^3}{M^2} \A_0\le \A-f_M \A f_M -g_M \A g_M \le \frac{C_f \sigma^3}{M^2} \A_0
 \eq
where $\A_0=\sum P_j \A P_j$ is the diagonal part of $\A$ and $C_f=2(\|f'\|^2_{\infty}+\|g'\|^2_{\infty})$.
\smallskip

\noindent $(ii)$ Let $Y$ be a finite-dimensional subspace of $D(\A)$ such that $||g_M\Phi ||^2<(\dim Y)^{-1}$ for every normalized vector $\Phi\in Y$, then $\dim (f_M Y)=\dim Y$.
\end{proposition}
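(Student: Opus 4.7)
The plan is to exploit the band structure of $\A$ together with the identity $f^2+g^2\equiv 1$ in order to rewrite the error $\A-f_M\A f_M-g_M\A g_M$ as a small off-diagonal piece that is dominated by the diagonal operator $\A_0$.

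For (i), I would start from the band hypothesis $P_i\A P_j=0$ whenever $|i-j|>\sigma$, which gives the quadratic form identity
\[
\A-f_M\A f_M-g_M\A g_M=\!\!\sum_{|i-j|\le\sigma}\!\!c_{ij}\,P_i\A P_j,\quad c_{ij}:=1-f(i/M)f(j/M)-g(i/M)g(j/M).
\]
The key algebraic observation is that $f(k/M)^2+g(k/M)^2=1$ for every $k$, which yields
\[
c_{ij}=\tfrac12\bigl[(f(i/M)-f(j/M))^2+(g(i/M)-g(j/M))^2\bigr].
\]
In particular $c_{ii}=0$, and the mean value theorem gives $|c_{ij}|\le C_f\sigma^2/(4M^2)$ whenever $|i-j|\le\sigma$. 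For any $\Phi$ in the form domain of $\A$, I would then apply the Cauchy--Schwarz inequality for the non-negative form $(u,v)\mapsto\langle u,\A v\rangle$, combined with AM-GM,
\[
|\langle P_i\Phi,\A P_j\Phi\rangle|\le\tfrac12\bigl(\langle P_i\Phi,\A P_i\Phi\rangle+\langle P_j\Phi,\A P_j\Phi\rangle\bigr),
\]
and use that each row of the band contains at most $2\sigma+1$ nonzero entries to conclude
\[
\bigl|\langle\Phi,(\A-f_M\A f_M-g_M\A g_M)\Phi\rangle\bigr|\le\frac{C_f\sigma^3}{M^2}\langle\Phi,\A_0\Phi\rangle,
\]
which yields both sides of the two-sided bound.

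For (ii), since $f_M$ and $g_M$ are both functions of $\N_+$ they commute and satisfy $f_M^2+g_M^2=1$ on $\F_+$. Hence for any normalized $\Phi\in Y$,
\[
\|f_M\Phi\|^2=1-\|g_M\Phi\|^2>1-\frac{1}{\dim Y}\ge 0,
\]
so $f_M$ is injective on $Y$, and therefore $\dim(f_M Y)=\dim Y$.

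The only real technical point is to justify the off-diagonal Cauchy--Schwarz step when $\A$ is unbounded: the hypothesis $P_jD(\A)\subset D(\A)$ ensures that each $P_j\Phi$ lies in $D(\A)$ (and hence in the form domain), so the quantities $\langle P_j\Phi,\A P_j\Phi\rangle$ are well defined and the rearrangement of the sum is legitimate. Tracking the precise constant is then routine bookkeeping.
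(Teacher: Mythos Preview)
Your argument for (i) is correct and is essentially the same as the paper's: the paper phrases the first step as the IMS identity $\A=f_M\A f_M+g_M\A g_M+\tfrac12([f_M,[f_M,\A]]+[g_M,[g_M,\A]])$, but computing the matrix elements of the double commutator gives exactly your coefficients $c_{ij}=\tfrac12[(f(i/M)-f(j/M))^2+(g(i/M)-g(j/M))^2]$, and the paper then uses the same mean-value bound and the same AM--GM estimate $|\langle P_i\Phi,\A P_j\Phi\rangle|\le\tfrac12(\langle P_i\Phi,\A P_i\Phi\rangle+\langle P_j\Phi,\A P_j\Phi\rangle)$ followed by the row-count $\le 2\sigma$.

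Your argument for (ii) is correct and is genuinely simpler than the paper's. The paper fixes an orthonormal basis $\{\Phi_i\}_{i=1}^L$ of $Y$, expands $\|\sum_i\alpha_i f_M\Phi_i\|^2$, uses $\langle f_M\Phi_i,f_M\Phi_j\rangle=\delta_{ij}-\langle g_M\Phi_i,g_M\Phi_j\rangle$, and bounds the cross terms by Cauchy--Schwarz to obtain $(1-L\delta)\sum_i|\alpha_i|^2>0$, thereby proving linear independence of $\{f_M\Phi_i\}$. Your observation that $\|f_M\Phi\|^2=1-\|g_M\Phi\|^2>0$ for every normalized $\Phi\in Y$ already gives $\ker(f_M|_Y)=\{0\}$ directly, bypassing the basis computation. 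In fact your route shows that the weaker hypothesis $\|g_M\Phi\|^2<1$ (rather than $<(\dim Y)^{-1}$) already suffices for the dimension statement; the paper's expansion genuinely needs the stronger bound because of the $L-1$ cross terms.
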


The reader should really think of the discrete Laplacian ($\sigma=1$) in which case the inequality (\ref{eq:IMS-band-operator}) is nothing but a discrete version of the IMS formula \cite[Theorem 3.2]{CycFroKirSim-87}. The statement on the dimension of the localized space $f_M Y$ will be useful to control eigenvalues via the min-max formula. A proof of Proposition \ref{pro:localization} can be found in Appendix \ref{apd:localization}. In the following we have two applications of this result in our particular situation.

Our first application is a localization for $\mathbb{H}$.

\begin{lemma}[Localization for $\mathbb{H}$] \label{le:localization-H} For every $1\le M \le N$ we have
$$-\frac{C}{M^2} (\bH +C ) \le \bH - f_M \bH f_M - g_M \bH g_M  \le \frac{C}{M^2} (\bH +C ). $$
\end{lemma}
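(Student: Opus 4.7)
The plan is to recognize $\bH$ (after a constant shift) as a non-negative band operator on $\F_+$ with respect to the particle-number grading and then apply Proposition \ref{pro:localization} directly. First I would read off from \eqref{eq:Bogoliubov-rigorous} (or equivalently \eqref{eq:Bogoliubov-Hamiltonian-m-n}) the block structure of $\bH$: the piece $\dGamma(h+K_1)$ preserves each layer $\gH_+^n$, while the $K_2$-pairing term shifts the particle number by exactly $\pm 2$. Hence $P_i\bH P_j=0$ whenever $|i-j|>2$, so $\bH$ is a band operator of bandwidth $\sigma=2$ whose diagonal part is $\bH_0=\dGamma(h+K_1)$. The required invariance $P_j D(\bH)\subset D(\bH)$ is immediate from the description \eqref{eq:form-domain-quadratic-Bogoliubov} of the form domain.

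Second, since Proposition \ref{pro:localization} requires a non-negative operator, I would choose $C_0>0$ large enough (using the lower bound in \eqref{eq:comparison-bH-dGamma(h+1)}) so that $\A:=\bH+C_0\ge 0$ on $\F_+$. The operator $\A$ remains a band operator with the same $\sigma=2$, and its diagonal part is $\A_0=\dGamma(h+K_1)+C_0$. Applying Proposition \ref{pro:localization}$(i)$ yields
\begin{equation*}
-\frac{C}{M^2}\,\A_0 \;\le\; \A-f_M\A f_M-g_M\A g_M \;\le\; \frac{C}{M^2}\,\A_0.
\end{equation*}
Since $C_0$ is a scalar and $f_M^2+g_M^2\equiv 1$, we have $f_M C_0 f_M+g_M C_0 g_M=C_0$; consequently the middle expression simplifies to $\bH-f_M\bH f_M-g_M\bH g_M$.

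Third, to cast the bounds in the form stated in Lemma \ref{le:localization-H} I would absorb $\A_0$ into $\bH+C$. Combining $\dGamma(h+K_1)\le \dGamma(h+1)+C\N_+$ (using that $K_1$ is bounded, being Hilbert--Schmidt) with $\dGamma(h+1)\le C(\bH+C)$ from \eqref{eq:comparison-bH-dGamma(h+1)} and $\N_+\le C(\bH+C)$ (which in turn follows from \eqref{eq:comparison-bH-dGamma(h+1)} together with Theorem \ref{thm:Bogoliubov-Hamiltonian}), one obtains $\A_0\le C(\bH+C)$; inserted above this yields Lemma \ref{le:localization-H}.

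The only point that requires some care is verifying the exact bandwidth $\sigma=2$ rigorously from \eqref{eq:Bogoliubov-rigorous} rather than formally from the creation/annihilation expression \eqref{eq:Bogoliubov}; once this is settled, the rest reduces to arithmetic built on the already-established comparison $\bH\sim\dGamma(h+1)$ of Theorem \ref{thm:Bogoliubov-Hamiltonian}.
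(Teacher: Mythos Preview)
Your proposal is correct and follows essentially the same route as the paper: apply Proposition~\ref{pro:localization} to a constant shift of $\bH$ (the paper uses $\A=\bH-\lambda_1(\bH)$ rather than $\bH+C_0$, but this is immaterial) with $\sigma=2$, identify the diagonal part as $\dGamma(h+K_1)$, and absorb it into $C(\bH+C)$ via~\eqref{eq:comparison-bH-dGamma(h+1)}. Your slightly more detailed justification of the bound $\dGamma(h+K_1)\le C(\bH+C)$ is fine.
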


\begin{proof} We can apply Proposition \ref{pro:localization} with $\A=\mathbb{H}-\lambda_1(\mathbb{H})$ and $\sigma=2$. Note that the diagonal part of $\bH$ is nothing but $\dGamma(h+K_1)$ which can be bounded from above by $C(\bH+C)$, due to  (\ref{eq:comparison-bH-dGamma(h+1)}).
\end{proof}

Now we turn to a localization for $\widetilde {H}_N := U_N (H_N-Ne_{\rm H}) U_N^*$.

\begin{lemma}[Localization for $\widetilde {H}_N $]\label{le:localization-HN} For every $1\le M \le N$ we have 
\bqq -\frac{C}{M^2} \Big( C\dGamma(h)_{| \F_+^{\le N}}+C-\lambda_1(\widetilde {H}_N) \Big) \le  \widetilde {H}_N -  f_M \widetilde {H}_N f_M - g_M \widetilde {H}_N g_M \le \hfill\\ \le   \frac{C}{M^2} \Big(C\dGamma(h)_{| \F_+^{\le N}}+C -\lambda_1(\widetilde {H}_N) \Big).
\eqq
\end{lemma}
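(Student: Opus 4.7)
The proof follows the template of Lemma 7.2. The plan is to apply Proposition 7.1 to the nonnegative operator $\A := \widetilde{H}_N - \lambda_1(\widetilde{H}_N)$ on $\F_+^{\le N}$. Since $f_M^2+g_M^2 \equiv 1$ as operators, the constant $\lambda_1(\widetilde{H}_N)$ cancels in the IMS-type identity
\[\A - f_M\A f_M - g_M\A g_M \,=\, \widetilde{H}_N - f_M\widetilde{H}_N f_M - g_M\widetilde{H}_N g_M,\]
so the full claim will reduce to controlling $\A_0 = (\widetilde{H}_N)_0 - \lambda_1(\widetilde{H}_N)$ (which is $\geq 0$ since $P_j\A_0P_j = P_j\A P_j \geq 0$) by the right-hand side of the bound.

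The first verification is that $\widetilde{H}_N$ is a band operator with $\sigma = 2$. From the decomposition $\widetilde{H}_N = A_0 + A_1 + A_2 + A_3 + A_4$ in (6.8), inspection shows that $A_0$ and $A_4$ preserve $\N_+$, that $A_1$ and $A_3$ shift $\N_+$ by exactly $\pm 1$, and that $A_2$ shifts $\N_+$ by $0$ or $\pm 2$. Hence $P_i\widetilde{H}_N P_j = 0$ whenever $|i-j| > 2$, and Proposition 7.1 gives $|\A - f_M\A f_M - g_M\A g_M| \leq C M^{-2}\A_0$.

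The core step is then to establish $(\widetilde{H}_N)_0 \leq C\dGamma(h) + C$ on $\F_+^{\leq N}$, which then yields $\A_0 \leq C\dGamma(h) + C - \lambda_1(\widetilde{H}_N)$. We read off the diagonal part from the formulas after (6.8):
\[(\widetilde{H}_N)_0 = A_0 + (A_2)_{\mathrm{diag}} + A_4,\]
where, writing $h = Q(T-\mu_{\rm H})Q + Q(|u_0|^2*w)Q$ on $\gH_+$, one computes
\[(A_2)_{\mathrm{diag}} = \dGamma(h) - \dGamma(Q(|u_0|^2*w)Q)\tfrac{\N_+-1}{N-1} + \dGamma(K_1)\tfrac{N-\N_+}{N-1}.\]
Each piece is handled as in the proof of Lemmas 5.2 and 5.5: Assumption (A1) gives $|u_0|^2*|w| \leq \alpha_2(T+C)$ and $T \leq (1-\alpha_1)^{-1}h + C$ on $\gH_+$, hence $\dGamma(Q(|u_0|^2*w)Q) \leq C\dGamma(h) + C\N_+$; the operator $K_1$ is Hilbert--Schmidt and in particular bounded, so $\dGamma(K_1) \leq C\N_+$; $A_0$ is trivially bounded by $C\N_+$ on $\F_+^{\le N}$; and $A_4$ is bounded by the second-quantized version of $Q^{\otimes 2}(\alpha_2(T_x+T_y+C) \pm w)Q^{\otimes 2} \geq 0$, yielding $\pm A_4 \leq \alpha_2\dGamma(QTQ)(\N_+-1)(N-1)^{-1} + C(N-1)^{-1}\N_+(\N_+-1) \leq C\dGamma(h) + C\N_+$. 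Crucially the factors $(\N_+-1)/(N-1)$ and $(N-\N_+)/(N-1)$ are bounded by $1$ on $\F_+^{\le N}$, so no $N$-dependence survives in the coefficients.

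The main (mild) obstacle is bookkeeping: one must ensure that the $\N_+$ terms produced by the various estimates can be absorbed into the right-hand side. This is fine because $\dGamma(h)+\N_+ = \dGamma(h+1)$ differs from $\dGamma(h)$ only by a shift that fits into the generic constant $C$ in the conclusion. Putting everything together, $(\widetilde{H}_N)_0 \leq C\dGamma(h) + C$ on $\F_+^{\le N}$, so $\A_0 \leq C\dGamma(h) + C - \lambda_1(\widetilde{H}_N)$, and feeding this into the Proposition 7.1 bound produces the two-sided inequality of Lemma 7.3.
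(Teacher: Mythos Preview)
Your proof is correct and follows the same strategy as the paper: apply Proposition~\ref{pro:localization} to $\A=\widetilde{H}_N-\lambda_1(\widetilde{H}_N)$ with $\sigma=2$, then bound the diagonal part $(\widetilde{H}_N)_0$ by $C\,\dGamma(h)+C$. The only difference is in that last step: the paper simply invokes the inequality $\widetilde{H}_N\le C(\bH+C)$ from Proposition~\ref{le:bound-truncated-Fock-space} and passes to diagonal parts (the diagonal of $\bH$ being $\dGamma(h+K_1)$), whereas you recompute $(\widetilde{H}_N)_0=A_0+(A_2)_{\rm diag}+A_4$ from~\eqref{eq:UHU-detail} and bound each piece by hand with the tools of Lemmas~\ref{le:simple-bound-excited-particle} and~\ref{le:bound-A4}. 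Both routes are valid; the paper's is shorter because it recycles Proposition~\ref{le:bound-truncated-Fock-space}, yours is more explicit and self-contained.

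One small quibble: your closing claim that ``$\dGamma(h+1)$ differs from $\dGamma(h)$ only by a shift that fits into the generic constant~$C$'' is not literally right, since the difference is $\N_+$, an unbounded operator. This is harmless here because in every use of the lemma (Remark~\ref{rmk:localization-HN}) the error $\dGamma(h)$ is immediately replaced by $C(\bH+C)$ or $C(\widetilde{H}_N+CN)$, both of which do control $\N_+$; equivalently, one may state the error term as $C\,\dGamma(h+1)+C$ throughout with no change to the argument.
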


\begin{proof} We apply Proposition \ref{pro:localization} with $\A=\widetilde {H}_N -\lambda_1(\widetilde {H}_N)$ and $\sigma=2$. From the inequality $\widetilde {H}_N \le C(\bH+C)$ due to Proposition \ref{le:bound-truncated-Fock-space}, we see that the diagonal part of $\widetilde {H}_N$ can be bounded from above by $C\dGamma(h)_{| \F_+^{\le N}}+C$. 
\end{proof}
\begin{remark} \label{rmk:localization-HN} \it The error term $\dGamma(h)_{| \F_+^{\le N}}$ in Lemma \ref{le:localization-HN} can be further replaced by $C(\bH+C)$ due to (\ref{eq:comparison-bH-dGamma(h+1)}), or replaced by $C(\widetilde{H}_N +CN)$ due to the stability 
$$H_N \ge (1-\alpha_1) \sum_{i=1}^N T_i -C N \ge \frac{1-\alpha_1}{1+\alpha_2} \sum_{i=1}^N h_i -CN\quad{\rm on}~\gH^N,$$
where $\alpha_1$ and $\alpha_2$ are given in Assumption (A1). The bound $d\Gamma(h)\le C(\widetilde{H}_N+CN)$ on $\F_+^{\le N}$ also implies that $\lambda_1(\widetilde{H}_N) \ge -CN$, although it is not optimal (we shall see that $\lambda_1(\widetilde{H}_N)$ is of order $O(1)$). 
\end{remark}

\section{Proof of Main Theorems} \label{sec:proof-Main-Theorems}

We recall that we have introduced the notation $\widetilde {H}_N := U_N (H_N-Ne_{\rm H}) U_N^*$.

\subsection{Proof of Theorem \ref{thm:lower-spectrum}} \label{sec:proof-Theorem-lower-spectrum}

The first statement of Theorem~\ref{thm:lower-spectrum} was already proved in Corollary~\ref{cor:weak-CV-bH} above. We now turn to the proof of the other statements, using the localization method.

\subsubsection*{Step 1. Convergence of min-max values.}

\paragraph*{\it Upper bound.} By applying Lemma \ref{le:localization-H} and Proposition \ref{le:bound-truncated-Fock-space}, we have, for every normalized vector $\Phi\in D(\bH)$ and for every $1\le M \le N$, 
\bq \label{eq:upper-bound-eq1}
\bH &\ge & f_M \mathbb{H} f_M + g_M \bH g_M -\frac{C}{M^2} (\bH +C ) \nn \hfill\\
&\ge & f_M \left[ \left( 1+ C\sqrt{\frac{M}{N}}\right)^{-1} \widetilde{H}_N -C\sqrt{\frac{M}{N}} \right] f_M \nn\hfill\\
&~&+ \lambda_1(\bH) g_M^2  - \frac{C}{M^2} (\bH+C ) .
\eq

Now fix an arbitrary $L\in \mathbb{N}$. For every $\delta>0$ small, we can find an $L$-dimensional subspace $Y \subset D(\bH)\subset \F_+$ such that 
$$\max_{\Phi\in Y, ||\Phi||=1}\left\langle {\mathbb{H}} \right\rangle_{\Phi} \le  \lambda_L(\mathbb{H})+\delta.$$
By using the inequality $g_M^2\le 2\N_+/M$ and $\N_+ \le C(\bH+C)$, which is due to (\ref{eq:comparison-bH-dGamma(h+1)}), we get $||g_M \Phi||^2 \le  C_L/M$ for every normalized vector $\Phi\in Y$. Here $C_L$ is a constant depending only on $\lambda_L(\bH)-\lambda_1(\bH)$. Therefore, if $M > LC_L$, then $\dim (f_M Y)=L$ by Proposition \ref{pro:localization} (ii). Consequently, by the min-max principle,
$$ \max_{\Phi\in Y, ||\Phi||=1} \frac{\langle f_M \Phi, \widetilde{H}_N f_M \Phi\rangle}{||f_M \Phi||^2} \ge \lambda_L(\widetilde{H}_N).$$
Thus from (\ref{eq:upper-bound-eq1}) we obtain, for $M$ large enough,
$$ \lambda_L(\bH) \ge \left( 1-\frac{C_L}{M} \right) \left[ \left( 1+ C\sqrt{\frac{M}{N}}\right)^{-1} \lambda_L (\widetilde{H}_N) -C\sqrt{\frac{M}{N}} \right] - \frac{C_L}{M}.$$
By choosing $M=N^{1/3}$ we get the upper bound
\bq \label{eq:min-max-upper-bound}
\lambda_L (H_N)-Ne_{\rm H}=\lambda_L (\widetilde {H}_N) \le \lambda_L(\mathbb{H})+ C_L N^{-1/3}.
\eq

\paragraph*{\it Lower bound.} By applying Lemma \ref{le:localization-HN} and Proposition \ref{le:bound-truncated-Fock-space}, for every $1\le M \le N$ we have the operator inequalities on $\F_+^{\le N}:$ 
\bq \label{eq:lower-bound-eq1}
\widetilde{H}_N &\ge & f_M \widetilde{H}_N f_M + g_M \widetilde{H}_N g_M - \frac{C}{M^2} \Big(C\dGamma(h)_{| \F_+^{\le N}} + C-\lambda_1(\widetilde{H}_N) \Big) \nn \hfill\\
&\ge & f_M \left[ \left( 1- C\sqrt{\frac{M}{N}}\right) \bH -C\sqrt{\frac{M}{N}} \right] f_M \nn\hfill\\
&~&+ \lambda_1(\widetilde{H}_N) g_M^2  - \frac{C}{M^2} \Big(C\dGamma(h)_{| \F_+^{\le N}} + C-\lambda_1(\widetilde{H}_N) \Big) 
\eq

We first show that $\lambda_1(\widetilde{H}_N) \ge \lambda_1(\bH)+o(1)$. For every $\delta>0$ small, we can find a normalized vector $\Phi_N \in D(\widetilde {H}_N)\subset \F_+^{\le N}$ such that 
$$\langle { \widetilde {H}_N } \rangle_{\Phi_N} \le  \lambda_1(\widetilde {H}_N)+\delta.$$
From Assumption (A3) we have
$$ 
||g_M \Phi_N||^2 \le  \frac{2\left\langle {\N_+} \right\rangle_{\Phi_N}}{M} \le \frac{2N\eps_R(N)}{M}
$$
where $\eps_R$ is as in (A3). We can choose $R=1$. In particular, if we choose $M \gg N \eps_R(N)$, then $||g_M \Phi_N||^2 \to 0$ as $N\to \infty$ independently of $\delta$ and the choice of $\Phi_N$.  Thus from  (\ref{eq:lower-bound-eq1}) and the simple bound $\dGamma(h)_{| \F_+^{\le N}} -\lambda_1(\widetilde{H}_N)\le C(\widetilde{H}_N +CN)$ (see Remark \ref{rmk:localization-HN} after Lemma \ref{le:localization-HN}) we can conclude that
\bqq %\label{eq:lower-bound-eq1}
\lambda_1( \widetilde{H}_N ) +\delta &\ge & ||f_M \Phi_N||^2 \left[ \left( 1- C\sqrt{\frac{M}{N}}\right) \lambda_1(\bH) -C\sqrt{\frac{M}{N}} \right]  \nn\hfill\\
&~&+ \lambda_1(\widetilde{H}_N) ||g_M \Phi_N ||^2  - \frac{CN}{M^2} .
\eqq
By choosing $M$ such that $ \max\{N \eps_R(N),\sqrt{N}\} \ll M \ll N$, then taking $\delta\to 0$, we obtain the lower bound $\lambda_1(\widetilde{H}_N) \ge \lambda_1(\bH)+o(1)$. 

By adapting the above argument, we can show that $\lambda_L(\widetilde{H}_N) \ge \lambda_L(\bH)+o(1)$ for an arbitrary $L\in \mathbb{N}$. In fact, for every $\delta>0$ small, we can find an $L$-dimensional subspace $Y\subset D(\widetilde {H}_N)\subset \F_+^{\le N}$ such that 
$$\max_{\Phi\in Y, ||\Phi||=1}\langle { \widetilde {H}_N } \rangle_{\Phi} \le  \lambda_L(\widetilde {H}_N)+\delta.$$
From Assumption (A3) and the upper bound $\lambda_L(\widetilde{H}_N)-\lambda_1(\widetilde{H}_N)\le \lambda_L(\bH)-\lambda_1(\bH)+o(1)$, we have
$$ 
||g_M \Phi||^2 \le  \frac{2\left\langle {\N_+} \right\rangle_{\Phi}}{M} \le \frac{N\eps_R(N)}{M}
$$
for every normalized vector $\Phi\in Y$, where $\eps_R$ is as in (A3). We can choose $R=\lambda_L(\bH)-\lambda_1(\bH)+1$. In particular, if $M \gg N \eps_R(N)$, then $||g_M \Phi||^2 \to 0$ as $N\to \infty$ independently of $\eps$ and the choice of $\Phi$ in $Y$. Consequently, when $N$ is large enough we have $\dim (f_M Y)=L$ by Proposition \ref{pro:localization} (ii), and by the min-max principle, we get
$$ \max_{\Phi\in Y, ||\Phi||=1} \frac{\langle f_M \Phi, \bH f_M \Phi\rangle}{||f_M \Phi||^2} \ge \lambda_L(\bH).$$
Thus from  (\ref{eq:lower-bound-eq1}) and the simple bound $\dGamma(h)_{| \F_+^{\le N}} -\lambda_1(\widetilde{H}_N)\le C(\widetilde{H}_N +CN)$ (see Remark \ref{rmk:localization-HN} after Lemma \ref{le:localization-HN}) we get
\bq \label{eq:lower-bound-bad-error}
\lambda_L(\widetilde{H}_N) \ge  \left( 1- C\sqrt{\frac{M}{N}}\right) \lambda_L(\bH) -C\sqrt{\frac{M}{N}} -C_L \frac{N \eps_R(N)}{M} - C_L \frac{N}{M^2}.
\eq
By choosing $M$ such that $ \max\{N \eps_R(N),\sqrt{N}\} \ll M \ll N$, we obtain  $\lambda_L(\widetilde{H}_N) \ge \lambda_L(\bH)+o(1)$. 

\medskip

\begin{remark}[Remark on the convergence rate] \it The error obtained in the lower bound (\ref{eq:lower-bound-bad-error}) is not better than $\sqrt[3]{\eps_R(N)}+N^{-1/5}$. However, it can be improved by the following bootstrap argument. First, from (\ref{eq:lower-bound-eq1}) with $M=rN$ for some small fixed number $r>0$, and the simple bound $\dGamma(h)_{| \F_+^{\le N}} -\lambda_1(\widetilde{H}_N)\le C(\widetilde{H}_N +CN)$ (see Remark \ref{rmk:localization-HN} after Lemma \ref{le:localization-HN}), we obtain
\bq \label{eq:improved-bound-fH}
f_{rN} \bH f_{rN} \le C(\widetilde{H}_N + C).
\eq
Next, by projecting the inequality (\ref{eq:lower-bound-eq1}), with $1\ll M \ll N$, onto the subspace $f_{rN}\F_+$ and using the refined bound $\dGamma(h)_{| \F_+^{\le N}} -\lambda_1(\widetilde{H}_N)\le C(\bH +C)$ (see Remark \ref{rmk:localization-HN}), we get
\bq  \label{eq:second-localization-fHNf}
f_{rN}\widetilde{H}_N f_{rN} &\ge & f_M \left[ \left( 1- C\sqrt{\frac{M}{N}}\right) \bH -C\sqrt{\frac{M}{N}} \right] f_M  \nn\hfill\\
&~&+ \lambda_1(\widetilde{H}_N) g_M^2 f_{rN}^2  - \frac{C}{M^2} f_{rN}(\bH + C) f_{rN}
\eq
Here we have used the fact that $f_M f_{rN}=f_M$ when $M\le rN/2$. Finally, we can use Lemma \ref{le:localization-HN} with $M=rN$, then estimate $f_{rN}\widetilde{H}_N f_{rN}$ by (\ref{eq:second-localization-fHNf}), and employ the inequality $g_M^2 \le 2\N_+/M \le C(\bH+C)/M$ and (\ref{eq:improved-bound-fH}). We have
 \bqq
\widetilde{H}_N &\ge & f_M \left[ \left( 1- C\sqrt{\frac{M}{N}}\right) \bH -C\sqrt{\frac{M}{N}} \right] f_M  - C\frac{\N_+}{N} - \frac{C}{M}(\widetilde{H}_N+C)
\eqq
when $1\ll M \ll N$. Consequently,   
$$
\lambda_L(\widetilde{H}_N) \ge \lambda_L(\bH) - C_L\left( \sqrt{\frac{M}{N}}+\frac{1}{M} + \eps_R(N) \right)
$$
In the latter bound, by choosing $M=N^{1/3}$, we get 
$$\lambda_L(\widetilde{H}_N) \ge \lambda_L(\bH) -C(\eps_R(N)+N^{-1/3}).$$      
\end{remark}

\paragraph*{\it Spectral gap.} By (i) one has for every $L=1,2,...$, 
$$ \liminf_{N\to \infty} \left( \lambda_{L}(H_N)-\lambda_1(H_N) \right) = \lambda_L(\mathbb{H})-\lambda_1(\mathbb{H})$$
Taking the limit as $L\to \infty$ we obtain the spectral gap
$$
\liminf_{N\to \infty} \left( \inf\sigma_{\rm ess} (H_N) - \lambda_1(H_N) \right) \ge \inf\sigma_{\rm ess}(\mathbb{H})-\lambda_1(\mathbb{H}).
$$

\subsubsection*{Step 2. Convergence of lower eigenvectors}

First we consider the convergence of ground states. If $\Psi_N$ is a ground state of $H_N$, then $\Phi_N:= U_N \Psi_N$ is a ground state of $\widetilde {H}_N := U_N (H_N-Ne_{\rm H}) U_N^*$ on $\F_+^{\le N}$. From the proof in Step 1, if $\max\{N\eps_R(N), \sqrt{N}\}\ll M \ll N$, then we have $||g_M \Phi_N||^2 \to 0$ and
$$\lim_{N\to \infty}\left\langle f_M \Phi_N , \mathbb{H} f_M \Phi_N \right\rangle=\lambda_1(\mathbb{H})= \left\langle \Phi^{(1)} ,\mathbb{H} \Phi^{(1)}\right\rangle$$
where $\Phi^{(1)}$ is the unique ground state of $\mathbb{H}$ on $\F_+$.

To prove $\Phi_N\to \Phi^{(1)}$, it suffices to show that $f_M \Phi_N\to \Phi^{(1)}$. Let us write 
$$f_M \Phi_N=a_N +b_N$$
where $a_N\in {\rm Span}(\{\Phi^{(1)}\})$ and $b_N \bot \Phi^{(1)}$. Then
\bqq
 \langle f_M\Phi_N, \mathbb{H} f_M\Phi_N \rangle &=& \langle a_N, \mathbb{H} a_N \rangle + \langle b_N, \mathbb{H} b_N \rangle \ge \lambda_1(\mathbb{H})||a_N||^2 + \lambda_2(\mathbb{H})||b_N||^2 \hfill\\
 &=& ||f_M \Phi_N||^2 \lambda_1(\bH )+ (\lambda_2 ( \bH )-\lambda_1 (\bH ) )||b_N||^2.
 \eqq
Since $\left\langle f_M \Phi_N, \mathbb{H} f_M \Phi_N \right\rangle_{\widetilde\Phi_N}- ||f_M\Phi_N||^2  \lambda_1(\mathbb{H}) \to 0$ and $\lambda_2(\mathbb{H})>\lambda_1(\mathbb{H})$, we conclude that $b_N\to 0$ as $N\to \infty$. Therefore, $f_M \Phi_N\to \Phi^{(1)}$, and hence $\Phi_N\to \Phi^{(1)}$, as $N\to \infty$.
\begin{remark}[Remark on the convergence in the form domain] \it We can show that if (A3s) holds true, then we have the strong convergence $\Phi_N \to \Phi^{(1)}$ in the norm induced by the quadratic form of $\bH$ on $\cF_+$, namely $\left \langle \Phi_N, \bH \Phi_N \right \rangle \to \left \langle \Phi^{(1)}, \bH \Phi^{(1)} \right \rangle.$ In fact, from the above proof we already had
$$\left \langle f_M \Phi_N, \bH f_M \Phi_N \right \rangle \to \left \langle \Phi^{(1)}, \bH \Phi^{(1)} \right \rangle~{\rm and}~\left\langle {g_M \Phi_N ,\widetilde {H}_N g_M \Phi_N} \right\rangle\to 0.$$
On the other hand, from (A3s) and (\ref{eq:comparison-bH-dGamma(h+1)}), we get $\widetilde {H}_N  \ge c_0 (\bH+\N_+) +g(N)$ on $\F_+^{\le N}$, where $c_0>0$ and $g(N)\to 0$ as $N \to \infty$. By choosing $M$ such that $\max\{g(N), N\eps_R(N), \sqrt{N}\}\ll M \ll N$, we obtain
$$g_M \widetilde {H}_N g_M \ge c_0 (\bH + M) +g(N) \ge c_0 \bH.$$
It implies that $\left \langle g_M \Phi_N, \bH g_M \Phi_N \right \rangle \to 0$. By using the localization for $\bH$ in Proposition \ref{le:localization-H} we can conclude that $\left \langle \Phi_N, \bH \Phi_N \right \rangle \to \left \langle \Phi^{(1)}, \bH \Phi^{(1)} \right \rangle.$ 
\end{remark}

The convergence of excited states of $H_N$ can be proved by using the above argument and the following abstract result.

\begin{lemma}[Convergence of approximate eigenvectors]\label{le:convergence-eigenvectors} Assume that $\A$ is a self-adjoint operator, which is bounded from below, on a (separable) Hilbert space, with the min-max values $\lambda_1(\A)\le ...\le \lambda_L(\A)<\inf \sigma_{\ess}(\A)$. If the normalized vectors $\{x^{(j)}_n\}_{n\ge 1}^{L\ge j \ge 1}$ satisfy, for all $i,j\in \{1,2,...,L\}$,   
$$\lim_{n\to \infty }\langle x_n^{(i)}, x_n^{(j)} \rangle = \delta_{ij}~~\,{\rm and} ~\,\lim_{n\to \infty}\langle x_n^{(j)}, \A x_n^{(j)} \rangle = \lambda_j(\A),$$
then there exists a subsequence $\{x_{n_k}^{(L)}\}_{k\ge 1}$ which converges, in the norm induced by the quadratic form of $\A$, to an eigenvector of $\A$ with the eigenvalue $\lambda_L(\A)$. 
\end{lemma}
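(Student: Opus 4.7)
The plan is to prove by induction on $L$ the strengthened statement: along a common subsequence, each $x_n^{(j)}$ converges to some $x^{(j)}$ in the norm induced by the quadratic form of $\A$, and $\{x^{(j)}\}_{j=1}^L$ is an orthonormal family of eigenvectors of $\A$ with $\A x^{(j)} = \lambda_j(\A) x^{(j)}$. This strengthening is essential, because at the inductive step I will use crucially that the lower limits form a true orthonormal basis of a fixed finite-dimensional $\A$-invariant subspace. Throughout, write $\lambda_j := \lambda_j(\A)$ and $\Lambda := \inf\sigma_{\rm ess}(\A)$, so that $\lambda_L < \Lambda$.

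For the base case $L=1$, I would pick $\delta>0$ small enough that $\lambda_1 + \delta < \Lambda$ and $\lambda_1 + \delta \notin \sigma(\A)$, and introduce the finite-rank spectral projection $P := \mathbf{1}_{\A \le \lambda_1+\delta}$. Decomposing $x_n^{(1)} = Px_n^{(1)} + (1-P)x_n^{(1)}$, the bound $\A \ge \lambda_1+\delta$ on $\Ran(1-P)$ together with $\|x_n^{(1)}\|=1$ and $\langle x_n^{(1)}, \A x_n^{(1)}\rangle \to \lambda_1$ forces $\|(1-P)x_n^{(1)}\| \to 0$. Since $Px_n^{(1)}$ lives in a finite-dimensional space, a subsequence converges strongly in $\gH$ to a unit vector $x^{(1)}$, which is identified as a ground state of $\A$ by lower semicontinuity of the quadratic form. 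Form-norm convergence then follows from the convergence of $\|x_n^{(1)}\|$ and of $\langle x_n^{(1)}, \A x_n^{(1)}\rangle$ to the corresponding quantities for $x^{(1)}$.

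For the inductive step, I would first apply the induction hypothesis to the family $x_n^{(1)},\ldots,x_n^{(L-1)}$ to obtain, along a subsequence, orthonormal eigenvectors $x^{(1)},\ldots,x^{(L-1)}$ of $\A$ at eigenvalues $\lambda_1,\ldots,\lambda_{L-1}$, with $x_n^{(j)} \to x^{(j)}$ in form norm. Set $V := \Span(x^{(1)},\ldots,x^{(L-1)})$ and $P_V$ its orthogonal projection. The assumed asymptotic orthogonality $\langle x_n^{(j)}, x_n^{(L)}\rangle \to 0$ together with the strong convergence $x_n^{(j)} \to x^{(j)}$ in $\gH$ yields $P_V x_n^{(L)} \to 0$ in $\gH$. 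Setting $w_n := (1-P_V)x_n^{(L)}$, and using that $\A$ commutes with $P_V$, a short computation gives $\|w_n\| \to 1$ and $\langle w_n, \A w_n\rangle \to \lambda_L$. Because $V$ is $\A$-invariant and spanned by eigenvectors for $\lambda_1,\ldots,\lambda_{L-1}$, the restriction $\A|_{V^\perp}$ is a self-adjoint, bounded-below operator with $\inf\sigma(\A|_{V^\perp}) = \lambda_L$ and $\inf\sigma_{\rm ess}(\A|_{V^\perp}) = \Lambda > \lambda_L$. Applying the base case to $w_n/\|w_n\|$, viewed as a normalized sequence in the Hilbert space $V^\perp$, produces a further subsequence with $w_n \to x^{(L)}$ in form norm, where $x^{(L)}$ is a unit eigenvector of $\A$ at $\lambda_L$ orthogonal to $V$. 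Then $x_n^{(L)} = P_V x_n^{(L)} + w_n \to x^{(L)}$ in form norm, closing the induction.

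The main obstacle is to preclude mass loss in the weak limit of $x_n^{(L)}$: a naive spectral-truncation argument applied to $x_n^{(L)}$ alone, using a cutoff just above $\lambda_L$, only controls the fraction of the norm living in the high-energy region $\{\A > \lambda_L+\delta\}$ and cannot exclude that a positive fraction of the mass collapses onto the lower eigenspaces of $\A$. The asymptotic orthogonality between $x_n^{(L)}$ and the $x_n^{(j)}$ with $j<L$, together with the inductive knowledge that these lower sequences converge to a genuine orthonormal basis of the finite-dimensional $\A$-invariant subspace $V$, is exactly what allows one to subtract off this low-energy component and reduce the argument to the one-dimensional spectral-gap step on $V^\perp$.
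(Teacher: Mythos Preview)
Your proof is correct and follows exactly the approach the paper indicates: the paper does not spell out a proof of this lemma but only remarks that it ``is elementary, using the same argument of proving the convergence of ground states, and an induction process.'' Your base case reproduces the ground-state argument (spectral cutoff just above $\lambda_1$, compactness in the finite-dimensional low-energy range, then form-norm convergence from convergence of the energy), and your inductive step---projecting off the $\A$-invariant span $V$ of the already-constructed lower eigenvectors and reducing to the base case on $V^\perp$---is the natural way to carry out the induction the paper has in mind. The only cosmetic difference is that the paper's explicit ground-state argument (written out for $\bH$) projects onto the one-dimensional ground-state span and uses the gap $\lambda_2>\lambda_1$, whereas you project onto $\{\A\le\lambda_1+\delta\}$; your version is slightly more robust since it does not presuppose simplicity of $\lambda_1$.
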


The proof of Lemma \ref{le:convergence-eigenvectors} is elementary, using the same argument of proving the convergence of ground states, and an induction process. The proof of Theorem \ref{thm:lower-spectrum} is finished.
\qed

\subsection{Proof of Theorem \ref{thm:positive-temperature}}\label{sec:proof:positive-temp}

\subsubsection*{Step 1. Convergence of the free energy}
We need to show that
$$ \lim_{N\to \infty}(F_\beta (N)-Ne_{\rm H})= -\beta^{-1}\log \Tr_{\F_+} [e^{-\beta \bH}].$$
We can rewrite
$$F_\beta(N)-N e_{\rm H}= \inf_{\Gamma \ge 0, \Tr_{\F_+^{\le N}} (\Gamma)=1} \{ \Tr[
 \widetilde{H}_N \Gamma]-\beta^{-1}S(\Gamma) \}$$
where $\widetilde {H}_N:=U_N (H_N-Ne_{\rm H})U_N^*$, and 
$$ -\beta^{-1}\log \Tr_{\F_+} [e^{-\beta \bH}]= \Tr[\bH \overline{\Gamma}]-\beta^{-1}S(\overline{\Gamma})$$
where $\overline{\Gamma}:= Z^{-1}e^{-\beta \bH}$ with $Z=\Tr \left[ e^{-\beta \bH} \right]$.

\smallskip

\paragraph*{\it Upper bound.} 
Let us write $ \overline{\Gamma}= \sum_{i=1}^\infty t_i \left| \Phi^{(i)} \rangle \langle \Phi^{(i)} \right|$ where $\{\Phi^{(i)}\}_{i=1}^\infty$ is an orthonormal family in $\F_+$ and $t_1 \ge t_2 \ge ... \ge 0$, $\sum t_i=1$. Then
\bq \label{eq:free-energy-Gibbs-state}
\Tr_{\F_+} \left[\bH \overline{\Gamma} \right] -\beta^{-1} S(\overline{\Gamma}) = \sum_{i=1}^\infty \left( t_i \langle \bH \rangle_{\Phi^{(i)}} + \beta^{-1} t_i \log t_i \right).
\eq

Fix $L\in \mathbb{N}$. By using Lemma \ref{le:localization-H} and the fact that $\bH$ is bounded from below, we can find for every $M \ge 1$ a family of normalized states $\{\Phi_M^{(i)}\}_{i=1}^L  \subset \F_+^{\le M}$ such that  $\lim_{M\to \infty} \langle \Phi_M^{(i)}, \Phi_M^{(j)} \rangle =\delta_{ij}$ and
\bq \label{eq:conv-localization}
\limsup_{M\to \infty} \langle \bH \rangle_{\Phi_M^{(i)}} \le \langle \bH \rangle_{\Phi^{(i)}}
\eq for all $i,j\in \{1,2,...,L\}$. Denote $\theta_L=\sum_{i=1}^L t_i$ and
$$ \Gamma_{L,M} := \sum_{i=1}^L \frac{t_i}{\theta_L} \left| \Phi_M^{(i)} \right\rangle \left\langle \Phi_{M}^{(i)} \right|.$$
Then it is easy to see that $\Gamma_{L,M}\ge 0$ and $\Tr[\Gamma_{L,M}]=1$. Moreover, because $ \lim_{M\to \infty}  \langle \Phi_M^{(i)}, \Phi_M^{(j)} \rangle= \delta_{ij}$ we get
\bq \label{eq:conv-entropy-truncated}
 \lim_{M\to \infty} S(\Gamma_{L,M})= - \sum_{i=1}^L \frac{t_i}{\theta_L} \log \left( \frac{t_i}{\theta_L} \right).
 \eq
Choosing $M=N^{1/3}$ and applying Proposition \ref{le:bound-truncated-Fock-space} we obtain
\bq \label{eq:conv-energy-truncated}
  \Tr_{\F_+^{\le N}} [\widetilde {H}_N \Gamma_{L,M}] - \Tr_{\F_+} [\bH \Gamma_{L,M}]   \to 0
\eq
when $N \to \infty$. 
From (\ref{eq:conv-localization}), (\ref{eq:conv-entropy-truncated}) and (\ref{eq:conv-energy-truncated}), we find that 
\bqq
\limsup_{N\to \infty} \left( F_\beta (N)-N e_{\rm H} \right) &\le & \limsup_{N\to \infty} \left( \Tr_{\F_+^{\le N}} [\widetilde {H}_N \Gamma_{L,M}] - \beta^{-1} S (\Gamma_{L,M}) \right) \hfill\\
&\le & \sum_{i=1}^L \left( \frac{t_i}{\theta_L} \langle \bH \rangle_{\Phi^{(i)}} + \beta^{-1} \frac{t_i}{\theta_L } \log \left(\frac{t_i}{\theta_L} \right) \right).
\eqq

Finally, taking  $L\to \infty$ and noting that $\theta_L\to 1$, we obtain from (\ref{eq:free-energy-Gibbs-state}) that
\bqq
\limsup_{N\to \infty} \left( F_\beta (N)-N e_{\rm H} \right) \le  -\beta^{-1} \log \Tr_{\F_+} \left[ e^{-\beta \bH} \right]  .
\eqq

\smallskip

\paragraph*{\it Lower bound.} Let us denote $\Gamma_N:= U_N \Gamma_{\beta,N} U_N^*=e^{-\beta \widetilde {H}_N}/\Tr\left[e^{-\beta \widetilde {H}_N}\right]$. Using $\Tr[\widetilde {H}_N \Gamma_N]-\beta^{-1}S(\Gamma_N) \le 0$ and the stability 
$$\Tr[\widetilde {H}_N \Gamma_N]-\beta_0^{-1} S(\Gamma_N) \ge F_{\beta_0}(N)\ge -CN$$
with $\beta_0^{-1}>\beta^{-1}$, we obtain $\Tr[\widetilde {H}_N \Gamma_N]\le CN$. Therefore, by using Lemma \ref{le:localization-HN}, the simple bound $\bH \le C(\widetilde {H}_N + CN)$, and Proposition \ref{le:bound-truncated-Fock-space} we find that
\bqq
\Tr[\widetilde {H}_N \Gamma_N] &\ge & \Tr[\widetilde {H}_N \Gamma_N^{\le M}] + \Tr[\widetilde {H}_N \Gamma_N^{>M}]-\frac{CN}{M^2} \hfill\\
&\ge & \left( 1-C\sqrt{\frac{M}{N}} \right) \Tr[\bH \Gamma_N^{\le M}] + \Tr[\widetilde {H}_N \Gamma_N^{>M}]-\frac{CN}{M^2} -C\sqrt{\frac{M}{N}}
\eqq
with $\Gamma_N^{\le M}:=f_M \Gamma_N f_M$ and $\Gamma_N^{>M}:=g_M \Gamma_N g_M$. On the other hand, using $f_M^2+g_M^2=1$ and the Brown-Kosaki inequality \cite{BroKos-90}, we have 
$$ S (\Gamma_N)\le S(\Gamma_N^{\le M})+S(\Gamma_N^{>M'}).$$
If we choose $M=N^{3/5}$, then the above estimates imply that 
\bqq
F_\beta (N)-Ne_{\rm H}&=& \left( \Tr[\widetilde {H}_N \Gamma_{N}]-\beta^{-1}S(\Gamma_{N}) \right) \hfill\\
&\ge &  \left( 1- CN^{-1/5} \right) \Tr[\bH \Gamma_N^{\le M}] - \beta^{-1} S(\Gamma_N^{\le M})\hfill\\
&~&+\Tr[\widetilde {H}_N \Gamma_N^{>M} ] -  \beta^{-1}S(\Gamma_N^{>M})- CN^{-1/5}
.
\eqq
By using 
\bqq
\left( 1- \frac{C}{N^{1/5}} \right) \Tr \left[\bH \frac{\Gamma_N^{\le M}}{\Tr \Gamma_N^{\le M}} \right] &-&\beta^{-1} S \left( \frac{\Gamma_N^{\le M}}{\Tr \Gamma_N^{\le M}}\right) \\
&\ge &  -\beta^{-1}\log \Tr e^{-\beta \left(1- CN^{-1/5}\right)\bH}
\eqq
and
$$ \Tr \left[\widetilde {H}_N \frac{\Gamma_N^{>M}}{\Tr \Gamma_N^{>M}} \right] -\beta^{-1} S \left( \frac{\Gamma_N^{>M}}{\Tr \Gamma_N^{>M}}\right) \ge F_\beta (N)-Ne_{\rm H},
$$
we can conclude that
\bqq
F_\beta (N)-Ne_{\rm H}+\frac{\eps}{\Tr \Gamma_N^{\le M}} \ge -\beta^{-1}\log \Tr e^{-\beta (1-CN^{-1/5})\bH}\qquad\qquad\qquad\\
\qquad\qquad\qquad\qquad\qquad\qquad +  \beta^{-1} \log (\Tr \Gamma_N^{\le M}) + \beta^{-1}\frac{\Tr \Gamma_N^{>M'}}{\Tr \Gamma_N^{\le M} } \log (\Tr \Gamma_N^{>M}).
\eqq
Assumption (A3') implies that $\Tr \Gamma_N^{>M} \to 0$ and $\Tr \Gamma_N^{\le M}\to 1$ as $N\to \infty$. Therefore, 
\bqq
 \liminf_{N\to \infty} \left( F_\beta(N)-Ne_{\rm H} \right) &\ge &  \liminf_{N\to \infty}\left( -\beta^{-1}\log \Tr e^{-\beta (1-CN^{-1/5})\bH} \right) \\
&=& -\beta^{-1}\log \Tr e^{-\beta \bH}.
\eqq
Here in the latter equality we have employed the Dominated Convergence Theorem using $\Tr e^{-\beta_0 \bH}<\infty$ for $\beta_0<\beta$ in (A4). 

\subsubsection*{Step 2. Convergence of Gibbs states}

From the above proof, we have  
$$\lim_{N\to \infty}\left( \Tr[ \bH \widetilde {\Gamma}_N]- \beta^{-1}S(\widetilde {\Gamma}_N)\right)=\Tr[\bH \overline{\Gamma}]-\beta^{-1}S(\overline{\Gamma})$$
where 
$$\widetilde {\Gamma}_N:=\frac{f_M \Gamma_N f_M}{\Tr[f_M \Gamma_N f_M]}~~{\rm with}~M=N^{3/5}.$$

We have proved that $\Tr_{\F_+} [U_N e^{-\beta(H_N-N e_{\rm H})} U_N^*] \to  \Tr_{\F_+} e^{-\beta \bH}$ as $N\to \infty$, and we will now show that $\Gamma_N \to \overline{\Gamma}$ in the Hilbert-Schmidt norm. It is well known that a sequence of non-negative operators $A_N$ with $\Tr(A_N)=1$, which converges weakly-$\ast$ to an operator $A$, converges in the trace norm if and only if $\Tr(A)=1$ (see, e.g.,~\cite{dellAntonio-67},~\cite[Cor. 1]{Robinson-70} and~\cite[Add. H]{Simon-79}). Using this fact, we will get the result. Moreover, by using $\Tr [(\widetilde {\Gamma}_N-\Gamma_N)^2] \to 0$  due to the condensation (A3'), it remains to show that $\Tr [(\widetilde {\Gamma}_N-\overline{\Gamma})^2]\to 0$. The latter convergence follows from the equality 
$$ \Tr[\bH \widetilde {\Gamma}_N]- \beta^{-1}S(\widetilde {\Gamma}_N)-\left(\Tr[\bH \overline{\Gamma}]-\beta^{-1}S(\overline{\Gamma}) \right) = \beta^{-1} \Tr \left[ \Gamma_N (\log \widetilde {\Gamma}_N - \log \overline \Gamma) \right] $$
and the following entropy estimate, which is inspired from \cite[Theorem 1]{HaiLewSei-08}. The proof of Theorem \ref{thm:positive-temperature} is finished. \qed

\begin{lemma}[Relative entropy inequality] \label{le:entropy-inequality} If $A$ and $B$ are two trace class operators on a Hilbert space and $0\le A \le 1$, $0\le B \le 1-\eps$ for some $\eps>0$, then there exists $C_\eps>0$ such that 
$$ \Tr \left[ A (\log A -\log B) \right] \ge C_\eps \Tr [(A-B)^2]+\Tr(A-B).$$
\end{lemma}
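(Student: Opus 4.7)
My plan is to rewrite the claimed inequality as a quadratic lower bound on the (un-normalized) quantum relative entropy $S(A|B) := \Tr[A\log A - A\log B - A + B]$. Since $\Tr[A(\log A - \log B)] = S(A|B) + \Tr(A-B)$, the lemma is equivalent to a Pinsker-type Hilbert-Schmidt estimate
$$S(A|B) \,\geq\, C_\eps \Tr[(A-B)^2].$$
The scalar heuristic I will follow is that $f(x) := x\log x$ satisfies $f''(x) = 1/x \geq 1$ on $(0,1]$, so the strong convexity of $f$ should produce such a bound with $C_\eps = \tfrac12$ as soon as $\spec(A), \spec(B) \subset [0,1]$; the slack $\eps>0$ in the hypothesis $B\le 1-\eps$ is not needed for positivity of the constant but is useful for avoiding boundary degeneracy in the argument.

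To promote the scalar inequality to operators I plan to use an integral representation. Setting $C_t := (1-t)A + tB$ and $\psi(t) := \Tr[f(C_t)]$, a Taylor expansion of $\psi$ at $t=1$ with integral remainder, together with the direct computation $\psi'(1) = \Tr(B-A) + \Tr[(B-A)\log B]$, yields the clean identity
$$S(A|B) \,=\, \int_0^1 s\,\psi''(s)\,\d s.$$
The second derivative can be computed by the Daleckii-Krein formula: diagonalizing $C_s = \sum_i c_i |i\rangle\langle i|$ and writing $X := B-A$, $x_{ij} := \langle i,X j\rangle$, one finds
$$\psi''(s) \,=\, \sum_{i,j} |x_{ij}|^2\,\frac{\log c_i - \log c_j}{c_i - c_j},$$
with the convention that the ratio equals $1/c_i$ when $c_i = c_j$.

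The decisive input is then purely scalar. Since $A \le 1$ and $B \le 1-\eps \le 1$, the convex combination $C_s = (1-s)A + sB$ is $\le 1$, so every eigenvalue $c_i$ lies in $[0,1]$. The mean value theorem applied to $\log$ gives $(\log c_i - \log c_j)/(c_i - c_j) = 1/\xi_{ij} \geq 1$ for some $\xi_{ij}\in[\min(c_i,c_j),\max(c_i,c_j)]$; hence $\psi''(s) \geq \sum_{i,j}|x_{ij}|^2 = \Tr[(A-B)^2]$, and integration in $s$ yields $S(A|B) \geq \tfrac12 \Tr[(A-B)^2]$, proving the lemma with $C_\eps = \tfrac12$. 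The only genuine obstacle I foresee is handling the case where $B$ has a nontrivial kernel, so that $\log B$ is unbounded: if $\mathrm{ran}(A) \not\subset \mathrm{ran}(B)$ then $S(A|B) = +\infty$ and the inequality is automatic, and otherwise a standard regularization $B \mapsto B + \delta P$ followed by $\delta \to 0^+$ (with the help of spectral truncation to reduce to finite dimensions, where the above manipulations are completely elementary) will justify all the formal computations.
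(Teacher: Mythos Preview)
Your argument is correct, but it takes a substantially different route from the paper. The paper's proof is a two-line application of Klein's inequality: one verifies the scalar bound
\[
x(\log x - \log y) - (x-y) \ \ge\ g(y)\,(x-y)^2, \qquad g(y):=\frac{y-1-\log y}{(y-1)^2},
\]
observes that $g$ is decreasing on $(0,1]$ so that $g(y)\ge g(1-\eps)=:C_\eps>0$ for $y\in(0,1-\eps]$, and then invokes Klein's trace inequality (Thirring) to lift the pointwise bound to operators. By contrast, you represent the relative entropy as $\int_0^1 s\,\psi''(s)\,\d s$ along the linear interpolation $C_s=(1-s)A+sB$ and compute $\psi''$ via the Daleckii--Krein divided-difference formula, bounding each entry $(\log c_i-\log c_j)/(c_i-c_j)\ge 1$ since $c_i,c_j\in(0,1]$.

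What each approach buys: the paper's argument is much shorter and more elementary, relying only on a standard trace inequality, and it produces the sharper constant $C_\eps=g(1-\eps)$, which tends to $+\infty$ as $\eps\to1$. Your argument is technically heavier (Daleckii--Krein, the regularization/finite-rank reduction you sketch at the end) but yields the uniform constant $C_\eps=\tfrac12$ for all $\eps\ge 0$, confirming your observation that the hypothesis $B\le 1-\eps$ is not actually needed for the existence of a positive constant---only $B\le 1$ matters. This is a nice conceptual point that is not visible from the paper's proof as stated (although one can recover it there too by noting $g(y)\to\tfrac12$ as $y\to1^-$).
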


\begin{proof} A straightforward computation shows that for all $x,y\in [0,1]$,
$$ x(\log x - \log y)-(x-y) \ge g(y) (x-y)^2~~{\rm with}~g(y):=\frac{y-1-\log y}{(y-1)^2}.$$
Since the function $y\mapsto g(y)$ is decreasing, we have $g(y)\ge g(1-\eps)>0$ for all $y\in [0,1-\eps]$. Therefore, By Klein's inequality \cite[p. 330]{Thirring}, one has
$$\Tr \left[ A (\log A - \log B) \right] \ge g(1-\eps)\Tr[(A-B)^2]+ \Tr(A-B).$$
The proof of Lemma \ref{le:entropy-inequality} is finished.
\end{proof}

\subsubsection*{Remark on stability and condensation}

We prove the remark that if (A3s) holds true for some $\eps_0\in (0,1)$ and $\Tr_{\F_+} \left[ e^{-(1-\eps_0)\beta_0 \bH } \right] <\infty$ for some $\beta_0$, then (A3') and (A4) hold true.

In fact, using $\mathbb{H}\le d\Gamma(h+C)+C$ and $\Tr_{\F_+} \left[ e^{-(1-\eps_0)\beta_0 \bH } \right] <\infty$ we obtain that $\Tr_{\F_+} \left[ e^{-(1-\eps_0)\beta_0 {\rm d}\Gamma (h+C) } \right] <\infty$. Note that for every positive operator $A$ in $\gH_+$, we have $\Tr_{\F_+} e^{-\dGamma(A)}<\infty$ if and only if $\Tr_{\gH_+} e^{-A}<\infty$. Therefore, we can conclude that $\Tr e^{-(1-\eps_0) \beta_0 h}<\infty$ and $\Tr_{\F_+} e^{-(1-\eps_0) \beta_0 \dGamma(h)}<\infty$. From the latter bound and the inequality $\widetilde {H}_N \ge (1-\eps_0) \sum_{i=1}^N h_i +o(N)$ in (A3s), the stability follows:
\bqq
F_{\beta_0}(N)-N e_{\rm H} &=& -\beta_0^{-1} \log \Tr_{\F_+^{\le N}} \left[e^{-\beta_0 \widetilde {H}_N} \right] \hfill\\
&\ge & -\beta_0^{-1} \log \Tr_{\F_+^{\le N}} \left[e^{-(1-\eps_0) \beta_0 \sum_{j=1}^N h_j }\right] + o(N) \hfill\\
&\ge & -\beta_0^{-1} \log \Tr_{\F_+} \left[e^{-(1-\eps_0) \beta_0 \dGamma(h)} \right] + o(N) \ge -C+ o(N).
\eqq
Next, using $\widetilde {H}_N \ge c_0 \N_+ +o(N)$ on $\F_+^{\le N}$ and the stability
$$\Tr[\widetilde {H}_N \widetilde {\Gamma}_{\beta,N}]-\beta_0^{-1} S(\widetilde {\Gamma}_{\beta,N}) \ge F_{\beta_0}(N)-N e_{\rm H} \ge  o(N)$$
with $\widetilde {\Gamma}_{\beta,N}:= U_N {\Gamma}_{\beta,N} U_N^*$, we find that
\bqq 0 &\ge &  F_\beta(N)-N e_{\rm H}= \Tr[\widetilde {H}_N \widetilde {\Gamma}_{\beta,N}]-\beta^{-1} S(\widetilde {\Gamma}_{\beta,N}) \hfill\\
&=& \left( 1- \frac{\beta_0}{\beta} \right) \Tr[\widetilde {H}_N \widetilde {\Gamma}_{\beta,N}] +  \frac{\beta_0}{\beta} \left[ \Tr[\widetilde {H}_N \widetilde {\Gamma}_{\beta,N}]-\beta_0^{-1} S(\widetilde {\Gamma}_{\beta,N}) \right] \hfill\\
&\ge & \left( 1- \frac{\beta_0}{\beta} \right) c_0 \Tr[\N_+ \widetilde {\Gamma}_{\beta,N}]+ o(N).
\eqq
Therefore, $\Tr[\N_+ {\Gamma}_{\beta,N}]=\Tr[\N_+ \widetilde {\Gamma}_{\beta,N}]=o(N)$.

%%%%%%%%%%%%%%%%%%%%%%%%%%%%%%%%%%%%%%%%%%%%%%%%
%%%%%%%%%%%%%%%%%%%%%%%%%%%%%%%%%%%%%%%%%%%%%%%%
\appendix

%%%%%%%%%%%%%%%%%%%%%%%%%%%%%%%%%%%%%%%%%%%%%%%%
%%%%%%%%%%%%%%%%%%%%%%%%%%%%%%%%%%%%%%%%%%%%%%%%

\section{Bogoliubov Hamiltonian}\label{apd:Bogoliubov-Hamiltonian}

In this appendix we prove some general properties of quadratic Hamiltonians in bosonic Fock spaces, which includes the results in Theorem \ref{thm:Bogoliubov-Hamiltonian}. Most results of this section are well known~\cite{Berezin-66}. In some parts of the discussion we shall refer to \cite{Solovej-notes,Nam-thesis} for more details.

Let $\cH$ be an arbitrary (separable) one-particle Hilbert space. We can identify $\cH$ with its dual space $\cH^*$ by the conjugate linear unitary operator $J: \cH \to \cH^*$ defined as $ (Jf)(g) = \langle f,g \rangle_{\cH}$ for all $f,g\in \cH$. If $\cH$ is a subspace of $L^2(\Omega)$, then $J$ is simply the complex conjugate.

For any vector $f \in \cH$, we may define the {\it annihilation operator} $a(f)$ and the {\it creation operator} $a^*(f)$ on the Fock space $ \F(\cH)=\bigoplus_{N=0}^\infty \bigotimes_{\rm sym}^N \cH$ as in (\ref{eq:def-annihilation-operator}) and (\ref{eq:def-creation-operator}) in  Section \ref{sec:operators}. For any state $\Phi$ in the Fock space $\F(\cH)$, the one-body density matrices $(\gamma_\Phi,\alpha_\Phi)$ are operators on $\cH$ defined by 
\[\left\langle {f,{\gamma _\Phi }g} \right\rangle  = {\left\langle {{a^*}(g)a(f)} \right\rangle _\Phi },\quad\left\langle {f,{\alpha _\Phi } J g } \right\rangle  = {\left\langle {a(g)a(f)} \right\rangle _\Phi }\quad{\rm for~all~}f,g\in \cH.\]

We will be particularly interested in states $\Phi$'s with finite particle expectation, namely $\Tr \gamma_\Phi<\infty.$ In this case, $(\gamma_\Phi,\alpha_\Phi)$ belongs to the set ${\G}$, which contains all pairs of operators $(\gamma ,\alpha)$ on $\cH$ such that $\gamma$ is trace class, $\alpha$ is Hilbert-Schmidt, $\alpha^T=\alpha$ and 
\bqq
\left(  {\begin{array}{*{20}c}
   \gamma  & \alpha   \\
   {\alpha^*} & {1 + J\gamma J^*}  \\

 \end{array} } \right) \geqslant 0 ~~{\rm on}~\cH\oplus \cH^*.
\eqq
Let us denote by $\G_0$ the set of all pairs $(\gamma,\alpha)\in \G$ which satisfy
\bq \label{eq:Gamma-pure}
\alpha \alpha^*=\gamma(1+ J{\gamma} J^*)\quad \text{and}\quad \gamma \alpha=\alpha J{\gamma} J^*.
\eq
The significance of $\G_0$ is that for any $(\gamma,\alpha)\in \G_0$, there exists a unique {\it quasi-free} pure state $\Phi\in \F(\cH)$ such that 
$$(\gamma,\alpha)=(\gamma_\Phi,\alpha_\Phi).$$
Any element in $\G$ is also associated with a unique state, but the latter is a {\it mixed} state. 

The one-body density matrices offer a simple way to define quadratic Hamiltonians. More precisely, let $H$ be a self-adjoint operator on $\cH$ and let $K$ be a Hilbert-Schmidt operator on $\cH$ such that $K=K^{T}$ and such that the following inequality holds true
\bq \label{eq:hK>=eta-general}
\A:=\left( {\begin{array}{*{20}{c}}
  {H}&K \\ 
  K^*&{J H J^*} 
\end{array}} \right) \geqslant \eta>0\qquad \text{on}\quad \cH\oplus \cH^*.
\eq
We shall consider the quadratic Hamiltonian $\mathbb{H}$ on $\F(\cH)$ defined by
\bq \label{eq:quadratic-Hamiltonian}
 \left\langle { \mathbb{H}} \right\rangle_\Phi :=q(\gamma_\Phi, \alpha_\Phi)=\Tr[ H \gamma_\Phi]+\Re \Tr[K\alpha_\Phi]
 \eq
 for every state $\Phi$ living in the truncated Fock spaces and in the domain of $H$:
\begin{equation}
\bigcup_{M\geq0}\bigoplus_{n=0}^M\bigotimes_{\text{sym}}^n D(H).
\label{eq:core-quadratic-Bogoliubov-general} 
\end{equation}
It can be verified that the so-defined operator $\bH$ is exactly the Bogoliubov Hamiltonian given in (\ref{eq:Bogoliubov}). 

The main properties of the quadratic Hamiltonian $\bH$ are given in the following

\begin{theorem}[Bogoliubov Hamiltonian]\label{thm:quadratic-Hamiltonian} Let $\mathbb{H}$ be defined by (\ref{eq:quadratic-Hamiltonian}) and assume that (\ref{eq:hK>=eta-general}) holds true. Then we have the following statements.
\smallskip

\noindent $(i)$ \emph{(Form domain)}. We have the quadratic-form inequalities
\bq \label{eq:comparison-bH-dGamma(h+1)-general}
C^{-1}\dGamma (H) -C \le \bH \le \dGamma (H+C) +C
\eq
As a consequence, the form domain of the Friedrichs extension of $\mathbb{H}$ (still denoted by $\bH$) is the same as that of $\dGamma(H)$ on $\cF_+$.

\smallskip

\noindent $(ii)$ \emph{(Variational principle)}. For any $(\gamma,\alpha)\in \G$, we can find $(\gamma',\alpha')\in \G_0$ such that $ q(\gamma' ,\alpha' )\le q(\gamma,\alpha)$ and the inequality is strict expect when $(\gamma,\alpha)\in \G_0$. As a consequence, the ground state energy of $\mathbb{H}$ is 
$$
\inf \sigma (\mathbb{H})= \inf_{ (\gamma,\alpha)\in \G } q(\gamma , \alpha) =\inf_{ (\gamma,\alpha)\in \G_0 } q(\gamma , \alpha).
$$

\smallskip

\noindent $(iii)$ \emph{(Ground state and ground state energy)}. The Hamiltonian $\mathbb{H}$ has a unique ground state in $\F(\cH)$, which is a pure quasi-free state. Moreover, we always have $\inf \sigma (\mathbb{H})<0$ except when $K=0$ in which case $\inf \sigma (\mathbb{H})=0$ with the vacuum being the corresponding ground state.

\smallskip

\noindent $(iv)$ \emph{(Spectrum)}. We have $ \sigma_{\rm ess}(\mathbb{H}) = \sigma(\mathbb{H})+ \sigma_{\rm ess}(H)$ and 
$$
 \sigma(\mathbb{H})=\inf \sigma(\mathbb{H})+\left\{  \sum_{i\ge 1} {n_i \lambda_i}\,\, |\,\, \lambda_i \in \sigma(\S\A)\cap\R^+, n_i \in \{0\}\cup \mathbb N~{\rm and}~ \sum {n_i}<\infty \right\}
$$
where
$$
\S:=\left( {\begin{array}{*{20}{c}}
  1&0 \\ 
  0&{ - 1} 
\end{array}} \right).
$$
\smallskip

\noindent $(v)$ \emph{(Lower spectrum)}. If $JKJ=K$, $JHJ^*=H$ and $H$ has infinitely many eigenvalues below its essential spectrum, then $\mathbb{H}$ also has infinitely many eigenvalues below its essential spectrum. 

On the other hand, if $JKJ=K \ge 0$, $JHJ^*=H$ and $H-K$ has only finitely  many eigenvalues below its essential spectrum, then $\mathbb{H}$ also has finitely many eigenvalues below its essential spectrum.

\end{theorem}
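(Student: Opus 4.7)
The plan is to reduce everything to Bogoliubov's diagonalization theorem, which produces a unitary $\mathcal{U}$ on $\F(\cH)$ such that
\[
\mathcal{U}\, \bH\, \mathcal{U}^* = E_0 + \dGamma(D),
\]
for some $E_0 \in \R$ and some self-adjoint $D \ge \eta > 0$ on $\cH$ satisfying $\sigma(D) = \sigma(\S\A) \cap \R^+$. Once this is established, parts $(iii)$, $(iv)$ and most of $(v)$ follow from the spectral calculus of $\dGamma(D)$.

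First, for $(i)$, the matrix inequality $\A \ge \eta$ applied to vectors $(u,0)$ gives $H \ge \eta > 0$, whence $\dGamma(H) \ge \eta\,\N$. Since $K$ is Hilbert-Schmidt, the Cauchy-Schwarz estimate $|\Re \Tr[K \alpha_\Phi]| \le \|K\|_{\rm HS}\,\|\alpha_\Phi\|_{\rm HS}$ combined with the $\G$-constraint $\|\alpha_\Phi\|_{\rm HS}^2 \le \Tr[\gamma_\Phi(1 + J\gamma_\Phi J^*)]$ yields $|\Re \Tr K \alpha_\Phi| \le \varepsilon\, \dGamma(H) + C_\varepsilon$ for every $\varepsilon > 0$. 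Choosing $\varepsilon$ appropriately gives the two-sided bounds in $(i)$, and the Friedrichs construction identifies the form domain of $\bH$ with that of $\dGamma(H)$. For $(ii)$, $q(\gamma,\alpha)$ is a quadratic (hence convex) functional on the convex set $\G$, and the extreme points of $\G$ are precisely the elements of $\G_0$; a standard purification argument (see \cite{Solovej-notes, Nam-thesis}) shows that any $(\gamma,\alpha) \in \G \setminus \G_0$ can be strictly lowered to an element of $\G_0$, giving the variational identity.

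The central step is the diagonalization itself. On the Krein space $(\cH \oplus \cH^*, \S)$, the operator $\S\A$ is $\S$-self-adjoint, and the coercivity $\A \ge \eta$ ensures its spectrum is real and symmetric about the origin. The Hilbert-Schmidt assumption on $K$ is precisely Shale's implementability criterion, so the symplectic transformation diagonalizing $\S\A$ on the one-body level lifts to a unitary $\mathcal{U}$ on $\F(\cH)$; see \cite{Berezin-66, Solovej-notes, Nam-thesis}. This yields the diagonalized form above. Part $(iii)$ then follows since the vacuum is the unique ground state of $\dGamma(D)$, and strictness $E_0 < 0$ when $K \ne 0$ is obtained from the (suitably regularized) identity $E_0 = \tfrac12 \Tr(D - H)$, which is negative unless $D = H$, i.e., unless $K = 0$. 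Part $(iv)$ is then immediate from the standard formulas $\sigma(\dGamma(D)) = \{\sum n_i \lambda_i\}$ and $\sigma_{\rm ess}(\dGamma(D)) = \sigma(\dGamma(D)) + \sigma_{\rm ess}(D)$, together with $\sigma(D) = \sigma(\S\A) \cap \R^+$.

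For $(v)$, specializing to the real case $JHJ^* = H$, $JKJ = K$, a direct computation identifies $D^2$ up to unitary equivalence with $(H-K)^{1/2}(H+K)(H-K)^{1/2}$, where $H \pm K \ge \eta > 0$ by testing $\A \ge \eta$ on $(u,\pm u)/\sqrt{2}$. Since $K$ is compact, Weyl's theorem gives $\sigma_{\rm ess}(D) = \sigma_{\rm ess}(H-K) = \sigma_{\rm ess}(H)$. When $K \ge 0$, the identity $D^2 = (H-K)^2 + 2(H-K)^{1/2}K(H-K)^{1/2}$ yields $D \ge H - K$, so the min-max principle bounds the number of eigenvalues of $D$ below $\inf \sigma_{\rm ess}(D)$ by the number for $H - K$; finitely many for $D$ then force finitely many for $\dGamma(D)$ (each low-energy eigenstate uses finitely many eigenvalues, and $D \ge \eta > 0$ caps the multiplicity). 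For the infinite case, Weyl's theorem comparing $D$ with $H$ transfers infinitely many discrete eigenvalues of $H$ to infinitely many of $D$, and single-particle states $a^*(u_n)|0\rangle$ then provide infinitely many eigenstates of $\dGamma(D)$ below $\inf \sigma_{\rm ess}(\dGamma(D))$. The main obstacle throughout is the rigorous implementation of Bogoliubov's diagonalization when $H$ is unbounded: the symplectic $\mathcal{V}$ must simultaneously preserve the form domain of $\A$ and be Hilbert-Schmidt close to the identity in its off-diagonal block. For these delicate domain questions we refer to \cite{Solovej-notes, Nam-thesis}.
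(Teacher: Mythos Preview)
Your overall architecture—reduce to the diagonal form $\mathcal{U}\bH\mathcal{U}^* = E_0 + \dGamma(D)$ via a Bogoliubov transformation and then read off the spectrum—is exactly the paper's strategy, and parts $(ii)$, $(iv)$ and the finite case of $(v)$ are handled correctly (modulo the cited references). However, two steps contain genuine gaps.

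\textbf{Part $(i)$.} Your route via $|\Re\Tr[K\alpha_\Phi]|\le \|K\|_{\rm HS}\|\alpha_\Phi\|_{\rm HS}$ and $\|\alpha_\Phi\|_{\rm HS}^2 \le \Tr[\gamma_\Phi(1+J\gamma_\Phi J^*)]$ does \emph{not} give $|\Re\Tr[K\alpha_\Phi]|\le \varepsilon\,\dGamma(H)+C_\varepsilon$ for every $\varepsilon>0$. The right-hand side of your $\G$-constraint is of order $(\Tr\gamma_\Phi)^2$, so $\|\alpha_\Phi\|_{\rm HS}$ is only $O(\langle\N\rangle_\Phi)$, and you obtain at best $|\Re\Tr[K\alpha_\Phi]|\le \|K\|_{\rm HS}(1+\langle\N\rangle_\Phi)$ with a \emph{fixed} constant in front of $\N$; this fails when $\|K\|_{\rm HS}$ is large compared to $\eta$. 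The paper avoids this by first invoking the variational principle $(ii)$ to reduce to $(\gamma,\alpha)\in\G_0$, where the explicit diagonal form $\alpha_{nn}=\sqrt{\lambda_n(1+\lambda_n)}$ together with the pointwise bound $\langle u_n,Hu_n\rangle\ge |\langle \overline{u_n},Ku_n\rangle|+\eta$ (from $\A\ge\eta$) yields $q(\gamma,\alpha)\ge \eta\Tr\gamma-\sqrt{\Tr\gamma}\,\|K\|_{\rm HS}$. Boundedness from below then follows, and the two-sided form bounds are obtained by applying this to the shifted Hamiltonians with $H$ replaced by $H-\eta/2$ and by a large constant $C_0$. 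Your argument uses only $H\ge\eta$ and not the full coupling between $H$ and $K$ encoded in $\A\ge\eta$; that coupling is essential.

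\textbf{Part $(v)$, infinite case.} The sentence ``Weyl's theorem comparing $D$ with $H$ transfers infinitely many discrete eigenvalues of $H$ to infinitely many of $D$'' is not a valid argument: Weyl's theorem preserves essential spectrum, not the number of discrete eigenvalues below it. Nor do you establish an operator inequality $D\le H$ that would let min-max do the job. The paper instead works with the auxiliary family $\mathcal{X}_t:=H+K-t^2(H-K)^{-1}$, shows that eigenvalues of $D$ in $(0,\mu)$ with $\mu=\inf\sigma_{\rm ess}(H)$ correspond bijectively to negative eigenvalues of $\mathcal{X}_\mu$, and then uses the elementary bound $\mathcal{X}_\mu\le 2(H-\mu)$ (from $\mu^2(H-K)^{-1}+(H-K)\ge 2\mu$) to conclude. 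A comparison of this type is needed; an appeal to Weyl alone is insufficient.

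A minor point: in $(iii)$ the trace formula $E_0=\tfrac12\Tr(D-H)$ requires justification (trace-class of $D-H$), whereas the paper simply exhibits a one-mode trial pair $\gamma_\lambda=\lambda|v\rangle\langle v|$, $\alpha_\lambda=-\sqrt{\lambda(1+\lambda)}|v\rangle\langle Jv|$ with $\langle Jv,Kv\rangle>0$ and observes $q(\gamma_\lambda,\alpha_\lambda)<0$ for small $\lambda$.
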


\begin{remark}\label{rmk:bH-CdGamma(h)} \it Theorem \ref{thm:Bogoliubov-Hamiltonian} follows from Theorem \ref{thm:quadratic-Hamiltonian}, with $H=h+K_1$ and $K=K_2$. In particular, (\ref{eq:comparison-bH-dGamma(h+1)}) follows from (\ref{eq:comparison-bH-dGamma(h+1)-general}) because 
$ C^{-1}(h+1)\le h+K_1 \le C(h+1).$ Moreover, by following the proof of (\ref{eq:comparison-bH-dGamma(h+1)-general}) we also have $\bH \ge \dGamma(h-C)-C$; and if $K_1=K_2\ge 0$, then $\bH \ge d\Gamma(h-\eps)-C_\eps$ for every $\eps>0$.
\end{remark}

\begin{proof} 1. The variational principle is well-known (see, e.g., \cite[Theorem 1.7 p. 101]{Nam-thesis} for a proof). If $K=0$, then from  and the inequality $H\ge \eta>0$, we see that $\inf \sigma(\mathbb{H})=0$ and the vacuum is the unique ground state of $\mathbb{H}$ (which corresponds to $\gamma_\Phi=0$ and $\alpha_\Phi=0$). On the other hand, if $K\ne 0$, then by taking a normalized vector $v\in \cH$ such that $ \langle Jv , Kv \rangle >0$ and then choosing the trial operators 
$$\gamma_\lambda := \lambda |v\rangle \langle v|~~,\alpha_{\lambda}= -\sqrt{\lambda(1+\lambda)} |v \rangle \langle J v|$$
with $\lambda>0$ small, we can see that $\inf \sigma(\mathbb{H})<0$.

Now we show that $\mathbb{H}$ is bounded from below. If $(\gamma,\alpha)\in \G_0$, then we can find an orthonormal family $\{u_n\}_{n\ge 1}$ for $\cH$ such that
$$ \gamma(x,y)=\sum_{n\ge 1}\lambda_n u_n(x)\overline{u_n(y)},~~\alpha(x,y)=\sum_{n \ge 1} \sqrt{\lambda_n(1+\lambda_n)}u_n(x)u_n(y)$$
where $\lambda_n\ge 0$ and $\sum_{n=1}^\infty \lambda_n=\Tr \gamma <\infty$. Thus we can rewrite
$$ q(\gamma,\alpha)=\sum_{n=1}^\infty \left( \lambda_n\left\langle {u_n,H u_n} \right\rangle + \sqrt{\lambda_n(1+\lambda_n)}\Re \left\langle {\overline{u_n},K u_n} \right\rangle \right).$$ 
The inequality (\ref{eq:hK>=eta-general}) implies that $ \left\langle {{u_n}, H u_n} \right\rangle \ge |\left\langle {\overline{u_n},K u_n} \right\rangle| + \eta$ for all $n\ge 1$. Therefore,
\bq \label{eq:Hq-bounded-below}
 q(\gamma,\alpha) &\ge & \eta \Tr\gamma + \sum_{n=1}^\infty \left( \lambda_n-\sqrt{\lambda_n(1+\lambda_n)} \right)|\left\langle {\overline{u_n},K u_n} \right\rangle| \nn\hfill\\
 &\ge & \eta \Tr\gamma - \sum_{n=1}^\infty \sqrt{\lambda_n}|\left\langle {\overline{u_n},K u_n} \right\rangle| \nn\hfill\\
 &\ge &  \eta \Tr\gamma - \sqrt{\sum_{n=1}^\infty \lambda_n} \sqrt{\sum_{n=1}^\infty |\left\langle {\overline{u_n},K u_n} \right\rangle|^2} \nn\hfill\\
 & \ge & \eta \Tr\gamma - \sqrt{\Tr \gamma}  ||K ||_{\rm HS},
 \eq
where $||K||_{\rm HS}$ is the Hilbert-Schmidt norm of $K$. Thus $\bH \ge -C$.

2. Next, we show that $ C^{-1} \dGamma (H) -C \le \bH \le  \dGamma (H+C) +C $. In fact, from the above result, we see that the quadratic Hamiltonian with $\A$ replaced by  
\[{\left( {\begin{array}{*{20}{c}}
  {{H-\eta /2}}&K \\ 
  K^*&{{J H J^* -\eta /2}} 
\end{array}} \right)} \ge \eta /2
\]
is also bounded from below. Therefore, 
$$
\left\langle { \mathbb{H}} \right\rangle_\Phi = (\eta/2) \Tr[\gamma_\Phi]+ \Tr [(H-\eta/2)\gamma_{\Phi}] + \Re \Tr[K \alpha_{\Phi}] \ge (\eta/2) \Tr[\gamma_\Phi] -C.
$$
Similarly, for a constant $C_0>0$ large enough, one has 
$$ %\label{eq:H>=N+-C}
\Tr [C_0\gamma_{\Phi}] + \Re \Tr[K \alpha_{\Phi}] \ge -C~~{\rm and}~\Tr [ C_0 \gamma_{\Phi}] - \Re \Tr[K \alpha_{\Phi}] \ge -C.
$$
These estimates yield the desired inequalities.
 
3. Now we show that $\bH$ has a ground state.  Let a sequence $\{(\gamma_n,\alpha_n)\}_{n=1}^\infty\subset \G_0$ such that $$\lim_{n\to \infty}q(\gamma_n,\alpha_n)= \inf \sigma(\mathbb{H}).$$
The inequality (\ref{eq:Hq-bounded-below}) implies that $\Tr \gamma_n$ is bounded, and hence $\Tr (\alpha_n \alpha_n^*)$ is also bounded. Thus up to a subsequence, we may assume that there exists $(\gamma_0,\alpha_0)\in \G$ such that $\alpha_n \wto \alpha_0~~{\rm and}~\gamma_n \wto \gamma_0 $ weakly in the Hilbert-Schmidt norm. Consequently, $ \lim_{n\to \infty}\Tr[K \alpha_n ]= \Tr[K \alpha_0]$ and $\liminf_{n\to \infty}\Tr[H\gamma_n ] \ge \Tr[H \gamma_0]$ by Fatou's lemma since $H\ge 0$. 
Thus $(\gamma_0,\alpha_0)$ is a minimizer of $q(\gamma,\alpha)$ on $\G$. Due to (i), this minimizer $(\gamma_0,\alpha_0)$  belongs to $\G_0$. 

4. To understand the structure of one-body densities matrices and the spectrum of the quadratic Hamiltonian, let us introduce {\it Bogoliubov transformations}. A Bogoliubov transformation $\V$ is a {linear} bounded isomorphism on $\cH\oplus \cH^*$ such that $(\V\V^*-1)$ is trace class ({\it Stinespring condition}) and
\[ \V \left( {\begin{array}{*{20}{c}}
  0&{{J^*}} \\ 
  J&0 
\end{array}} \right) = \left( {\begin{array}{*{20}{c}}
  0&{{J^*}} \\ 
  J&0 
\end{array}} \right) \V, \quad \V \S \V^* = \S :=\left( {\begin{array}{*{20}{c}}
  1&0 \\ 
  0&{ - 1} 
\end{array}} \right).\]

Since $(\gamma_0,\alpha_0)\in \G_0$, there exists a Bogoliubov transformation $\V_0$ on $\cH \oplus \cH^*$ which diagonalizes the one-body density matrices $(\gamma_0,\alpha_0)$, namely
\[ \V_0 \left( {\begin{array}{*{20}{c}}
  \gamma_0 &{{\alpha_0}} \\ 
  \alpha_0^* & 1+ J\gamma_0 J^* 
\end{array}} \right){\V_0^*} = \left( {\begin{array}{*{20}{c}}
  0&{{0}} \\ 
  0&1 
\end{array}} \right)\]
(see, e.g., \cite{Solovej-notes,Nam-thesis}). Then by employing the fact that $(\gamma_0,\alpha_0)$ is a minimizer for $q(\gamma,\alpha)$ on $\G$, we can show (see \cite[Theorem 1.7 p. 101]{Nam-thesis} for a detailed proof) that the Bogoliubov transformation $\V:=\S \V_0 \S$ diagonalizes $\A$, namely 
\bq \label{eq:Bogoliubov-diagonalization} 
\V  \A {\V^*} = \left( {\begin{array}{*{20}{c}}
  \xi &{0} \\ 
  0& J\xi J^* 
\end{array}} \right)
\eq
for some operator $\xi$ on $\cH$. From (\ref{eq:hK>=eta-general}) and (\ref{eq:Bogoliubov-diagonalization}), we deduce that $\xi \ge \eta >0$. 

Using the Bogoliubov transformation $\V$, we can find a unitary transformation $\U_{\V}$ on the Fock space $\F(\cH)$, called the Bogoliubov unitary (see e.g. \cite{Solovej-notes,Nam-thesis}), such that
$$ \U_\V \mathbb{H} \U_{\V}^*-\inf \sigma(\mathbb{H}) = \dGamma(\xi) := \bigoplus_{N=0}^\infty \sum_{i=1}^N \xi_i.$$
This implies that $\mathbb{H}$ has a unique ground state and $\sigma(\mathbb{H})=\inf \sigma(\mathbb{H})+\dGamma (\xi)$.

5. Next, from (\ref{eq:Bogoliubov-diagonalization}) and the assumption $\V\S\V^*=S$, it follows that 
\[
\S \left( {\begin{array}{*{20}{c}}
  \xi &0 \\ 
  0&{J\xi {J^*}} 
\end{array}} \right) - \lambda  = \S \V \A {\V ^*} - \lambda  = \S \V \S(\S \A - \lambda ){\V ^*}.\]
Thus 
$$\sigma(\xi)=\sigma \left( \S \left( {\begin{array}{*{20}{c}}
  \xi &0 \\ 
  0&{J\xi {J^*}} 
\end{array}} \right) \right) \cap \R^+ =\sigma(\S\A)\cap \R^+$$
because $\V$ is an isomorphism on $\cH\oplus \cH^*$. Consequently,
$$
 \sigma(\mathbb{H})=\inf \sigma(\mathbb{H})+\left\{  \sum_{i\ge 1} {n_i \lambda_i}\,\, |\,\, \lambda_i \in \sigma(\S\A)\cap\R^+, n_i \in \{0\}\cup \mathbb N~{\rm and}~ \sum {n_i}<\infty \right\}.
$$

6. Now we prove that $\sigma_{\rm ess}(\xi)=\sigma_{\rm ess}(H)$. From (\ref{eq:Bogoliubov-diagonalization}), it follows that $\sigma(\xi)=\sigma(\V \A \V^*)$. On the other hand, since $K$ is a compact operator on $\cH$, one has $\sigma_{\rm ess}(\A)= \sigma_{\rm ess}(H).$ On the other hand, by using the identity
$$ \V\A\V^*-\lambda = \V(\A-\lambda)\V^*+\lambda (\V \V^* -1) $$
and the fact that $(\V\V^*-1)$ is compact (indeed it is trace class), we obtain $ \sigma_{\rm ess} (\V\A\V^*)=\sigma_{\rm ess}(\A).$ Thus $\sigma_{\rm ess}(\xi)=\sigma_{\rm ess}(H)$. Consequently, from $\sigma(\mathbb{H})=\inf \sigma(\mathbb{H})+\sigma( \dGamma (\xi))$ we obtain
\bqq 
 \sigma_{\ess}(\mathbb{H})=\inf \sigma(\mathbb{H})+\sigma_{\rm ess}( \dGamma (\xi))=\sigma(\mathbb{H}) + \sigma_{\rm ess}(H).
\eqq 

7. From now on, we assume that $JKJ=K$ and $JHJ^*=H$. In this case, (\ref{eq:hK>=eta-general}) implies that $H+K\ge \eta>0$ and $H-K\ge \eta>0$. Before going further, let us give a simple characterization of the spectrum of $\bH$. We only deal with eigenvalues for simplicity.

Assume that $t > 0$ is an eigenvalue of $\xi$. Then from the above proof, we see that $t$ is an eigenvalue of $\S \A$. Using the equation 
\[0 = (\S\A - t )\left( \begin{gathered}
  u \hfill \\
  v \hfill \\ 
\end{gathered}  \right) = \left( {\begin{array}{*{20}{c}}
  H&K \\ 
  { - K}&{ - JH{J^*}} 
\end{array}} \right)\left( \begin{gathered}
  u \hfill \\
  v \hfill \\ 
\end{gathered}  \right) - \lambda \left( \begin{gathered}
  u \hfill \\
  v \hfill \\ 
\end{gathered}  \right)\]
we find that
\[\left\{ \begin{gathered}
  (H + K)x = t y, \hfill \\
  (H - K)y = t x ,\hfill \\ 
\end{gathered}  \right.
\]
where $ x=u+v$ and $y=u-v.$ Thus
\[
\left( {H + K - \frac{{{t^2}}}{{H - K}}} \right) x =0.
\]
Note that $x\ne 0$ if $u\oplus v \ne 0$. Therefore, $0$ is an eigenvalue of 
$$ \mathcal{X}_t := H+K-\frac{t^2}{H-K}.$$
In fact, using Weyl sequences and arguing similarly, we also obtain 
\bq \label{eq:equivalence-spectrum-xi-SA-Xt}
t \in \sigma(\xi) = \sigma(\S \A) \Leftrightarrow 0\in \sigma (\mathcal{X}_t).
\eq

8. We show that the number of eigenvalues (with multiplicity) of $\xi$  below $\mu:=\inf \sigma_{\rm ess} (\xi)$ is equal to the number of negative eigenvalues of $\mathcal{X}_\mu$.

Note that the mapping $t\mapsto \mathcal{X}_t$ is strictly decreasing. As a consequence, for every $j=1,2,...$, the min-max value $\lambda_j(\mathcal{X}_t)$ is a continuous and decreasing function on $t\ge 0$. More precisely, if $0\le t_1<t_2$, then $\lambda_j(\mathcal{X}_{t_1}) \ge \lambda_j(\mathcal{X}_{t_2})$ and the inequality is strict if $\lambda_j(\mathcal{X}_{t_1})$ is an eigenvalue of $\mathcal{X}_{t_1}$.  Moreover, 
$$\inf\sigma(\mathcal{X}_0)= \inf\sigma(H+K)\ge \eta>0$$
and, for every $t\in (0,\mu)$, 
$$ \inf \sigma_{\rm ess} (\mathcal{X}_t) > \inf \sigma_{\rm ess}(\mathcal{X}_\mu) = 0.$$
Therefore, by using (\ref{eq:equivalence-spectrum-xi-SA-Xt}), we obtain a one-to-one correspondence
\bq \label{eq:spectrum-xi-Xt-one-to-one}
\sigma(\xi)\cap (-\infty,\mu) \leftrightarrow \sigma(\mathcal{X}_\mu) \cap (-\infty,0).
\eq

9. From the inequality $\mu^2(H - K)^{-1} + H - K \geqslant 2\mu$ we get
\[ \mathcal{X}_\mu = H + K - \frac{{{\mu^2}}}{{H - K}} \leqslant 2 \left( H - \mu \right). \]
Moreover, note that $\inf \sigma_{\rm ess}(\mathcal{X}_\mu)=\inf \sigma_{\rm ess} (H - \mu)=0.$
Therefore, the number of negative eigenvalues (with multiplicity) of $\mathcal{X}_\mu$  is not less than the number of negative eigenvalues (with multiplicity) of $(H-\mu)$ .

In particular, if $(H-\mu)$ has infinitely many negative eigenvalues, then $\mathcal{X}_\mu$ has infinitely many negative eigenvalues. By (\ref{eq:spectrum-xi-Xt-one-to-one}), we see that $\xi$ has infinitely many eigenvalues below its essential spectrum. Consequently, $\mathbb{H}$ has infinitely many eigenvalues below its essential spectrum.   

10. Now we assume furthermore that $K\ge 0$. Then we get the inequality
\[ \mathcal{X}_\mu = H + K - \frac{{{\mu^2}}}{{H - K}} \ge H-K -\frac{\mu^2}{H-K}. \]
Moreover, note that 
$$\inf \sigma_{\rm ess}(\mathcal{X}_\mu)=\inf \sigma_{\rm ess} \left( H-K -\frac{\mu^2}{H-K} \right) =0.$$
Thus, if $H-K$ has finitely many eigenvalues below $\mu$, then $\mathcal{X}_t$ also has finitely many negative eigenvalues. By (\ref{eq:spectrum-xi-Xt-one-to-one}), $\xi$ has finitely many eigenvalues below its essential spectrum. Consequently, $\mathbb{H}$ has the same property.
\end{proof}

\section{Localization of band operators on $\F_+$}\label{apd:localization}

In this appendix we prove the localization in Proposition \ref{pro:localization}.

\begin{proof}[Proof of Proposition \ref{pro:localization}] Using the IMS-identity
$$\A= f_M\A f_M + g_M \A g_M + \frac{1}{2} \left( [f_M,[f_M,\A]]+ [g_M,[g_M,\A]] \right)$$
we can write, with $\Phi=\oplus_{j=0}^\infty \Phi_j\in \oplus_{j=0}^\infty \gH_+^j =\F_+$,
\[\begin{gathered}
   \left\langle {{\Phi},\A{\Phi}} \right\rangle   = \left\langle {f_M\Phi,\A f_M\Phi } \right\rangle + \left\langle {g_M\Phi,\A g_M\Phi } \right\rangle\hfill \\
\qquad\qquad  + \sum\limits_{1\le |i-j | \le \sigma} {\left[ {{{\left( {f_M\left( {i} \right) - f_M\left( {j} \right)} \right)}^2} + {{\left( {g_M\left( {i} \right) - g_M\left( {j} \right)} \right)}^2}} \right] }  \pscal{\Phi_i,\A\Phi_j}. \hfill \\ 
\end{gathered} \]
Since $f$ and $g$ are smooth, we have
$$ {{{\left( {f_M\left( {i} \right) - f_M\left( {j} \right)} \right)}^2} + {{\left( {g_M\left( {i} \right) - g_M\left( {j} \right)} \right)}^2}} \le \left( \|f'\|^2_{\infty}+\|g'\|^2_{\infty}\right)\frac{(i-j)^2}{M^2}. $$
Moreover, using the assumption that $\A \ge 0$ we get 
\bqq
\sum\limits_{1 \leqslant |i-j| \leqslant \sigma} |\pscal{\Phi_i,\A\Phi_j}|  &\leqslant & \frac{1}{2}\sum\limits_{1 \leqslant |i-j| \leqslant \sigma}   \left(\pscal{\Phi_i,\A\Phi_i} + \pscal{\Phi_j,\A\Phi_j}\right) \hfill\\
&\le & 2\sigma \sum\limits_i \pscal{\Phi_i,\A\Phi_i} = 2\sigma   \left\langle \Phi ,\A_0 \Phi \right\rangle .
\eqq
Therefore, we obtain the operator inequality (\ref{eq:IMS-band-operator}).

Finally, let us show that if 
$$\delta:=\sup \{||g_M\Phi ||^2: \Phi\in Y, ||\Phi||=1\}< (\dim Y)^{-1},$$
then $\dim(f_M Y)=\dim Y$. In fact, assume that $\dim Y=L$ and let  $\{\Phi_i\}_{i=1}^L$ be an  orthonormal basis for $Y$. For all $\{\alpha_i\}_{i=1}^L \in \mathbb{C}^L \minus \{0\}$, we have
\[
\begin{gathered}
  \left\| \sum\limits_{i = 1}^L \alpha _i f_M \Phi _i \right\|^2
\ge \sum\limits_{i = 1}^L |\alpha _i|^2 \left\| f_M\Phi _i \right\|^2  - 2\sum\limits_{1 \leqslant i < j \leqslant L} \left| \alpha _i\alpha _j \right|  \cdot  \left| \left\langle f_M\Phi _i,f_M\Phi _j\right\rangle  \right|  \hfill \\
= \sum\limits_{i = 1}^L |\alpha _i|^2 \Big( 1- \left\| g_M\Phi _i \right\|^2 \Big)  - 2\sum\limits_{1 \leqslant i < j \leqslant L} {\left| {{\alpha _i\alpha _j}} \right|  \cdot \left| {\left\langle {g_M{\Phi _i},g_M{\Phi _j}} \right\rangle } \right|}  \hfill \\
   \ge \sum\limits_{i = 1}^L {|{\alpha _i}{|^2}\left( {1 - {\delta}} \right)}  - \sum\limits_{1 \leqslant i < j \leqslant L} {\big( {{{\left| {{\alpha _i}} \right|}^2} + {{\left| {{\alpha _j}} \right|}^2}} \big){\delta}}  = (1 - L{\delta})\sum\limits_{i = 1}^L {|{\alpha _i}{|^2}}  > 0.
\end{gathered} \]
Therefore, the subset $\{f_M \Phi_j \}_{j=1}^L\subset f_M Y$ is linearly independent, which implies that $\dim (f_M Y) \ge L=\dim Y$. On the other hand, $\dim(f_M Y)\le \dim Y$ because $f_M$ is linear. Thus $\dim(f_M Y)=\dim Y$.
\end{proof}

\section{Logarithmic Lieb-Oxford inequality}\label{apd:log-Lieb-Oxford-inequality}

We provide a proof of Proposition \ref{le:2d_LO}. We follow ideas from~\cite{Lieb-79,LieOxf-80,LieSei-09}.

\begin{proof}[Proof of Proposition \ref{le:2d_LO} ] Let us write for simplicity $\rho:=\rho_\Psi$. Note that for any $\mu_{x_i}\geq0$ which is radially symmetric about $x_i$ and such that $\int\mu_{x_i}=1$, we have
\begin{align*}
0  &\le   \frac12 D\left(\rho-\sum_{i=1}^N\mu_{x_i},\rho-\sum_{i=1}^N\mu_{x_i}\right)\\
&=  \frac12 D(\rho,\rho)+\sum_{1\leq i<j\leq N}D(\mu_{x_i},\mu_{x_j})+\frac12\sum_{i=1}^ND(\mu_{x_i},\mu_{x_i})-\sum_{i=1}^ND(\rho,\mu_{x_i}).\\
\end{align*}
We have used here Proposition~\ref{pro:log-kernel} together with the fact that $\int_{\R^2}(\rho-\sum_{i=1}^N\mu_{x_i})=0$.
By Newton's theorem, $(\mu_{x_i}\ast w)(x)\leq w(x-x_i)$ a.e., and therefore
$$D(\mu_{x_i},\mu_{x_j})\leq w(x_i-x_j)$$
for any $i\neq j$. We arrive at the following Onsager-type estimate~\cite{Onsager-39}
\begin{equation}
\sum_{1\leq i< j\leq N}w(x_i-x_j)\geq -\frac12 D(\rho,\rho)+\sum_{i=1}^ND(\rho,\mu_{x_i})-\frac12\sum_{i=1}^ND(\mu_{x_i},\mu_{x_i}).
\label{eq:Onsager}
\end{equation}
Writing $D(\rho,\mu_{x_i})=(\rho\ast w)(x_i)+D(\rho,\mu_{x_i}-\delta_{x_i})$ and taking the expectation value against $\Psi$, we get
\begin{multline*}
\pscal{\sum_{1\leq i<j\leq N}w(x_i-x_j)}_\Psi\geq \frac12 D(\rho,\rho)
+\int_{\R^2}\rho(x)\,D(\rho,\mu_{x}-\delta_x)\,\d x\\-\frac12\int_{\R^2}\rho(x)\,D(\mu_{x},\mu_{x})\,\d x. 
\end{multline*}
Now we choose 
$$\mu_x(y)=R(x)^{-2}\,\mu\left(\frac{y-x}{R(x)}\right)\quad\text{and}\quad R(x)=\sqrt{\frac{\lambda(x)}{\rho(x)}},$$
where $\mu= \pi^{-1} \chi(|x|\le 1)$ is the (normalized) characteristic function of the unit ball. We will choose the function $\lambda(x)$ at the very end.

We start by computing
%\bqq
$$D(\mu_{x},\mu_{x})=-\int_{\R^2}\!\int_{\R^2}\mu(y)\mu(z)\log\big(R(x)|y-z|\big)\d y\,dz=\frac12\log\frac{\rho(x)}{\lambda(x)}+D(\mu,\mu)$$
%\eqq
which gives
$$-\frac {1}{2}\int_{\R^2}\rho(x)\,D(\mu_{x},\mu_{x})\,\d x=-\frac14\int_{\R^2}\rho\log\rho-\frac12 D(\mu,\mu)\int_{\R^2}\rho+\frac14\int_{\R^2}\rho\log\lambda.$$
Hence we have proved that 
\begin{multline*}
\pscal{\sum_{1\leq i<j\leq N}w(x_i-x_j)}_\Psi\geq \frac12 D(\rho,\rho)-\frac14\int_{\R^2}\rho\log\rho-\frac12 D(\mu,\mu)\int_{\R^2}\rho\\+\frac14\int_{\R^2}\rho\log\lambda
+\int_{\R^2}\rho(x)\,D(\rho,\mu_{x}-\delta_x)\,\d x 
\end{multline*}
and it remains to bound the last term. 

Let $\nu_R=(\pi R)^{-2}\chi(|x|\leq R)$ be the (normalized) characteristic function of the disk of radius $R$. We have 
$$\big(\nu_R\ast w\big)(x)=w(x)-\frac{1}{2}\left(-\log\frac{|x|^2}{R^2}+\frac{|x|^2}{R^2}-1\right)\chi(|x|\leq R)$$
and thus
$$\big(\nu_R\ast w\big)(x)-w(x)\geq \log\left(\frac{|x|}{R}\right) \chi (|x|\leq R).$$
By scaling we find that
$$D(\rho,\mu_{x}-\delta_x)=D(\rho(\cdot +x),\nu_{R(x)}-\delta_0)\geq \int_{|y|\leq R(x)}\rho(y+x)\log\left(\frac{|y|}{R(x)}\right)\,\d y$$
and therefore
$$\int_{\R^2}\rho(x)\,D(\rho,\mu_{x}-\delta_x)\,\d x\geq\int_{\R^2}\d x\,\rho(x)\int_{|y|\leq R(x)}\d y\,\rho(y+x)\,\log\left(\frac{|y|}{R(x)}\right).$$

The following is similar to~\cite[Lemma 2] {Lieb-79}.
\begin{lemma}\label{lem:estim_M}
Let $f\in L^1(\R^2,\R^+)$ and let
\begin{equation}
M_f(0):=\sup_r\frac{1}{\pi r^2}\int_{|y|\leq r}f(y)\,\d y.
\end{equation}
Then we have for any $R>0$
\begin{equation}
-\int_{|y|\leq R}\d y\,f(y)\,\log\left(\frac{|y|}{R}\right)\leq \frac{\pi}{2} M_f(0)\, R^2.
\label{eq:estim_M} 
\end{equation}
\end{lemma}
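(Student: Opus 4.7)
The plan is to rewrite the integral in polar form and perform an integration by parts against the radial cumulative mass function, after which the estimate becomes a direct application of the definition of $M_f(0)$.

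First, I would introduce the radial cumulative mass $g(r) := \int_{|y|\le r} f(y)\,\d y$; by the very definition of $M_f(0)$ this satisfies the pointwise bound $g(r) \le \pi r^2 M_f(0)$ for every $r>0$. Viewing the left-hand side of \eqref{eq:estim_M} as a Stieltjes integral in the radial variable (note that $-\log(|y|/R)\ge 0$ on $|y|\le R$), I would rewrite it as
\[
-\int_{|y|\le R} f(y)\,\log\!\left(\frac{|y|}{R}\right)\d y \;=\; \int_0^R \log\!\left(\frac{R}{r}\right)\d g(r),
\]
and integrate by parts. The boundary term at $r=R$ vanishes because $\log(R/R)=0$; the boundary term at $r=0^+$ vanishes because $g(r)\log(R/r)\le \pi r^2 M_f(0)\log(R/r)\to 0$. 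This leaves
\[
\int_0^R \frac{g(r)}{r}\,\d r,
\]
and substituting $g(r)\le \pi r^2 M_f(0)$ bounds the integrand by $\pi r\, M_f(0)$, whose integral from $0$ to $R$ is exactly $\pi R^2 M_f(0)/2$, as required.

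I do not foresee any genuine obstacle here: the argument is a one-line integration by parts once the correct radial variable has been chosen. The only mild care needed is to justify the integration by parts for a general nonnegative $f\in L^1$, which is routine since $g$ is nondecreasing and absolutely continuous on $(0,R]$ and its finiteness at $R$ follows from $g(R)\le \|f\|_{L^1}$.
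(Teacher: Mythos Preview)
Your proof is correct and is essentially the same as the paper's: both pass to the radial variable, integrate by parts against the cumulative mass $g(r)=\int_{|y|\le r}f$, check the boundary terms vanish, and then insert the bound $g(r)\le \pi r^2 M_f(0)$ into $\int_0^R g(r)/r\,\d r$. The only cosmetic difference is that the paper writes the integrand via the spherical average $\tilde f(r)$ so that $g(r)=\int_0^r s\tilde f(s)\,\d s$, whereas you phrase the same computation as a Stieltjes integral.
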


\begin{proof}[Proof of Lemma~\ref{lem:estim_M}]
Let $\tilde f(r):=\int_{0}^{2\pi}f\big(r\cos(\theta),r\sin(\theta)\big)\,d\theta$ be the spherical average of $f$. Then we have
\begin{align*}
-\int_{|y|\leq R}\d y\,f(y)\,\log\left(\frac{|y|}{R}\right)&=-\int_{0}^Rr\tilde{f}(r)\,\log\left(\frac{r}{R}\right)\,\d r\\
&=\int_{0}^R\left(\int_0^r s\tilde{f}(s)\,ds\right)\,\frac{1}{r}\,\d r\\
&\leq \pi M_f \int_{0}^Rr\,\d r=\frac\pi2 M_f(0)\, R^2.
\end{align*}
In the previous estimate we have first integrated by parts, and then used 
$\int_0^r s\tilde{f}(s)\,ds=\int_{B_r}f\leq \pi r^2 M_f(0).$
\end{proof}

Using the previous estimate~\eqref{eq:estim_M} with $x\in \R^2$ fixed and recalling our choice $R(x)=\sqrt{\lambda(x)/\rho(x)}$, we get
$$\int_{|y|\leq R(x)}\d y\,\rho(y+x)\,\log\left(\frac{|y|}{R(x)}\right)\geq -\frac\pi2\,M_\rho(x)\,R(x)^2=-\frac{\pi\lambda(x)}{2\rho(x)}\,M_\rho(x)$$
where
$$M_\rho(x):=\sup_r\frac{1}{\pi r^2}\int_{|y|\leq r}\rho(y+x)\,\d y$$
is the Hardy-Littlewood maximal function of $\rho$. This gives the estimate
$$\int_{\R^2}\d x\,\rho(x)\int_{|y|\leq R(x)}\d y\,\rho(y+x)\,\log\left(\frac{|y|}{R(x)}\right)\geq -\frac\pi{2}\int_{\R^2}\lambda\, M_\rho$$
and our final bound is
\begin{multline*}
\pscal{\sum_{1\leq i<j\leq N}w(x_i-x_j)}_\Psi
\geq \frac12 D(\rho,\rho)-\frac14\int_{\R^2}\rho\log\rho-\frac12 D(\mu,\mu)\int_{\R^2}\rho\\+\frac14\int_{\R^2}\rho\log\lambda-\frac\pi2 \int_{\R^2}\lambda\,M_\rho.
\end{multline*}
Choosing $\lambda=\rho/M_\rho$, we have
\begin{align*}
\pscal{\sum_{1\leq i<j\leq N}w(x_i-x_j)}_\Psi
&\geq \frac12 D(\rho,\rho)-\frac14\int_{\R^2}\rho\log M_\rho-\Big( \frac12 D(\mu,\mu) + \frac{\pi}{2}\Big)\int_{\R^2}\rho.
\end{align*}
Now we estimate the error term
\begin{align*}
-\frac14\int_{\R^2}\rho\log M_\rho = -\frac14 \Big(\int_{\R^2}\rho \Big)\log \Big(\int_{\R^2}\rho \Big) - \frac14\Big(\int_{\R^2}\rho \Big) \int_{\R^2}f\log M_f
\end{align*}
with $f=\rho/ (\int_{\R^2} \rho)$. Using  $\epsilon f \log M_f\leq f(1+M_f^\epsilon) \le f + M_f^{1+\eps}$ we get 
$$ - \frac14\Big(\int_{\R^2}\rho \Big)\int_{\R^2}f\log M_f \ge - \frac{1}{4\eps} \int_{\R^2}\rho - \frac{1}{4}\Big(\int_{\R^2}\rho \Big) \int_{\R^2} M_f^{1+\eps}.$$
% $$\left(\int_{\R^2}\rho\right)\left(-\frac14\int_{\R^2}f+f^{1+\epsilon}-\frac\pi2\int_{\R^2}M_f^{1+\epsilon}\right).$$
Using now~\cite{Stein-70}
$$\frac{1}{4}\int_{\R^2}M_f^{1+\epsilon}\leq \frac{C}\epsilon\int_{\R^2}f^{1+\epsilon}$$
we end up with the estimate
\begin{align*}
\pscal{\sum_{1\leq i<j\leq N}w(x_i-x_j)}_\Psi
&\geq \frac12 D(\rho,\rho)-\frac14 \Big(\int_{\R^2}\rho \Big)\log \Big(\int_{\R^2}\rho \Big) \\
&\quad - \Big( \frac12 D(\mu,\mu) + \frac{\pi}{2} +\frac{1}{4\eps}\Big)\int_{\R^2}\rho \\
&\quad- \frac{C}{\eps} \Big(\int_{\R^2}\rho \Big) \int_{\R^2} \left(\frac{\rho}{\int_{\R^2}\rho} \right)^{1+\epsilon} .
\end{align*}
% $$-\left(\frac12 D(\mu,\mu)+\frac14\right)\int_{\R^2}\rho-\frac{C}{\epsilon}\left(\int_{\R^2}\rho\right)\int_{\R^2}\left(\frac{\rho}{\int_{\R^2}\rho}\right)^{1+\epsilon}.$$
\end{proof}

%REFERENCES 

%%%%%%%%%%%%%%%%%%%%%%%%%%%%%%%
%%%%%%%%%%%%%%%%%%%%%%%%%%%%%%%
% \bibliographystyle{siam}
% \bibliography{biblio}

\end{document}